\documentclass[11pt,letterpaper]{article}
\usepackage[utf8]{inputenc}
\usepackage[T1]{fontenc}
\usepackage{microtype}
\usepackage{lmodern}

\usepackage[margin=1in]{geometry}

\usepackage{booktabs}
\usepackage{amsmath}
\usepackage{amsfonts}
\usepackage{amssymb}
\usepackage{amsthm}
\usepackage{amstext}

\usepackage{xcolor}
\usepackage{graphicx}

\usepackage{mathtools}
\usepackage{xspace}
\usepackage{enumitem}
\usepackage[ruled,vlined]{algorithm2e}

\usepackage{bm}

\usepackage{url}
\usepackage[colorlinks,breaklinks]{hyperref} %
\hypersetup{citecolor=green!60!black,linkcolor=blue!80!black,urlcolor=blue!80!black}
\hypersetup{pdftitle={Optimal Online Weighted Connectivity Augmentation},pdfauthor={Mohit Garg, Felix Hommelsheim, Alexander Lindermayr}}

\usepackage[skip=5pt plus 3pt minus 1pt,parfill=0pt]{parskip}
\makeatletter
\def\thm@space@setup{%
  \thm@preskip=\parskip \thm@postskip=0pt
}
\renewcommand\paragraph{\@startsection{paragraph}{4}{\z@}{\parskip}{-1em}{\normalfont\normalsize\bfseries}}
\makeatother

\allowdisplaybreaks

\usepackage{aliascnt}
\usepackage[capitalize,nameinlink,noabbrev]{cleveref}
\Crefname{algocf}{Algorithm}{Algorithms}

\usepackage[
 backend=biber,
 style=alphabetic,
 citetracker,
 hyperref=auto,
 maxcitenames=5,
 sortcites,
 sorting=anyt,
 maxbibnames=12,
 date=year,
 isbn=false,
 url=false,
 doi=false,
 eprint=false,
]{biblatex}

\newbibmacro{string+doiurlisbn}[1]{%
  \iffieldundef{doi}{%
    \iffieldundef{url}{%
      #1%
    }{%
      \href{\thefield{url}}{#1}%
    }%
  }{%
    \href{https://doi.org/\thefield{doi}}{#1}%
  }%
}

\DeclareFieldFormat
[article,inbook,incollection,inproceedings,patent,thesis,unpublished]
  {title}{\usebibmacro{string+doiurlisbn}{#1}}

\definecolor{col1}{HTML}{56b3e9}
\definecolor{col2}{HTML}{e69f00}
\definecolor{col3}{HTML}{009e74}
\definecolor{col4}{HTML}{cc79a7}
\definecolor{col5}{HTML}{d55e00}
\definecolor{col6}{HTML}{0071b2}

\usepackage{tikz}
\usetikzlibrary{arrows.meta}
\usetikzlibrary{calc,backgrounds}

\colorlet{paperBought}{col6}
\colorlet{paperInterval}{col2}

\colorlet{paperSecondInterval}{col4}
\colorlet{paperNeutral}{black!55}

\tikzset{
    vertex/.style={circle, draw, fill=black, inner sep=2.5pt},
    comp/.style={circle, draw, line width=1.3pt, inner sep=3.5pt},
    edge/.style={line width=1.1pt},
    paper/diagram/.style={font=\small, line cap=round, line join=round},
    paper/vertex/.style={
        circle, draw=black, fill=black, inner sep=0pt,
        minimum size=3.5pt, outer sep=0pt
    },
    paper/terminal/.style={
        paper/vertex, fill=white, line width=0.8pt, minimum size=5.5pt
    },
    paper/edge/.style={draw=black!55, line width=0.7pt},
    paper/link/.style={
        draw=black!50, line width=0.7pt, dash pattern=on 3pt off 2pt
    },
    paper/bought link/.style={
        paper/link, draw=paperBought, solid, line width=1.2pt
    },
    paper/interval/.style={
        draw=paperInterval, fill=paperInterval!12,
        line width=0.7pt, dash pattern=on 2pt off 1.5pt
    },
    paper/second interval/.style={
        paper/interval, draw=paperSecondInterval, fill=paperSecondInterval!12,
        dash pattern=on 0.3pt off 1.4pt, line cap=round
    },
    paper/past interval/.style={fill=none},
    paper/interval bar/.style={line width=1.1pt},
    paper/maximal interval bar/.style={paper/interval bar, line width=1.7pt},
    paper/pivot guide/.style={line width=0.5pt, densely dotted},
    paper/boundary guide/.style={
        draw=black!30, line width=0.4pt, dash pattern=on 2pt off 2pt
    },
    paper/optimal link/.style={draw=black!80, line width=1pt},
    paper/internal optimal link/.style={paper/optimal link, draw=col4},
    paper/rooted optimal link/.style={paper/optimal link},
    paper/secondary optimal link/.style={dash pattern=on 3pt off 2pt},
    paper/tertiary optimal link/.style={dash pattern=on 0pt off 3pt, line cap=round},
    paper/label/.style={inner sep=1pt, outer sep=0pt},
    paper/edge label/.style={paper/label, font=\footnotesize},
    paper/annotation/.style={paper/label, font=\footnotesize},
    paper/transition/.style={
        draw=black!65, line width=0.7pt, -{Stealth[length=4pt, width=3.5pt]}
    }
}

\newtheorem{theorem}{Theorem}
\newaliascnt{corollary}{theorem}

\aliascntresetthe{corollary}
\newaliascnt{lemma}{theorem}
\newtheorem{lemma}[lemma]{Lemma}
\aliascntresetthe{lemma}
\newaliascnt{proposition}{theorem}
\newtheorem{proposition}[proposition]{Proposition}
\aliascntresetthe{proposition}
\theoremstyle{definition}
\newaliascnt{definition}{theorem}
\newtheorem{definition}[definition]{Definition}
\aliascntresetthe{definition}

\newcommand{\ex}{\mathbb E}

\newcommand{\cC}{\mathcal{C}}
\newcommand{\cA}{\mathcal{A}}
\newcommand{\cI}{\mathcal{I}}
\newcommand{\cT}{\ensuremath{\mathcal{T}}\xspace}

\newcommand{\cost}{\mathsf{cost}\xspace}
\newcommand{\eps}{\varepsilon}

\DeclareMathOperator*{\argmin}{arg\,min}

\usepackage[colorinlistoftodos,prependcaption,textsize=scriptsize]{todonotes}

\title{Optimal Online Weighted Connectivity Augmentation}
\date{}

\author{Mohit Garg\thanks{The Institute of Mathematical Sciences, Chennai, India; Homi Bhabha National Institute, Mumbai, India (\texttt{garg@imsc.res.in}).}\and Felix Hommelsheim\thanks{Institute for Computer Science, University of Cologne, Germany (\texttt{hommelsheim@cs.uni-koeln.de}). Funded by the Deutsche Forschungsgemeinschaft (DFG, German Research Foundation) -- Project Number 573939419.}\and Alexander Lindermayr\thanks{Institut für Mathematik, Technische Universität Berlin, Germany (\texttt{alexander.lindermayr@tu-berlin.de}). Supported by the Deutsche Forschungsgemeinschaft (DFG, German Research Foundation) under Germany’s Excellence Strategy -- The Berlin Mathematics Research Center MATH$^+$ (EXC-2046/1, EXC-2046/2, project ID: 390685689).}}

\begin{document}

\pagenumbering{roman}
\maketitle

\begin{abstract}
In the online Weighted Connectivity Augmentation Problem (WCAP), we are given a $k$-edge-connected graph on $n$ vertices and a set of additional edges, called links, with nonnegative costs. Pairs of vertices arrive one by one, and an online algorithm must immediately and irrevocably buy links so that each pair becomes $(k+1)$-edge-connected. The goal is to minimize the total cost of the purchased links.

We obtain a deterministic polynomial-time $O(\log n)$-competitive algorithm for WCAP, improving the previous $O(\log^2 n)$ bound (SODA 2026). Our bound is asymptotically optimal, even for randomized algorithms in the random-order model. The general problem reduces to the case where the underlying graph is a cycle, for which link costs in $\{0,1\}$ already yield an $\Omega(\log n)$ lower bound. Surprisingly, for unit costs on a cycle, we obtain a deterministic polynomial-time $4$-competitive algorithm, improving the previous $O(\log n)$ bound (SODA 2026), and prove that this ratio is optimal among deterministic algorithms. Thus, with unit costs, cycles are easier than trees, where the optimal competitive ratio is $\Theta(\log n)$.

Unlike on a tree, the minimum cuts of a cycle avoiding a fixed vertex need not form a laminar family. Our algorithm for general costs constructs a laminar family of minimum cuts online, decomposes it on the fly into chains, and covers each chain using a primal-dual online algorithm. Combining these algorithms directly would yield only an $O(\log^2 n)$ competitive ratio. 
Our main technical ingredient transforms any deterministic competitive algorithm for covering a chain into one with a stronger, refined cost guarantee.
Together with a structural property of the chain decomposition, this transformation allows us to achieve an $O(\log n)$ competitive ratio.

For unit costs, our algorithm greedily grows sets of purchased links connected through crossings or shared endpoints. The geometry of the cycle allows us to charge each expansion of such a set toward a requested vertex to a distinct link of an optimal solution.
\end{abstract}

\thispagestyle{empty}

\newpage
\pagenumbering{arabic}

\section{Introduction}

Survivable network design seeks to construct low-cost networks that maintain a desired level of connectivity despite the failure of a few edges or vertices.
A fundamental problem in this area is the \emph{Weighted Connectivity Augmentation Problem} (WCAP):
Given a $k$-edge-connected graph $G=(V,E)$ on $n$ vertices and a set $L$ of additional edges on~$V$, called \emph{links}, with nonnegative costs, the goal is to find a minimum-cost subset of links whose addition to~$G$ makes it $(k+1)$-edge-connected.
WCAP and its unit-cost version, the \emph{Connectivity Augmentation Problem} (CAP), have been extensively studied in approximation algorithms (see \Cref{sec:related}).
Connectivity augmentation is also a key ingredient in approximation algorithms for more general survivable network design problems~\cite{WilliamsonGMV95}.

In online WCAP, the $k$-edge-connected graph $G$ and the set of links $L$ with nonnegative costs are known upfront, but requests for additional connectivity between pairs of vertices arrive one by one.
Upon receiving a request $\{s,t\}$, an algorithm must immediately and irrevocably buy zero or more links so that $s$ and~$t$ become $(k+1)$-edge-connected.
Equivalently, every $s$--$t$ cut of~$G$ of size $k$ must be \emph{covered} by a bought link, that is, a link with exactly one endpoint on each side of the cut. The competitive ratio of an algorithm is the worst-case ratio between the (expected) cost of the links purchased by the algorithm and the minimum cost of a set of links that serves all requests.

An online algorithm of Gupta, Krishnaswamy, and Ravi~\cite{GuptaKR12} for a more general survivable network design problem yields a randomized $\widetilde{O}(k\log^3 n)$-competitive algorithm for online WCAP.
For $k=1$, the graph~$G$ can be assumed to be a tree, since contracting its $2$-edge-connected components preserves its minimum cuts.
This special case is known as the \emph{Weighted Tree Augmentation Problem} (WTAP), or simply TAP when all link costs are one.
For online WTAP, Naor, Umboh, and Williamson~\cite{NaorUW22} gave a deterministic $O(\log n)$-competitive algorithm, matching an $\Omega(\log n)$ lower bound for randomized algorithms~\cite{GuptaKR12}, which holds even in the random-order model~\cite{GargS26}.
This remains the best known lower bound for arbitrary~$k$, even for CAP.
Recently, Garg and Subramanian~\cite{GargS26} gave an optimal deterministic $O(\log n)$-competitive algorithm for online CAP, but their algorithm for online WCAP achieves only an $O(\log^2 n)$ competitive ratio and relies on a deterministic algorithm for online node-weighted Steiner forest~\cite{BorstEV25}.
Since online node-weighted Steiner forest generalizes online set cover, it admits no deterministic $o(\log^2 n/\log\log n)$-competitive algorithm~\cite{AlonAABN09}. Hence, a black-box application of an algorithm for general online node-weighted Steiner forest cannot close this logarithmic gap.
Prior to this work, it remained open whether online WCAP for $k\ge 2$ admits an $O(\log n)$-competitive algorithm, as in the case of online WTAP and online CAP.

Our approach is based on reducing online WCAP to the special case of cycles.
The minimum cuts of a $k$-edge-connected graph can be represented by a \emph{cactus}, a connected graph in which every edge belongs to exactly one cycle~\cite{DinitzKL76}.
In the offline setting, one can repeatedly unfold the cactus at vertices shared by two cycles, merging them into a single cycle in which the shared vertex appears twice, with a zero-cost link joining the two copies.
This reduces WCAP to the special case where $k=2$ and $G$ is a cycle, known as the \emph{Weighted Cycle Augmentation Problem} (WCycAP)~\cite{GalvezGAS21,TraubZ23}.
Garg and Subramanian~\cite{GargS26} showed that the reduction to a cactus extends to the online setting, and we show that the unfolding does as well.
Thus, online WCycAP captures the full difficulty of online WCAP.
See \Cref{fig:reduction} for an illustration.

Cycles already exhibit the key difficulty beyond trees.
On a tree, fix a root and identify each minimum cut with the set of vertices on the side not containing the root.
These sets form a \emph{laminar} family: any two are either disjoint or nested.
The algorithm of Naor, Umboh, and Williamson~\cite{NaorUW22} for online WTAP exploits this laminar structure.
In contrast, on a cycle, the corresponding sets are intervals of a path, which may overlap without being nested.

Cycles also raise the second question of whether the logarithmic competitive ratio arises from the link costs or from the structure of the cuts themselves.
On trees, the $\Omega(\log n)$ lower bound holds even with unit costs~\cite{GuptaKR12,GargS26}.
On cycles, however, the reduction from CAP establishes this lower bound only for link costs in~$\{0,1\}$, since it introduces zero-cost links.
For the unit-cost \emph{Cycle Augmentation Problem} (CycAP), the best known competitive ratio was $O(\log n)$~\cite{GargS26}, leaving open whether a constant competitive ratio is achievable.

\subsection{Our Results}

We resolve both questions.
Our main result closes the logarithmic gap for online WCAP.

\begin{theorem}\label{thm:main-weighted-connectivity}
  There exists a deterministic polynomial-time $O(\log n)$-competitive algorithm for the online Weighted Connectivity Augmentation Problem (WCAP) on graphs with $n$ vertices.
\end{theorem}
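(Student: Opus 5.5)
Our plan has two parts. First, we reduce online WCAP to online WCycAP while losing only a constant factor. Second, we give an $O(\log n)$-competitive algorithm for online WCycAP.

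\textbf{Reduction.} We follow Garg and Subramanian~\cite{GargS26} and represent the minimum cuts of the $k$-edge-connected graph $G$ by a cactus, computed offline from $G$ before any request arrives. Links are mapped to the cactus, and a request $\{s,t\}$ is mapped to the corresponding pair of cactus nodes. Their argument shows that this step is lossless online. Next, we unfold the cactus at every node shared by two cycles. This merges the cycles into a single cycle on $O(n)$ vertices, in which the shared vertex appears twice and its two copies are joined by a zero-cost link. This unfolding is also done offline. Each request is then translated online into a request on the cycle: we pick appropriate copies of $s$ and $t$ and rely on the zero-cost links. We must verify two things. First, the minimum $s$--$t$ cuts in the cactus correspond exactly to the two-edge cuts of the cycle separating the chosen copies that are not already covered by zero-cost links. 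Second, any feasible offline solution transfers in both directions with equal cost. The algorithm buys all zero-cost links up front.

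\textbf{Algorithm on a cycle.} Fix a root vertex $r$ of the cycle. Each minimum cut then corresponds to an interval of the path obtained by deleting $r$. For the requests seen so far, we maintain online a laminar family of intervals that still have to be covered. This can be done by uncrossing: when two violated intervals cross, covering both is implied by covering suitable nested or disjoint replacements. We decompose this laminar family on the fly into chains, in heavy-path fashion, so that every root-to-leaf path of the laminar tree meets only $O(\log n)$ chains. Each chain, a nested sequence of intervals, is covered by a primal-dual online algorithm in the style of Naor, Umboh, and Williamson~\cite{NaorUW22}, which is $O(\log n)$-competitive per chain. Paying $O(\log n)$ per chain over $O(\log n)$ chains only gives $O(\log^2 n)$.

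\textbf{Main obstacle.} The key step is to remove this extra logarithmic factor. We would convert any deterministic competitive algorithm for chain covering into one with a refined guarantee: its cost is $O(\log(\text{something local}))$ times the optimum restricted to that chain, or alternatively an additive, telescoping charge against a fractional dual. We would use a doubling/guess-and-restart scheme on the optimum value of each chain, and charge the $i$-th chain on a path a cost proportional to $\log(n_i/n_{i+1})$, where $n_i$ is the size of the subtree at which that chain starts. Heavy-path decomposition makes these logarithms telescope along any path to $O(\log n)$. Finally, each optimal link covers intervals along at most two root-paths of the laminar tree, since on a cycle a link covers exactly the intervals that contain one of its endpoints but not the other. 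Summing the per-chain refined bounds over chains therefore charges each optimal link $O(\log n)$ times its cost. Proving that the refined transformation preserves determinism and the telescoping property is where I expect the bulk of the difficulty.
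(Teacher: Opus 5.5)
Your reduction and overall frame match the paper: the root $r$, intervals of $C-r$, an online laminar family split into chains, a per-chain covering algorithm, and the resulting $O(\log^2 n)$ baseline. But the step that actually proves the theorem, removing the second logarithmic factor, is missing, and the mechanism you sketch does not work as stated.

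The quantity $\log(n_i/n_{i+1})$ is not attached to a chain. One chain is left by many root-to-leaf paths, at different depths and into light subtrees of different sizes. Telescoping would therefore require a chain algorithm whose cost is bounded by $\sum_\ell c_\ell$ times a \emph{link-dependent} factor. A guess-and-restart on the chain's optimum value by itself only bounds the cost by a single factor times that optimum. Any factor depending only on the chain must, by the WPAP lower bound, grow like the logarithm of the chain's length, and such factors can add up to $\Theta(\log^2 n)$ along one path.

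What you need instead is a dichotomy. For the chain $\cC_p$ with largest interval $M_p$, split the optimal links that cover one of its intervals into $R^*_p=\delta_{F^*}(M_p)$ and the rest, $S^*_p$; every link of $S^*_p$ has both endpoints in $M_p$. By laminarity and the pivot structure, each optimal link lies in $S^*_p$ for at most one $p$ and in $R^*_p$ for $O(\log n)$ values of $p$. Hence a per-chain bound $O(c(R^*_p)+\log n\cdot c(S^*_p))$ suffices.

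Since $M_p$ is known only at the end, achieving this bound is where the work lies. The paper runs copies of the chain algorithm in phases whose budgets at least double. At each restart it buys a cheapest link for the offending request, together with at most four representatives for the phase's largest interval, chosen among links of cost at most the new budget. All requests covered by $R^*_p$ are covered by a single link $\ell^*\in R^*_p$. If a phase with budget at least $c(R^*_p)+\rho(n)\,c(S^*_p)$ restarts, then $\ell^*$ must cover one of its requests, and hence the phase's largest interval. The representatives then cover every later request that $\ell^*$ covers, so the next phase never restarts. The geometric budgets give cost $O(c(R^*_p)+\rho(n)\,c(S^*_p))$.

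Three supporting steps are also unjustified.

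\emph{Online chain decomposition.} A heavy-path decomposition needs the subtree sizes of the laminar tree, which is revealed only online. Heavy children can change after earlier intervals were already committed to, and served by, some chain's algorithm. You need a rule under which the chain of an interval depends only on the interval itself. The paper uses the pivot: the vertex of the interval of minimum depth in a balanced binary search tree on $V\setminus\{r\}$ fixed upfront. This also gives every chain a common vertex $p$ known in advance. It also bounds the chains touched by an optimal link by the $O(\log n)$ tree ancestors of its two endpoints.

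\emph{Per-chain $O(\log n)$ bound.} This does not follow directly from Naor, Umboh, and Williamson. Their primal-dual algorithm prunes links against a path known upfront, so that each path edge is covered by at most two links per cost class. Here the order of the chain is revealed only online. The paper replaces pruning by buying up to four representatives of the trigger's class: on each side of $p$, the link whose inner endpoint is closest to $p$ and the link whose outer endpoint is farthest from $p$. This ensures that no link of a class covers two requests whose triggers lie in that class, and it is where the common vertex $p$ is essential.

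\emph{Laminarity.} It does not come from replacing crossing intervals by nested or disjoint ones; covering $A\cap B$ and $A\cup B$, for instance, does not imply covering $A$ and $B$. It comes from always covering an inclusion-wise minimal uncovered interval separating the current pair. If two such exposed intervals crossed, then when the earlier one was exposed, one of its two pieces would have been a strictly smaller uncovered interval separating its request.
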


This improves the previous $O(\log^2 n)$ bound of Garg and Subramanian~\cite{GargS26} and is asymptotically optimal, even for randomized algorithms in the random-order model.
Indeed, the $\Omega(\log n)$ lower bound for $k=1$~\cite{GuptaKR12,GargS26} extends to every fixed~$k$: replacing each edge of a tree by $k$ parallel edges yields a $k$-edge-connected graph with the same family of minimum cuts.
Thus, online WCAP has an optimal competitive ratio of $\Theta(\log n)$ for every~$k$.

By the online reduction of \Cref{prop:connectivity-to-cycle}, which produces a cycle with $O(n)$ vertices, \Cref{thm:main-weighted-connectivity} follows from our next result.

\begin{theorem}\label{thm:main-weighted}
    There exists a deterministic polynomial-time $O(\log n)$-competitive algorithm for the online Weighted Cycle Augmentation Problem (WCycAP) on cycles with $n$ vertices.
\end{theorem}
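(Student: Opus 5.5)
Our plan follows the outline in the abstract: reduce the requests on the cycle to covering a laminar family of cuts, decompose that family online into chains, cover each chain with a primal-dual algorithm, and then use a refined cost guarantee to avoid losing a second logarithmic factor.

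\textbf{Setup.} Fix a vertex $r$ of the cycle and cut the cycle open at $r$, so the remaining vertices form a path $v_1,\dots,v_{n-1}$. Every minimum cut of the cycle removes two edges. Its side not containing $r$ is an interval $[i,j]$ of this path, and a link covers the cut exactly when it has one endpoint inside the interval and one outside. A request $\{s,t\}$ asks us to cover every interval that separates $s$ from $t$.

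\textbf{Step 1: a laminar family built online.} Given the current set of bought links, consider the uncovered intervals relevant to a new request. We would first handle the intervals containing $s$ but not $t$, and symmetrically those containing $t$ but not $s$. Among these, we repeatedly select a \emph{maximal} uncovered interval, or a canonical one, e.g.\ the one with the leftmost left endpoint and then the longest. We add it to a family $\mathcal{F}$ and buy links until it is covered. The invariant we aim for is that $\mathcal{F}$ stays laminar. If two selected intervals crossed, their union and intersection would also be tight cuts. Any link covering the union or the intersection covers one of the two originals, so the canonical choice should have selected the union instead. We also need every link that covers all members of $\mathcal{F}$ lying between $s$ and $t$ to cover all $s$--$t$ cuts. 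The intended argument is uncrossing: every uncovered $s$--$t$ interval is dominated by a selected interval, in the sense that covering the selected one forces covering it.

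\textbf{Step 2: chain decomposition.} Next we decompose the laminar family $\mathcal{F}$ online into chains using a heavy-path-style rule. A new set joins the chain of its parent if it is the first (or largest) child; otherwise it starts a new chain. This gives the structural property that every root-to-leaf path in $\mathcal{F}$ meets $O(\log n)$ chains, and more importantly that a single link of OPT crosses only $O(\log n)$ chains in a charged manner. Covering one chain is an interval-like covering problem on a line. The primal-dual algorithm of Naor--Umboh--Williamson type gives $O(\log n)$-competitiveness for it. Applied per chain and summed naively, this yields only $O(\log^2 n)$.

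\textbf{Step 3 (main obstacle): refined guarantee.} We would transform any deterministic $\alpha$-competitive chain-covering algorithm into one with a refined guarantee of the following form. The cost is at most $O(\alpha)$ times the cost of the optimal links \emph{leaving} the chain, counted once, plus $O(1)$ times the cost of links that are internal to the chain. The transformation would run the base algorithm on a doubling schedule of guessed budgets. Inside each chain, we prefer the cheapest single link covering the current deepest uncovered set whenever its cost is within the current dual. This way, an OPT link entering from outside is charged $O(\alpha)$ only in the chain where it originates, and $O(1)$ in all others. Combining this with the Step 2 property, that each OPT link is "internal" to at most one chain per level and the charges along a path telescope geometrically, gives a total of $O(\log n)\cdot\mathrm{OPT}$.

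The delicate part is defining the refined guarantee so that it actually telescopes. I would first try a weighted dual-fitting argument where the duals in chain $C$ are scaled by $2^{-\text{depth}(C)}$. Polynomial time is immediate because every step is a shortest-link or primal-dual computation over $O(n^2)$ intervals.
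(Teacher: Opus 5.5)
\textbf{Steps 1 and 2.} Your outline has the paper's architecture, but Step 1 fails as stated. Selecting \emph{maximal} (or leftmost-longest) uncovered separating intervals does not give a laminar family, already for a single request. Take $n=8$, $r=v_1$ and the request $\{v_3,v_6\}$. The intervals $\{v_2,\dots,v_5\}$ and $\{v_4,\dots,v_8\}$ are the maximal uncovered intervals on the two sides, and they cross. If the first is covered by, e.g., the link $v_1v_3$, the second is still uncovered and is selected next under either rule. Your uncrossing argument breaks because the union contains both $s$ and $t$, so it is not a candidate.

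The choice that works is the opposite one: an inclusion-wise \emph{minimal} uncovered separating interval. Suppose an earlier exposed $I$ for $\{s,t\}$, with $s\in I$ and $t\notin I$, crossed a later exposed $J$. Then $J$ was already uncovered when $I$ was chosen, hence so were $I\cap J$ and $I\setminus J$. The one of these containing $s$ contradicts the minimality of $I$.

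With minimal selection, however, later intervals are often \emph{larger}. A new set may become the parent of existing sets or be inserted between a set and its parent, so your heavy-path rule cannot be run online. ``Largest child'' needs future information, and ``first child'' gives no logarithmic bound: a caterpillar whose first children are leaves sends a root-to-leaf path through linearly many chains. The missing idea is to decompose by vertices fixed upfront. Take a balanced binary search tree $T$ on $V\setminus\{r\}$ and send each exposed interval to its vertex of minimum depth in $T$, its pivot. Intervals with a common pivot intersect, so they form a chain by laminarity. An interval covered by an optimal link $uv$ contains exactly one of $u,v$, so its pivot is an ancestor of $u$ or $v$. Hence each optimal link is relevant to at most $2\lceil\log_2 n\rceil$ chains.

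Moreover, each chain is revealed only online, so the Naor--Umboh--Williamson algorithm does not apply directly: its pruning needs the path upfront. You need a substitute, such as buying, per request, up to four extreme links of the trigger's class. These must cover every later request (containing $p$ and nested with the current one) that some link of this class covering the current request covers.

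\textbf{Step 3.} Your refined guarantee points in the wrong direction. What sums to $O(\log n)\cdot c(F^*)$ is $c(F_p)=O\bigl(c(R^*_p)+\alpha\,c(S^*_p)\bigr)$, with two classes of optimal links:
\begin{itemize}
\item $R^*_p$ consists of the optimal links covering the largest interval $M_p$ of the chain. They get a constant factor, and a link may lie in $R^*_p$ for up to $2\lceil\log_2 n\rceil$ chains.
\item $S^*_p$ consists of the optimal links that cover some interval of the chain but have both endpoints in $M_p$. They get factor $\alpha$, and by laminarity plus the pivot structure a link lies in $S^*_p$ for at most one $p$. You also need to state and prove this disjointness lemma.
\end{itemize}
Charging $O(\alpha)$ to the leaving links, as you propose, gives $O(\alpha\log n)=O(\log^2 n)$ again.

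The mechanism is also missing. A cheapest link covering the deepest uncovered set has no relation to the single link $\ell^*\in R^*_p$ that covers every request $R^*_p$ covers. The depth-scaled dual fitting is not developed either. What works is to run fresh copies of $\cA$ in phases with budgets $\max\{2B,c_{\ell_I}\}$. At each restart, buy a cheapest link $\ell_I$ covering the offending request, plus representatives for the largest interval $M$ of the ended phase, chosen among the links covering $M$ of cost at most the new budget.

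Once the budget reaches $c(R^*)+\alpha\,c(S^*)$, a restart forces some request of the phase to be covered by $\ell^*$. Then $\ell^*$ covers $M$ and is affordable. All later requests contain $p$ and are nested with $M$, so the representatives cover every later request that $\ell^*$ covers. The next phase therefore sees only requests covered by $S^*$ and never restarts. Geometric budgets then yield the $O\bigl(c(R^*)+\alpha\,c(S^*)\bigr)$ bound.
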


Since online WCAP reduces to online WCycAP, the latter also inherits the $\Omega(\log n)$ lower bound for link costs in~$\{0,1\}$.
For unit costs, however, we can surprisingly eliminate the logarithmic factor entirely.

\begin{theorem}\label{thm:main-unweighted}
   There exists a deterministic polynomial-time $4$-competitive algorithm for the online Cycle Augmentation Problem (CycAP).
\end{theorem}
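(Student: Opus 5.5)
We need a proof plan for \Cref{thm:main-unweighted}: a deterministic 4-competitive algorithm for online CycAP with unit costs. The abstract indicates the approach: greedily grow sets of purchased links that are connected through crossings or shared endpoints, and charge each expansion toward a requested vertex to a distinct link of an optimal solution.

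Fix the cycle $C$ on vertex set $V$ in cyclic order. A request $\{s,t\}$ asks to cover every $2$-cut of $C$ separating $s$ from $t$. Such a cut removes two cycle edges, one on each of the two $s$--$t$ arcs. A set $F$ of links covers all these cuts exactly when the graph $C\cup F$ has $s,t$ $3$-edge-connected. Geometrically, this means the chords of $F$, grouped into components under crossing or sharing an endpoint, contain a single component whose endpoint set ``spans'' both arcs in the required way.

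The plan is to maintain a family of \emph{clusters}: maximal sets of bought links that are connected through crossings or shared endpoints. Each cluster $K$ has an endpoint set $V(K)$, and $V(K)$ splits the cycle into gaps. The first key lemma I would prove is a characterization: $s$ and $t$ are $3$-edge-connected in $C\cup F$ if and only if some cluster $K$ has both $s$ and $t$ in $V(K)$, or $s$ and $t$ are not separated by any $2$-cut. More precisely, I would show that every $2$-cut separating $s,t$ is covered precisely when the chain of gaps is handled by clusters in a nested way. I expect the clean statement to be this: the minimum cuts not covered by $F$ correspond to pairs of cycle edges lying in a common ``face'' of the planar drawing of $C\cup F$ with chords inside the disk. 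Working with faces of this outerplanar-like drawing, in which crossings are merged, gives the geometric tool.

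The algorithm, on request $\{s,t\}$, proceeds as follows. While $s$ and $t$ are not yet $3$-edge-connected, consider the cluster or face containing $s$. Buy one link $\ell$ that extends it as far as possible toward $t$. Concretely, among the links whose one endpoint lies in the current cluster region of $s$ (or at $s$ itself) and which cover some still-uncovered $s$--$t$ cut, pick the one reaching furthest toward $t$ along the relevant arc. Then do the same from the $t$ side, alternating. Each purchase merges clusters or strictly shrinks the uncovered region. This is a constant number of greedy extensions per ``phase''.

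For the analysis, I would charge purchases to the links of a fixed optimal solution $\mathrm{OPT}$ so that each optimal link receives at most $4$ charges. The structure would be:
\begin{itemize}
\item at most two charges from expansions on the ``left'' side and two on the ``right'' side,
\item or, alternatively, two per endpoint of the optimal link.
\end{itemize}
The key claim is that whenever the algorithm makes an expansion step from the region of $s$, some optimal link $\ell^*$ must cover the currently uncovered cut adjacent to that region. Because of the furthest-reaching choice, the purchased link reaches at least as far as $\ell^*$. Afterwards $\ell^*$'s endpoint lies inside the grown cluster region on that side, so $\ell^*$ can never again be the witness for an expansion in that direction from that side. Since a chord has two endpoints and each endpoint can serve as a witness for expansion in two directions, the total charge per optimal link is at most $4$.

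I expect the main obstacle to be making this injectivity rigorous when clusters merge across different requests. The face structure changes, and a previously used witness might appear relevant again for a different cluster. I would handle this with an invariant: for every optimal link endpoint, the region "already passed" only grows monotonically. Crossings make the merged region a contiguous set of faces, which is what a cycle, unlike a tree, guarantees.

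For the matching lower bound of $4$, not needed here, one would use an adversary. For polynomial time, each step scans all links, and there are at most $n$ merges/expansions per request.
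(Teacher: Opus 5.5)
Your characterization of serving a request essentially matches \Cref{lem:overlap-characterization}; the clause ``or $s,t$ are not separated by any $2$-cut'' is vacuous on a cycle, since $\{s\}$ is such a cut. The genuine gap is the injectivity of your charging, which you yourself flag as the main obstacle.

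\textbf{Why your charging does not work.} Your witness $\ell^*$ must already be a candidate for the current cluster, since otherwise the furthest-reaching choice would not dominate it. So $\ell^*$ is adjacent to the cluster before the step, and buying the dominating link changes no monotone quantity attached to $\ell^*$. On a cycle, the grown cluster does not occupy a contiguous region. After the step, the endpoint of $\ell^*$ typically lies in a smaller gap of the grown cluster, between the old boundary vertex and the new endpoint. By the Gap Lemma, no bought link covers that gap. So a later request whose $t$ lies in this gap again makes $\ell^*$ a link covering the uncovered cut next to the cluster, now approached from the new boundary, and the same can happen after each further shrinking. Your ``sides'' and ``directions'' are defined relative to the current request, so ``two endpoints times two directions'' is not an invariant you can maintain. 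Your proposed monotone ``region already passed'' is never defined in a way that excludes this. Your algorithm also has no step that exploits components built by earlier requests, so nothing prevents repeated charges when a cluster is grown through an already-handled part of the cycle.

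\textbf{What the paper uses instead.} The paper makes the charging injective with a genuinely monotone potential: the set $F^*\cap N[F]$ of optimal links that are bought or cross a bought link. It then shows that every charged event adds a new link to this set. This requires three ingredients missing from your plan.
\begin{enumerate}
\item The Gap Lemma (\Cref{lem:gap}): every link covering a gap of a connected set $A$ crosses a link of $A$.
\item The expansion is defined relative to the gap $J$ of $N[K]$, not of $K$, that contains $t$. It buys, on \emph{both} sides, a link of $N[K]$ ending at the boundary vertices $x,y$ of $J$. Then the optimal link $f$ covering $J$ is not in $N[K]$ beforehand, but it crosses a new link afterwards.
\item Merges of up to two links, so that if $f$ were already in $N[F]$, the set $\{\ell\}$ or $\{\ell,f\}$ would have been a merge (\Cref{lem:expansion}).
\end{enumerate}
The constant $4$ also arises differently from your count. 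Activations cost at most $|V(F^*)|\le 2|F^*|$. Merges number at most the component-creating activations. Each such activation, because the algorithm prefers crossing links, and each expansion of at most two links consume a distinct link of $F^*$. This gives $|F|\le 2|F^*|+2|F^*|$.
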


This improves the previous $O(\log n)$ bound of Garg and Subramanian~\cite{GargS26}.
Thus, with unit costs, online CAP is easier on cycles than on trees, whereas allowing link costs in~$\{0,1\}$ already makes cycles as hard as general online WCAP.
We complement \Cref{thm:main-unweighted} with a matching lower bound.

\begin{theorem}\label{thm:main-unweighted-lower-bound}
   No deterministic online algorithm for CycAP has a competitive ratio smaller than~$4$, even if all requests share a common vertex.
\end{theorem}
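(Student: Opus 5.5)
The plan is an adaptive adversary argument on a cycle $v_0=r,v_1,\dots,v_{n-1}$ with all links present at unit cost. Every request has the form $\{r,t\}$. The first step is to make the covering condition combinatorial. After deleting $r$, the cycle becomes a path. Each minimum $r$--$t$ cut corresponds to an interval $I$ of this path with $t\in I$, and a link covers $I$ exactly when it has one endpoint in $I$. Two consequences follow.

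\begin{enumerate}
\item Every requested vertex must be incident to a bought link, because of the singleton interval $\{t\}$.
\item A set of bought links serves a requested set $T$ exactly when, in the graph $H$ formed by the bought links, every interval meeting $T$ has a boundary-crossing link.
\end{enumerate}

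I would first prove a clean characterization of feasibility: for each requested $t$, the union of the intervals spanned by links of $H$, taken along the ``crossing or shared endpoint'' structure that contains $t$, must reach $r$. Informally, the links reachable from $t$ by crossings or shared endpoints must span an arc from $t$ out to the root. This yields two estimates that the argument will compare:
\begin{enumerate}
\item a lower bound on the algorithm's cost, in terms of how much of the path its current link structures ``cover'' around each requested vertex;
\item an upper bound on $\mathrm{OPT}$, obtained by exhibiting a small feasible solution for the final request set. The intended witnesses are a single chain $r - p_1 - p_2 - \dots$ through the requests, or a few long links crossing many requests at once.
\end{enumerate}

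The adversary should play a doubling-search game. It places each new request just outside the arc currently ``certified'' by the algorithm's links, so that none of the algorithm's existing structures can be reused for the new request and the algorithm must pay again to extend toward~$r$. Meanwhile, $\mathrm{OPT}$ in hindsight pays for only one long link, plus $O(1)$ extra, that crosses all requests made so far.

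I expect the algorithm's successive ``reaches'' $x_1<x_2<\dots$ to contribute cost about $\sum_{i\le j+1}x_i$ when the adversary stops right after phase $j$, against $\mathrm{OPT}\approx x_j$. The standard analysis of the recurrence $\sum_{i\le j+1}x_i\le c\,x_j$ then shows that it has a solution only when $c\ge 4$. The argument is via the generating function or the characteristic polynomial $z^2-cz+c$, which has real roots iff $c\ge4$. This gives the ratio $4-\varepsilon$ for every $\varepsilon>0$ once $n$ is large enough.

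The main obstacle is the middle step. The adversary must be designed so that the algorithm's cost in each phase really is at least proportional to the extension $x_{j+1}$ in unit link counts. Since a single unit-cost link can span an arbitrarily long arc, ``reach'' cannot simply mean geometric length. I would first try to measure progress by the number of requested vertices the algorithm's component can certify. The adversary would place requests at geometrically growing counts inside nested arcs, arranged so that any link bought earlier either fails to cross the new requests or forces a fresh incident link at each new terminal. Making this charging exact enough to get the constant $4$ rather than just $\Omega(1)$ is the delicate part.
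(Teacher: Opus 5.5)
There is a genuine gap, and it starts with the instance. If all links are present at unit cost, the common-vertex problem is trivially better than $2$-competitive. On request $\{r,t\}$, buy the single link $rt$, which alone serves it. This costs at most $k$ for $k$ distinct requested vertices $t_1,\dots,t_k$. Every feasible solution, however, must cover the singleton cuts $\{r\}$ and $\{t_i\}$, so it needs a link incident to each of these $k+1$ vertices, giving $\mathrm{OPT}\ge\lceil (k+1)/2\rceil$. Hence no adversary on your instance can force a ratio above $2$, let alone $4-\varepsilon$. The same observation refutes your estimate that $\mathrm{OPT}$ pays ``one long link plus $O(1)$'': by your own first consequence, $\mathrm{OPT}$ needs an endpoint at every requested vertex. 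The symptom you noticed, that one unit-cost link can span an arbitrarily long arc, is therefore fatal here and not a matter of finer charging. You need a restricted link set whose crossing structure is sparse enough that serving a far request forces purchases all along the way.

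Beyond that, the decisive step is missing. You do not supply an adversary under which the algorithm's cost really follows $\sum_{i\le j+1}x_i$ against $\mathrm{OPT}\approx x_j$; you flag it yourself as the main obstacle. The paper's mechanism is also not a doubling recurrence. It uses $2m$ bundles $L_i$, each consisting of the six links between a $2$-set $U_i$ and a $3$-set $W_{i+1}$. They are arranged so that links of distinct bundles cross if and only if their bundles are consecutive in a bundle cycle. Then serving $\{x,y\}$ requires a bought link at $x$, one at $y$, and one in every bundle on one of the two bundle paths between them. The first request pairs $s\in U_0$ with a vertex of $L_m$, forcing one link in each of the $m+1$ bundles of one half. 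The adversary then visits each inner bundle of the other half: it requests $s$ with a vertex of $U_i$, then with a vertex of $W_{i+1}$, each not yet incident to a bought link. This forces three bought links in $L_i$ if $L_i$ was nonempty before the visit, and two otherwise, while the single link $uw$ suffices offline. Visiting in binary-search order leaves at most $\log_2 m$ bundles empty before their visit. The algorithm thus pays $(m+1)+3(m-1)-\log_2 m$ against $\mathrm{OPT}\le m+1$. The constant $4$ comes from paying $1+3$ links per bundle against $1$. The ideas missing from your plan are this adaptive choice of uncovered endpoints inside an asymmetric $2\times 3$ bundle, together with a link set on which reach is forced bundle by bundle.
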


\Cref{tab:results} summarizes the previous and new bounds.

\begin{table}[h]
    \centering    \caption{Competitive ratios of online augmentation problems. Upper bounds are deterministic, and lower bounds hold for randomized algorithms, even in random order~($\dagger$), or for deterministic algorithms~($\ddagger$). The lower bound for WCycAP follows from \cite{GuptaKR12,GargS26} by \Cref{prop:connectivity-to-cycle}.}
    \begin{tabular}{@{}ll@{\hspace{0.4em}}ll@{\hspace{0.4em}}lll@{}}
        \toprule
        & \multicolumn{4}{c}{Previous work} & \multicolumn{2}{c}{\textbf{This paper}} \\
        \cmidrule(lr){2-5} \cmidrule(l){6-7}
        Problem & \multicolumn{2}{l}{UB} & \multicolumn{2}{l}{LB} & UB & LB \\
        \midrule
        TAP, WTAP & $O(\log n)$ & \cite{NaorUW22} & $\Omega(\log n)^\dagger$ & \cite{GuptaKR12,GargS26} & & \\
        WCAP & $O(\log^2 n)$ & \cite{GargS26} & $\Omega(\log n)^\dagger$ & \cite{GuptaKR12,GargS26} & $\bm{O(\log n)}$ & \\
        CAP & $O(\log n)$ & \cite{GargS26} & $\Omega(\log n)^\dagger$ & \cite{GuptaKR12,GargS26} & & \\
        WCycAP & $O(\log^2 n)$ & \cite{GargS26} & & & $\bm{O(\log n)}$ & $\Omega(\log n)^\dagger$ \\
        CycAP & $O(\log n)$ & \cite{GargS26} & & & $\bm{4}$ & $\bm{4}^\ddagger$ \\
        \bottomrule
    \end{tabular}
    
    \label{tab:results}
\end{table}

\subsection{Technical Overview}\label{sec:overview}

We begin with an overview of the approach behind our main result.

\subsubsection{Weighted Cycle Augmentation}

We are given an underlying cycle~$C$ and a set of weighted links. Pairs of vertices arrive as requests, and upon receiving each request, the algorithm must buy links so that the requested pair becomes $3$-edge-connected. Equivalently, for each request $\{s,t\}$, the algorithm must cover every minimum cut (of size~$2$) separating $s$ and~$t$, where a link covers a cut if its endpoints lie on opposite sides. We fix a vertex~$r$ of the cycle and represent each cut by the side that does not contain~$r$. The minimum cuts separating $s$ and~$t$ are then represented by the nonempty intervals of the path~$C-r$ containing exactly one of $s$ and~$t$.

At a high level, our approach for WCycAP is closely related to that of Naor, Umboh, and Williamson~\cite{NaorUW22} for WTAP, but with some crucial differences. They decompose the underlying tree into paths, which yields instances of the Weighted Path Augmentation Problem (WPAP), the special case of WTAP in which the tree is a path, and run a primal-dual algorithm on each of them. The minimum cuts of a tree, represented by the sides not containing a fixed root, form a laminar family, which is known upfront. This allows them to fix the path decomposition before any requests arrive. In contrast, the minimum cuts of a cycle, represented by intervals of~$C-r$, may overlap without being nested. Nevertheless, our algorithm for WCycAP constructs a laminar family of intervals of~$C-r$ online and buys links to cover them. Since this family depends on both the requests and the links purchased by the algorithm, it cannot be determined in advance. Handling this uncertainty is one of the main challenges compared to WTAP.

\paragraph{Uncrossing online.}

Let us briefly explain how our algorithm constructs the laminar family online (see \Cref{fig:wcycap-request} for an illustration). Upon receiving a request $\{s,t\}$, our algorithm repeatedly covers an inclusion-wise minimal uncovered interval separating $s$ and~$t$, which we call \emph{exposed}. We show that the exposed intervals form a laminar family, even across different requests (\Cref{lem:exposed-laminar}). The intuition is that an interval exposed later was already uncovered when an earlier interval was exposed.

\paragraph{Decomposition into chains.}
Similarly to the decomposition of a tree into paths for WTAP, we split the laminar family of exposed intervals online into chains, that is, families of pairwise nested intervals. Again, since the exposed intervals are revealed only online, we cannot decompose this family upfront. Instead, we fix upfront a balanced binary search tree~$T$ on the vertices of~$C-r$, whose in-order traversal follows their order along the path~$C-r$ (see \Cref{fig:pivot-chains}). We assign every incoming exposed interval to its unique vertex of minimum depth in~$T$, called its \emph{pivot}. Exposed intervals with the same pivot~$p$ contain~$p$, so by laminarity they form a chain~$\cC_p$. The search tree~$T$ provides a useful property for charging against an optimal solution. The pivot of an interval is an ancestor in~$T$ of every other vertex in that interval. Moreover, any interval covered by a link~$\ell$ in the optimal solution contains exactly one endpoint of~$\ell$. Thus, its pivot must be an ancestor of an endpoint of~$\ell$ (possibly the endpoint itself). Since~$T$ is balanced, the intervals covered by an optimal link~$\ell$ belong to at most $2\lceil\log_2 n\rceil$ distinct chains (\Cref{lem:active-rooted-count}).

\paragraph{Covering a chain online.}
We cover each chain~$\cC_p$ using a separate instance of an online algorithm for the \emph{rooted chain covering} problem, where requests are promised to be intervals that contain a common vertex~$p$ and form a chain, and each interval must be covered by a purchased link. Since the requested intervals form a chain, we can order them by inclusion and identify them with consecutive edges of a path. Each link then covers the edges of a subpath, just as in WPAP. If the common vertex~$p$ is an endpoint of~$C-r$, all possible intervals containing~$p$ are nested, regardless of which intervals arrive online. Thus, their inclusion order, and hence the corresponding path, is known upfront, as in the approach of~\cite{NaorUW22}. In general, however, the intervals containing~$p$ need not be nested. Although the intervals assigned to~$\cC_p$ are guaranteed to form a chain, this chain is revealed only online.

\paragraph{Representatives instead of pruning.}

Losing at most a factor of~$2$ in the competitive ratio, we assume that all link costs are either zero or powers of~$2$. We buy all zero-cost links upfront and group the links of equal positive cost into \emph{classes}.

Like the basic algorithm of Naor, Umboh, and Williamson~\cite{NaorUW22} for WPAP, our algorithm for rooted chain covering is primal-dual. Upon receiving an uncovered request~$I$, it raises the dual variable of~$I$ until a link covering~$I$, called the \emph{trigger}, becomes tight. Naor, Umboh, and Williamson~\cite{NaorUW22} then buy the trigger. This suffices because they can prune the link set upfront, using the known path, so that each path edge is covered by at most two links per class. This approach is infeasible in our setting, since the path of requests is not known upfront.

Instead, we buy at most four \emph{representatives}: links of the trigger's class that cover~$I$. For each side of~$p$, we select a link whose endpoint inside~$I$ is closest to~$p$ on that side, and a link whose endpoint outside~$I$ is farthest from~$p$ on that side. These representatives have the following property: if a link of the same class covers both~$I$ and a later request, then the representatives also cover that later request, since all later requests contain~$p$ and are nested with~$I$ (\Cref{lem:nested-covers}). Consequently, no link of a given class can cover two requests whose triggers belong to that class. Since the requests form a chain, it follows that every request is covered by at most two triggers per class. This yields an $O(\log n)$-competitive algorithm for rooted chain covering (\Cref{thm:interval-chain-augmentation}).

\paragraph{Avoiding a second logarithmic factor.}
Running this algorithm independently on each chain yields only an $O(\log^2 n)$ competitive ratio: a link in an optimal solution~$F^*$ may cover requests in $O(\log n)$ chains, incurring an $O(\log n)$ factor in each.

However, if an optimal link covers a request in a chain~$\cC_p$ but not its largest requested interval, both endpoints of the optimal link must lie inside that largest interval. By laminarity and the pivot structure, this can happen for at most one chain per optimal link (\Cref{lem:inner-disjoint}).

For each chain~$\cC_p$, partition the links of an optimal solution~$F^*$ covering its requests into two sets: $R^*_p$, consisting of links that cover its largest interval, and $S^*_p$, consisting of the remaining links. Our goal is to cover~$\cC_p$ at cost $O(c(R^*_p)+\log n\cdot c(S^*_p))$. This suffices because each optimal link belongs to $R^*_p$ for at most $O(\log n)$ chains and to $S^*_p$ for at most one chain. Summing over all chains, the total cost is therefore $O(\log n)\cdot c(F^*)$, yielding an $O(\log n)$ competitive ratio.

\paragraph{A black-box conversion.}
Our goal is to obtain the refined cost guarantee mentioned above. Naor, Umboh, and Williamson~\cite{NaorUW22} achieve such a guarantee for WPAP, where links incident to the known root of the path play the role of~$R^*_p$, and this classification is known upfront. For our chains~$\cC_p$, however, the analogous classification depends on the largest requested interval, which is known only at the end of the instance. In particular, there is a link~$\ell^* \in R^*_p$ that covers every request covered by~$R^*_p$, while~$S^*_p$ covers all remaining requests of~$\cC_p$. An algorithm that knew~$\ell^*$ upfront could therefore buy it immediately and pass only the remaining requests to a $\rho(n)$-competitive algorithm for rooted chain covering, paying at most $c(R^*_p)+\rho(n)\cdot c(S^*_p)$ in total. We instead convert any deterministic $\rho(n)$-competitive algorithm for rooted chain covering into one that achieves the desired cost guarantee $O(c(R^*_p)+\rho(n)\cdot c(S^*_p))$, without knowing~$\ell^*$ upfront (\Cref{thm:refined-guarantee}).

The conversion works in phases, each with a fresh copy of the given algorithm and a budget, initially set to zero. In each phase, we pass every request that is not yet covered to the current copy and buy the links it purchases, as long as their total cost stays within the budget. If serving a request would exceed the budget, the phase ends with that request. The conversion then buys a cheapest link covering this request and starts a new phase whose budget is twice the old one, or the cost of this link if that is larger. Moreover, it buys representatives for the largest request~$M$ of the ended phase, chosen among the links that cover~$M$ and cost at most the new budget, so that they cover every later request that one of these links covers. 
We now explain why the total cost of the conversion is $O(c(R^*_p)+\rho(n)\cdot c(S^*_p))$. The key observation is that once the budget is sufficiently large, at most one more phase can end. Intuitively, if a phase ends because of a request covered by~$\ell^*$, the representatives purchased at its end will cover all future requests that~$\ell^*$ could cover. The remaining requests can then be handled by the given algorithm using~$S^*_p$.

More precisely, once the budget reaches $c(R^*_p)+\rho(n)\cdot c(S^*_p)$, a phase can end only if $\ell^*$ covers one of its requests. Indeed, otherwise all requests passed to the current copy are covered by~$S^*_p$, and the given algorithm spends at most $\rho(n)\cdot c(S^*_p)$. In this case, $\ell^*$ also covers the largest request~$M$ of the ended phase, and its cost is at most $c(R^*_p)$, which does not exceed the new budget. Hence, the representatives purchased at the end of this phase cover every later request that $\ell^*$ covers. From then on, only requests covered by~$S^*_p$ are passed to the current copy, so no further phase ends. Thus, the budget increases at most once after reaching $c(R^*_p)+\rho(n)\cdot c(S^*_p)$, and the final budget is $O(c(R^*_p)+\rho(n)\cdot c(S^*_p))$. Since each phase costs at most a constant times its budget and the budgets grow geometrically, the total cost satisfies the same bound.

\subsubsection{Unit-Cost Cycle Augmentation}

For CycAP, our approach is very different. We use the overlap graph~$X$ on the links, in which two links are adjacent if they \emph{cross}, that is, if they share an endpoint or their endpoints alternate along~$C$.
A set of links $F$ serves $\{s,t\}$ if and only if its induced subgraph $X[F]$ of~$X$ contains a path from a link incident to~$s$ to a link incident to~$t$~\cite{GargS26}.
Garg and Subramanian therefore reduce online CycAP to online Steiner forest, which costs a logarithmic factor~\cite{BermanC97} that is unavoidable on general graphs~\cite{ImaseW91}.
Our algorithm instead grows the connected components of~$X[F]$, which we call \emph{components}, greedily by at most two links at a time, and the geometry of the cycle lets us charge these steps to an optimal solution.

\paragraph{Growing components.}
Upon a request $\{s,t\}$, the algorithm first \emph{activates} each of $s$ and~$t$ that is not yet incident to a bought link, that is, it buys a link at this vertex, preferring a link that crosses a bought one.
Then, until the bought links at~$s$ and at~$t$ lie in the same component, it grows the component~$K$ that contains the bought links at~$s$, which pairwise cross since they share~$s$.
If possible, it \emph{merges} $K$ with another component by buying at most two links.
Otherwise, it \emph{expands} $K$ toward~$t$, that is, on each side of~$t$, it buys a link that crosses~$K$ and ends as close to~$t$ as possible, unless a link of~$K$ already ends at least as close to~$t$ on this side (\Cref{fig:expansion}).

\paragraph{Charging to an optimal solution.}
Fix an optimal solution~$F^*$.
Activations buy at most $2|F^*|$ links, since every requested vertex is incident to a link of~$F^*$ and is activated at most once.
Merges and expansions buy at most two links each.
Only activations that start a new component increase the number of components, whereas every merge decreases it, so there are at most as many merges as such activations.
Hence, it suffices to show that each such activation and each expansion adds a link to the set of links of~$F^*$ that are bought or cross a bought link, as this set only grows.
For such an activation, the links of~$F^*$ at the activated vertex are added, as the algorithm prefers links that cross bought ones.
Thus it remains to argue about expansions.

\paragraph{Paying for an expansion.}
A \emph{gap} of a set of links is a maximal circular interval that contains no endpoint of these links.
After an expansion, let $J$ be the gap containing~$t$ of the set consisting of the links of~$K$ and the new links.
As $J$ does not contain~$s$, some link $f\in F^*$ must cover $J$.
Our analysis rests on the geometric fact that if a set of links is connected in~$X$, then every link with exactly one endpoint in one of its gaps crosses one of its links (\Cref{lem:gap}).
As the links of~$K$ together with the new links form such a connected set, $f$ crosses a link of~$K$ or a new link.
By the choice of the new links, no link that crosses~$K$ ends in~$J$, so $f$ crosses a new link~$\ell$.
Moreover, before the expansion, $f$ was not bought and crossed no bought link, as otherwise $\{\ell\}$ or $\{\ell,f\}$ would have merged $K$ with another component (\Cref{lem:expansion}).
This is why merges may consist of two links.
Altogether, the algorithm buys at most $2|F^*|+2|F^*|=4|F^*|$ links.

\paragraph{The matching lower bound.}
For a power of two $m\ge2$, our instance consists of $2m$ \emph{bundles} of six links each, arranged in a cycle such that links of distinct bundles cross if and only if their bundles are consecutive (\Cref{fig:bundle-instance}).
Hence, serving a request requires a link in every bundle on one of the two paths between the bundles of its two vertices.
All requests contain a fixed vertex~$s$ at a bundle~$L_0$, and the first one pairs~$s$ with a vertex at the opposite bundle.
The algorithm thus buys a link in every bundle of one half of this cycle, whereas an offline solution uses one link in every bundle of the other half.
In this other half, the adversary forces three bought links in almost every bundle.
The six links of a bundle join two vertices with three vertices, and the adversary pairs~$s$ first with one of the two and then with one of the three, each time avoiding endpoints of bought links if possible.
Offline, the link between the two requested vertices suffices, whereas the algorithm needs a bought link at each of them, so it ends up with at least three links in the bundle if the bundle already contained one, and at least two otherwise.
Treating the bundles in binary-search order ensures that all but at most $\log_2 m$ of the visited bundles end up with at least three links, since serving a request at a bundle requires bought links in all bundles between it and~$L_0$ on one side (\Cref{fig:bundle-run}).
Thus, the algorithm buys at least $(m+1)+3(m-1)-\log_2 m$ links, whereas $m+1$ links suffice offline, so the ratio tends to~$4$.

\subsection{Related Work}\label{sec:related}

\paragraph{Online connectivity augmentation.}
For WTAP, Gupta, Krishnaswamy, and Ravi~\cite{GuptaKR12} gave a deterministic $O(\log^2 n)$-competitive algorithm and an $\Omega(\log n)$ lower bound for randomized algorithms, and Naor, Umboh, and Williamson~\cite{NaorUW22} closed this gap.
Garg and Subramanian~\cite{GargS26} showed that the $\Omega(\log n)$ lower bound holds even for TAP under uniformly random arrival order.
Naor, Umboh, and Williamson~\cite{NaorUW22} also obtained an $O(\log n)$-competitive algorithm for online WPAP, which is equivalent to online interval set cover and contains the parking permit problem of Meyerson~\cite{Meyerson05}.
More generally, online weighted set cover with $m$ sets and $N$ elements admits a deterministic $O(\log m\log N)$-competitive algorithm, which is nearly best possible for deterministic algorithms~\cite{AlonAABN09}.
Online rooted chain covering is a special case of online set cover with at most $n^2$ sets and elements, so this algorithm is $O(\log^2 n)$-competitive for it.
Our algorithm for rooted chain covering saves one logarithmic factor by exploiting the interval structure.
Buchbinder and Naor~\cite{BuchbinderN09} survey the online primal-dual method for such covering problems.

\paragraph{Online survivable network design.}
Gupta, Krishnaswamy, and Ravi~\cite{GuptaKR12} initiated the study of the online edge-connectivity survivable network design problem.
Their randomized algorithm is $\widetilde O(r_{\max}\log^3 n)$-competitive and runs in time $O(|E|^{r_{\max}})$, where $r_{\max}$ is the maximum connectivity requirement, that is, the largest number of edge-disjoint paths that a request asks for.
Applied to online WCAP, it is $\widetilde O(k\log^3 n)$-competitive.
For $r_{\max}=2$, Naor, Umboh, and Williamson~\cite{NaorUW22} improve the randomized bound to $\widetilde O(\log^2 n)$ using their online WTAP algorithm.
The best known lower bound of $\Omega(\log n)$ already holds for online Steiner tree, that is, for $r_{\max}=1$~\cite{ImaseW91}, so no dependence on~$r_{\max}$ is known to be necessary.
Dehghani, Ehsani, Hajiaghayi, Liaghat, and Seddighin~\cite{DehghaniEHL18} showed that a natural greedy algorithm is $O(\log^2 n\log r_{\max})$-competitive if every connectivity requirement may be violated by a factor of~$2$, and they obtained stronger guarantees for stochastic arrivals.
Qian, Umboh, and Williamson~\cite{QianUW18} developed a framework for online constrained forest problems with proper cut-requirement functions, and its competitive ratio is logarithmic in the number of terminals.
This framework captures online Steiner forest but does not directly apply to connectivity augmentation.

\paragraph{Online Steiner problems.}
Online Steiner tree~\cite{ImaseW91,WestbrookY95} and online edge-weighted Steiner forest~\cite{AwerbuchAB04,BermanC97,WestbrookY95} admit $O(\log n)$-competitive algorithms, and this bound is asymptotically tight~\cite{ImaseW91}.
The node-weighted setting is substantially more difficult.
Hajiaghayi, Liaghat, and Panigrahi~\cite{HajiaghayiLP17} gave the first polynomial-time polylogarithmic-competitive algorithm for online node-weighted Steiner forest.
Borst, Eli\'a\v{s}, and Venzin~\cite{BorstEV25} obtained a randomized $O(\log q\log n)$-competitive algorithm, where $q$ is the number of terminal pairs, and a deterministic $O(\log^2 n)$-competitive algorithm, which underlies the $O(\log^2 n)$ result of \cite{GargS26} for WCAP.

\paragraph{Offline connectivity augmentation.}
A long line of work has led to approximation ratios below~$2$ for CAP and WCAP~\cite{A19,CTZ21,CG18,CG18a,EFKN09,FGKS18,GKZ18,KN16,KN16b,N03,ByrkaGJ23,CecchettoTZ24,TraubZ25,TraubZ23,GalvezGAS21}.
These include a $1.393$-approximation for CAP~\cite{CTZ21} and a $(1.5+\eps)$-approximation for WCAP~\cite{TraubZ23}, which also builds on the reduction to WCycAP.
For CycAP, G\'alvez, Grandoni, Jabal Ameli, and Sornat~\cite{GalvezGAS21} showed APX-hardness and gave a $(1.5+\eps)$-approximation, and for the more general Leaf-to-Leaf Connectivity Augmentation Problem, Cecchetto, Traub, and Zenklusen~\cite{CecchettoTZ24} obtained a $1.29$-approximation.

\section{Preliminaries}\label{sec:preliminaries}

We formalize online WCAP as an online cut-covering problem and reduce it to online WCycAP.

\paragraph{Graphs and cuts.}
All graphs are undirected and may have parallel edges.
For a graph $G=(V,E)$ and a set $F$ of additional edges on $V$, called \emph{links}, we write $G+F:=(V,E\cup F)$.
Edges and links are distinct objects, even when they have the same endpoints.
For a set $Y$ of edges or links on $V$ and a set $S\subseteq V$, let $\delta_Y(S)$ be the set of elements of $Y$ with exactly one endpoint in~$S$, and let $\delta_G(S):=\delta_E(S)$.
The set $S$ \emph{separates} two vertices $u$ and $v$ if $|S\cap\{u,v\}|=1$.
A link \emph{covers} $S$ if exactly one of its endpoints lies in~$S$, and a set of links $F$ \emph{covers} $S$ if $\delta_F(S)\neq\emptyset$.
For an integer $k\ge1$, two distinct vertices $u,v\in V$ are \emph{$k$-edge-connected} in $G$ if $|\delta_G(S)|\ge k$ for every $S\subseteq V$ that separates $u$ and~$v$.
The graph $G$ is \emph{$k$-edge-connected} if every two distinct vertices are $k$-edge-connected in~$G$.

\paragraph{Online weighted connectivity augmentation.}
An instance of online WCAP consists of a $k$-edge-connected graph $G=(V,E)$ with $n$ vertices, a set $L$ of links on $V$ with costs $c\colon L\to\mathbb{R}_{\ge0}$, and a sequence of \emph{requests}, each of which is a set $\{s,t\}$ of two distinct vertices.
A set of links $F\subseteq L$ \emph{serves} a request $\{s,t\}$ if $s$ and $t$ are $(k+1)$-edge-connected in $G+F$.
The graph~$G$ and the links with their costs are known upfront, whereas the requests are revealed one by one.
An online algorithm maintains a set $F\subseteq L$ of bought links.
Upon each request, it must add links to $F$ such that $F$ serves the request, without knowing the future requests or their number, and it can never remove links from~$F$.
The objective is to minimize the total cost $c(F):=\sum_{\ell\in F}c_\ell$.
A set of links is \emph{feasible} for a request sequence if it serves all of its requests.
We consider only request sequences for which $L$ is feasible.
Moreover, we assume that no two links are parallel.
Parallel links cover the same sets, so an algorithm may ignore all but a cheapest one among them without changing the optimal cost.

An online algorithm is \emph{$\rho(n)$-competitive} if, for every instance with $n$ vertices and every feasible solution~$F^*$ of the instance, its solution $F$ satisfies $c(F)\le\rho(n)\cdot c(F^*)$, or $\ex[c(F)]\le\rho(n)\cdot c(F^*)$ if the algorithm is randomized, where an oblivious adversary fixes the instance upfront.
Its \emph{competitive ratio} for instances with $n$ vertices is the smallest such~$\rho(n)$.
In the \emph{random-order model}, the requests of the instance arrive in uniformly random order, and we require $\ex[c(F)]\le\rho(n)\cdot c(F^*)$, where the expectation is also taken over this order.

\paragraph{Requests as cut-covering constraints.}
Since $G$ is $k$-edge-connected, we have $|\delta_G(S)|\ge k$ for every nonempty proper subset $S\subsetneq V$.
We call such a set $S$ a \emph{minimum cut} of $G$ if $|\delta_G(S)|=k$, and then $V\setminus S$ is a minimum cut as well.
The following standard characterization turns every request into a family of covering constraints.

\begin{lemma}\label{lem:cut-covering}
    Let $G=(V,E)$ be a $k$-edge-connected graph, let $F$ be a set of links on~$V$, and let $s,t\in V$ be distinct.
    Then $F$ serves $\{s,t\}$ if and only if $F$ covers every minimum cut of $G$ that separates $s$ and~$t$.
\end{lemma}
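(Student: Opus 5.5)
We prove both directions directly from the definition of serving, which says that $s$ and $t$ are $(k+1)$-edge-connected in $G+F$ if and only if $|\delta_{G+F}(S)|\ge k+1$ for every $S\subseteq V$ separating $s$ and~$t$. The only identity needed is
\[
|\delta_{G+F}(S)| = |\delta_G(S)| + |\delta_F(S)|,
\]
which holds because edges and links are distinct objects, so the edge set of $G+F$ is the disjoint union $E\cup F$.

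For the forward direction, suppose $F$ serves $\{s,t\}$. Let $S$ be a minimum cut of $G$ that separates $s$ and~$t$. Then
\[
k+1\le|\delta_{G+F}(S)|=k+|\delta_F(S)|,
\]
so $\delta_F(S)\neq\emptyset$, that is, $F$ covers~$S$.

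For the converse, suppose $F$ covers every minimum cut separating $s$ and~$t$, and let $S$ be any set separating $s$ and~$t$. Such an $S$ is a nonempty proper subset of $V$, since it contains exactly one of the two distinct vertices. Because $G$ is $k$-edge-connected, $|\delta_G(S)|\ge k$, and we split into two cases.
\begin{itemize}
\item If $|\delta_G(S)|\ge k+1$, then $|\delta_{G+F}(S)|\ge k+1$ already.
\item If $|\delta_G(S)|=k$, then $S$ is a minimum cut separating $s$ and~$t$. By assumption $|\delta_F(S)|\ge1$, so $|\delta_{G+F}(S)|\ge k+1$.
\end{itemize}
In both cases the cut condition holds for $S$, so $s$ and $t$ are $(k+1)$-edge-connected in $G+F$, and $F$ serves $\{s,t\}$.

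This argument has no real obstacle: it unfolds the cut-based definition of $(k+1)$-edge-connectivity. The only points needing care are that a separating set is automatically nonempty and proper, so the $k$-edge-connectivity of $G$ applies to it, and that $\delta_{G+F}$ splits additively into the edge part and the link part.
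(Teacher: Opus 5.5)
Your proof is correct and follows essentially the same route as the paper: unfold the cut definition of $(k+1)$-edge-connectivity, use $|\delta_{G+F}(S)|=|\delta_G(S)|+|\delta_F(S)|$ together with $|\delta_G(S)|\ge k$, and conclude that the condition fails for a separating set $S$ exactly when $|\delta_G(S)|=k$ and $\delta_F(S)=\emptyset$. The only difference is presentation: the paper states this characterization of violated sets in one step, while you argue the two directions separately.
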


\begin{proof}
    By definition, $F$ serves $\{s,t\}$ if and only if $|\delta_{G+F}(S)|\ge k+1$ for every $S\subseteq V$ that separates $s$ and~$t$.
    For every such set~$S$, we have $|\delta_{G+F}(S)|=|\delta_G(S)|+|\delta_F(S)|$ and $|\delta_G(S)|\ge k$.
    Hence, the condition is violated for $S$ if and only if $|\delta_G(S)|=k$ and $\delta_F(S)=\emptyset$.
\end{proof}

\paragraph{Reduction to cycle augmentation.}
Online WCycAP is the special case of online WCAP in which $k=2$ and $G$ is a simple cycle~$C$ on $n\ge3$ vertices.
The minimum cuts of~$C$ are its \emph{circular intervals}, that is, the nonempty proper subsets of $V$ whose vertices are consecutive along~$C$, since exactly two edges of~$C$ leave such a set and at least four edges of~$C$ leave every other nonempty proper subset of~$V$.
The minimum cuts of a $k$-edge-connected graph can be represented by a cactus with $O(n)$ edges~\cite{DinitzKL76}, which can be constructed deterministically in polynomial time~\cite{Fleischer98} and in near-linear time by a randomized algorithm~\cite{KargerP09}.
Garg and Subramanian~\cite{GargS26} use this representation to reduce online WCAP to the online \emph{Weighted Cactus Augmentation Problem} (WCacAP), in which $k=2$ and $G$ is a \emph{cactus}, that is, a connected graph in which every edge lies on exactly one cycle, where cycles of length two are allowed.
In the offline setting, G\'alvez, Grandoni, Jabal Ameli, and Sornat~\cite{GalvezGAS21} reduce WCacAP further to WCycAP by repeatedly unfolding the cactus at a vertex~$x$ that lies on at least two cycles, which joins two of these cycles into one cycle on which $x$ appears as two copies, joined by a zero-cost link (see also~\cite{TraubZ23}).
Combining both reductions gives the following online reduction, which \Cref{fig:reduction} illustrates and which we prove in \Cref{app:reduction}.

\begin{figure}[tbp]
    \centering
\begin{tikzpicture}[
    paper/diagram,
    left cycle/.style={paper/edge, draw=col2, line width=1.2pt},
    right cycle/.style={paper/edge, draw=col3, line width=1.2pt},
    articulation/.style={paper/terminal},
    zero link/.style={paper/link, line width=1.2pt, dash pattern=on 0pt off 2.5pt},
    link label/.style={paper/edge label, text=black!55},
    blob edge/.style={},
    blob node/.style={}
]

    \newcommand{\Kfour}[4][]{%
        \begin{scope}[shift={(#3,#4)}, #1]
            \coordinate (#2n) at ( 0.00, 0.28);
            \coordinate (#2e) at ( 0.28, 0.00);
            \coordinate (#2s) at ( 0.00,-0.28);
            \coordinate (#2w) at (-0.28, 0.00);
            \foreach \u/\v in {
                #2n/#2e,#2n/#2s,#2n/#2w,
                #2e/#2s,#2e/#2w,#2s/#2w%
            } {
                \draw[paper/edge, blob edge] (\u)--(\v);
            }
            \foreach \u in {#2n,#2e,#2s,#2w} {
                \node[paper/vertex, blob node] at (\u) {};
            }
        \end{scope}
    }

    \begin{scope}[scale=1.1]
        \Kfour[blob edge/.style={draw=black}, blob node/.style={paper/terminal, minimum size=4.5pt}]{x}{ 0.00}{ 0.00}
        \node[paper/label, above=5pt] at (xn) {$x$};

        \Kfour{a}{-1.15}{ 0.75}
        \Kfour{b}{-1.45}{-0.32}
        \Kfour{c}{-0.45}{-0.92}

        \Kfour{d}{ 1.15}{ 0.75}
        \Kfour{e}{ 1.45}{-0.32}
        \Kfour{f}{ 0.45}{-0.92}

        \begin{scope}[on background layer]
            \draw[left cycle] (xw) -- (ae);
            \draw[left cycle] (as) -- (bn);
            \draw[left cycle] (be) -- (cw);
            \draw[left cycle] (cn) -- (xs);

            \draw[right cycle] (xe) -- (dw);
            \draw[right cycle] (ds) -- (en);
            \draw[right cycle] (ew) -- (fe);
            \draw[right cycle] (fn) -- (xs);
        \end{scope}

        \draw[paper/link] (an) to[bend left=12]
            node[link label, above=2pt] {$\ell_1$} (dn);
        \draw[paper/link] (ce) to[bend left=30]
            node[link label, pos=0.8, above=2pt] {$\ell_2$} (ew);
        \draw[paper/link] (be) --
            node[link label, pos=0.3, above=2pt] {$\ell_3$} (fn);
    \end{scope}
    \node[paper/label] at (0,-1.95) {(a) WCAP};

    \draw[paper/transition] (2.05,0) --
        node[paper/annotation, above=3pt] {cactus} (2.95,0);

    \begin{scope}[xshift=5.0cm]
        \begin{scope}[scale=1.1]
            \coordinate (x)  at (0,0);
            \coordinate (a)  at (-1.15, 0.75);
            \coordinate (b)  at (-1.45,-0.32);
            \coordinate (c)  at (-0.45,-0.92);
            \coordinate (d)  at ( 1.15, 0.75);
            \coordinate (e)  at ( 1.45,-0.32);
            \coordinate (f)  at ( 0.45,-0.92);

            \draw[left cycle] (x)--(a)--(b)--(c)--cycle;
            \draw[right cycle] (x)--(d)--(e)--(f)--cycle;

            \draw[paper/link] (a) --
                node[link label, above=3pt] {$\ell_1$} (d);
            \draw[paper/link] (c) --
                node[link label, above=3pt] {$\ell_2$} (e);
            \draw[paper/link] (b) --
                node[link label, above=3pt] {$\ell_3$} (f);

            \foreach \u in {a,b,c,d,e,f} {
                \node[paper/vertex] at (\u) {};
            }
            \node[articulation] at (x) {};
            \node[paper/label, above=5pt] at (x) {$x$};
        \end{scope}
        \node[paper/label] at (0,-1.95) {(b) WCacAP};
    \end{scope}

    \draw[paper/transition] (7.05,0) --
        node[paper/annotation, above=3pt] {unfold} (7.95,0);

    \begin{scope}[xshift=10.0cm]
        \begin{scope}[scale=1.1]
            \foreach \i in {1,...,8} {
                \coordinate (v\i) at ({1.50*cos(90-45*(\i-1))},{1.05*sin(90-45*(\i-1))});
            }

            \draw[right cycle] (v1) arc[start angle=90, end angle=-90, x radius=1.50, y radius=1.05];
            \draw[left cycle] (v5) arc[start angle=-90, end angle=-270, x radius=1.50, y radius=1.05];

            \draw[paper/link] (v2) to[bend left=20]
                node[link label, pos=0.18, below=2pt] {$\ell_1$} (v8);
            \draw[paper/link] (v3) to[bend right=5]
                node[link label, pos=0.2, above=2pt] {$\ell_2$} (v6);
            \draw[paper/link] (v4) to[bend right=5]
                node[link label, pos=0.8, above=2pt] {$\ell_3$} (v7);

            \draw[zero link] (v1) -- (v5);

            \foreach \i in {2,3,4,6,7,8} {
                \node[paper/vertex] at (v\i) {};
            }
            \foreach \i in {1,5} {
                \node[articulation] at (v\i) {};
            }
            \node[paper/label, above=5pt] at (v1) {$x_1$};
            \node[paper/label, below=5pt] at (v5) {$x_2$};
        \end{scope}
        \node[paper/label] at (0,-1.95) {(c) WCycAP};
    \end{scope}
\end{tikzpicture}
    \caption{The reduction from WCAP (a) via a cactus (b) to WCycAP (c).
        Exactly two edges, both orange or both green, leave every minimum cut of (a).
        Unfolding duplicates $x$ into $x_1$ and $x_2$, joined by a zero-cost link (dotted).}
    \label{fig:reduction}
\end{figure}

\begin{proposition}[Reduction to Weighted Cycle Augmentation]\label{prop:connectivity-to-cycle}
    There is a constant $a\ge1$ such that, for every nondecreasing function $\rho\colon\mathbb{N}_{\ge1}\to\mathbb{R}_{\ge1}$, a $\rho(n)$-competitive algorithm for online WCycAP yields a $\rho(\lceil an\rceil)$-competitive algorithm for online WCAP.
    The resulting algorithm is deterministic if the given one is, and it runs in polynomial time if the given one does.
\end{proposition}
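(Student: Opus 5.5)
The plan is to compose two online reductions, both performed on the known data before any request arrives, and then translate requests and purchases online. \emph{Step 1 (cactus).} Following Garg and Subramanian~\cite{GargS26}, compute a cactus representation $\mathcal{K}=(V_{\mathcal K},E_{\mathcal K})$ of the minimum cuts of $G$ with $O(n)$ edges, together with the map $\phi\colon V\to V_{\mathcal K}$. Its defining property is that the minimum cuts of $G$ are exactly the sets $\phi^{-1}(S)$ with $S$ a minimum cut of $\mathcal K$. (Empty nodes of $\mathcal K$ cause no difficulty, since they never carry requests.) Each link $uv\in L$ becomes the link $\phi(u)\phi(v)$ with the same cost. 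By \Cref{lem:cut-covering}, a set of links serves $\{s,t\}$ in $G$ if and only if its image serves $\{\phi(s),\phi(t)\}$ in $\mathcal K$, because covering $\phi^{-1}(S)$ is the same as covering $S$. If $\phi(s)=\phi(t)$, the request is already served and we ignore it. Optimal costs coincide, so competitiveness transfers with the number of cactus nodes, which is $O(n)$, in place of $n$.

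\emph{Step 2 (unfolding).} While some node $x$ lies on at least two cycles of $\mathcal K$, take two cycles $C_1,C_2$ through $x$ and split $x$ into copies $x_1,x_2$, so that $C_1$ and $C_2$ become one cycle passing through $x_1$ and $x_2$. Every other block hanging at $x$ is attached to one of the copies; to make this concrete, attach the remaining cycles at $x_1$. Add a zero-cost link $x_1x_2$. Each original link endpoint and each requested vertex at $x$ is mapped to one fixed copy, say $x_1$. A $2$-cycle (a pair of parallel edges) is treated as a cycle of length two, and any length-two cycle in the final graph can be padded with a dummy vertex. The number of nodes grows by one per unfolding step, and the number of steps is at most the number of cycles, which is $O(n)$. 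The final graph is therefore a single cycle $C$ on at most $an$ vertices. Here we invoke $\rho$ nondecreasing, and we also pad with dummy vertices to reach $n\ge3$.

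The key claim, which I expect to be the main obstacle, concerns one unfolding step. Let $Z$ be any set of links containing the zero link $x_1x_2$. Then $Z$ serves $\{u,v\}$ in the unfolded graph if and only if $Z\setminus\{x_1x_2\}$ (with $x_1,x_2$ identified) serves $\{u,v\}$ before the step. To prove it, I would describe the minimum cuts on both sides. Every minimum cut of the unfolded cactus either separates $x_1$ from $x_2$ or projects to a minimum cut of the old cactus. Conversely, every old minimum cut lifts to a new one that does not separate $x_1$ and $x_2$; this uses the fact that an old minimum cut meets at most one cycle at $x$ nontrivially. Cuts separating $x_1$ and $x_2$ are covered by the zero link, and covering is preserved for the remaining cuts. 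This is exactly the offline argument of~\cite{GalvezGAS21,TraubZ23}, and I would re-verify it in the online setting: the request-wise statement is what matters, and it holds because serving is defined cut by cut. I must also check the two vertices of a request that lie in the same cycle but get mapped to copies. For example, if $u=x$ is mapped to $x_1$ while the cycle $C_2$ is attached via $x_2$, then the cuts separating $x_1$ from $x_2$ also separate $u$ from vertices of $C_2$. These cuts are covered by the zero link, so the equivalence still holds.

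\emph{Step 3 (the algorithm).} The algorithm buys all zero links upfront at no cost, maps each request through $\phi$ and the copy choices, feeds it to the given WCycAP algorithm, and buys the preimages of the links that algorithm purchases. Feasibility follows from the equivalences above. For cost, any feasible WCAP solution $F^*$ maps to a feasible WCycAP solution of equal cost once the zero links are added. Hence the ratio is $\rho(\lceil an\rceil)$. Determinism and polynomial time are inherited, since the cactus can be constructed deterministically in polynomial time~\cite{Fleischer98} and the unfolding is a simple polynomial procedure.
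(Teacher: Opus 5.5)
Your proposal is correct and follows essentially the same route as the paper. You map to a cactus via the Garg--Subramanian construction, unfold at vertices shared by two cycles while adding a zero-cost link $x_1x_2$, check serving-equivalence cut by cut (cuts not separating $x_1$ and $x_2$ are the lifts of old cuts, and the others are covered by the zero link), and buy preimages of the links the cycle algorithm buys. The differences are minor:
\begin{itemize}
\item The paper handles the single $2$-cycle case directly by buying a cheapest link, where you pad the cycle with a dummy vertex.
\item The paper deletes loops and keeps only a cheapest link among parallel ones, so that the cycle instance meets the model's no-parallel-links convention. You should add this step.
\item Your lift of an old minimum cut needs only the replacement of $x$ by both copies. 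It does not need the fact that a cut splits at most one cycle through $x$.
\end{itemize}
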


The reduction maps each request separately, and the resulting algorithm never pays more than the given algorithm for online WCycAP.
Therefore, \Cref{prop:connectivity-to-cycle} also applies to randomized algorithms and to the random-order model, since requests in uniformly random order are mapped to requests in uniformly random order.
As the reduction preserves link costs and adds only links of cost zero, it carries the $\Omega(\log n)$ lower bound for online CAP~\cite{GuptaKR12,GargS26} over to online WCycAP with link costs in~$\{0,1\}$.
Hence, \Cref{thm:main-weighted} is asymptotically optimal, even for randomized algorithms in the random-order model.

\section{Online Weighted Cycle Augmentation}\label{sec:weighted}

In this section, we prove \Cref{thm:main-weighted}.
Throughout this section, $V=\{v_1,\ldots,v_n\}$ is a set of vertices with the linear order $v_1 \prec \cdots \prec v_n$.
A set $I \subseteq V$ is an \emph{interval} if $I = \{v_a,v_{a+1},\ldots,v_b\}$ for some $1 \le a \le b \le n$.
As in \Cref{sec:preliminaries}, a link covers an interval $I$ if exactly one of its endpoints lies in~$I$, $\delta_F(I)$ denotes the set of links in a set $F$ that cover~$I$, and $F$ covers $I$ if $\delta_F(I)\neq\emptyset$.

\paragraph{Cost classes.}
Rounding every positive link cost up to the smallest power of~$2$ that is at least this cost multiplies it by a factor in~$[1,2)$.
Hence, an algorithm that is $\rho(n)$-competitive on instances whose positive costs are powers of~$2$ is $2\rho(n)$-competitive on arbitrary instances if it is run on the rounded costs.
We therefore assume that every link has cost $0$ or $2^j$ for some integer~$j$, and we call $j$ the \emph{class} of the link.
Links of cost zero belong to no class, and our algorithms buy all of them upfront.

\subsection{Splitting Exposed Intervals into Chains}\label{sec:framework}

Let $C = v_1 - v_2 - \cdots - v_n - v_1$ be a simple cycle on~$V$, let $L$ be a set of links on $V$ with nonnegative costs, and let $r:= v_1$.
The cut-covering formulation is standard for cycle augmentation~\cite{TraubZ23}. For a single request, \Cref{lem:cut-covering} gives the following characterization.

\begin{lemma}\label{lem:interval-covering}
    A set of links $F \subseteq L$ serves a request $\{s,t\}$
    if and only if $\delta_F(I) \neq \emptyset$ for every interval $I \subseteq V \setminus \{r\}$ with $|\{s,t\} \cap I| = 1$.
\end{lemma}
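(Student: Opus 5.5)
The plan is to apply \Cref{lem:cut-covering} to the cycle $C$, which is $2$-edge-connected, with $k=2$. By that lemma, $F$ serves $\{s,t\}$ if and only if $F$ covers every minimum cut of $C$ that separates $s$ and~$t$. As noted in \Cref{sec:preliminaries}, the minimum cuts of $C$ are exactly its circular intervals, that is, the nonempty proper subsets of $V$ whose vertices are consecutive along~$C$. So it remains to match these circular intervals with the intervals of the path $C-r$ named in the statement.

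The key step is a complementation argument. A link covers $S$ if and only if it covers $V\setminus S$, since having exactly one endpoint in $S$ is the same as having exactly one endpoint in $V\setminus S$. Likewise, $S$ separates $s$ and $t$ if and only if $V\setminus S$ does. The complement of a circular interval is again a circular interval. Hence every minimum cut $S$ has a representative $S'\in\{S,V\setminus S\}$ with $r\notin S'$, and both the covering condition and the separation condition are the same for $S$ and $S'$.

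I will then show that circular intervals avoiding $r=v_1$ are exactly the sets $\{v_a,\dots,v_b\}$ with $2\le a\le b\le n$, which are precisely the intervals $I\subseteq V\setminus\{r\}$. Such a set is consecutive along $C$ and misses $v_1$, so it is a nonempty proper circular interval. Conversely, a consecutive arc of $C$ that misses $v_1$ cannot wrap around through $v_n - v_1$, so it is an interval in the linear order. The condition $|\{s,t\}\cap I|=1$ is exactly the statement that $I$ separates $s$ and $t$.

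Putting these together: $F$ serves $\{s,t\}$ if and only if $F$ covers every separating circular interval, if and only if $F$ covers every separating interval $I\subseteq V\setminus\{r\}$. No step is a real obstacle. The only care needed is the wrap-around case in identifying circular intervals that avoid $r$ with linear intervals of $V\setminus\{r\}$.
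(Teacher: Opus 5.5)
Your proposal is correct and follows the paper's proof: apply \Cref{lem:cut-covering} to $C$ with $k=2$, use that the minimum cuts of $C$ are its circular intervals, and replace a cut by its complement when needed so that it avoids $r$. You also spell out that the circular intervals avoiding $r=v_1$ are exactly the intervals of $V\setminus\{r\}$; the paper leaves this identification implicit.
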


\begin{proof}
    The minimum cuts of $C$ are its circular intervals, so by \Cref{lem:cut-covering}, $F$ serves $\{s,t\}$ if and only if $F$ covers every circular interval $S$ with $|S\cap\{s,t\}|=1$.
    Replacing $S$ by $V\setminus S$ if necessary, we may assume that $r\notin S$, so that $S$ is an interval contained in $V\setminus\{r\}$.
\end{proof}

\begin{figure}[tbp]
    \centering
    \begin{tikzpicture}[paper/diagram,scale=.85]
    \foreach \row/\stage/\shift/\height in {
        0/0/3/0,0/1/7/0,0/2/11/0,0/3/15/0,
        1/0/3/-3.8,1/1/7/-3.8,1/2/11/-3.8%
    } {
        \begin{scope}[xshift=\shift cm, yshift=\height cm, scale=1.25]
            \foreach \i in {1,...,16} {
                \coordinate (v\i) at ({90-22.5*(\i-1)}:1.10);
            }
            \pgfmathtruncatemacro{\progress}{3*\row+\stage}
            \pgfmathtruncatemacro{\firststage}{min(3,\progress)}

            \foreach \level/\start/\finish/\inner/\outer in {
                2/17/-147/0.40/1.44,1/10/-54/0.80/1.29%
            } {
                \ifnum\firststage<\level\relax
                \else
                    \ifnum\firststage=\level\relax
                        \def\intervalfill{paperInterval!12}
                    \else
                        \def\intervalfill{none}
                    \fi
                    \path[paper/interval, fill=\intervalfill] (\start:\outer)
                        arc[start angle=\start, end angle=\finish, radius=\outer]
                        -- (\finish:\inner)
                        arc[start angle=\finish, end angle=\start, radius=\inner]
                        -- cycle;
                \fi
            }
            \ifnum\firststage=0
                \path[paper/interval] (v5) circle[radius=0.15];
            \else
                \path[paper/interval, paper/past interval] (v5) circle[radius=0.15];
            \fi

            \ifnum\row=1
                \ifnum\stage>0
                    \ifnum\stage=1
                        \def\intervalfill{paperSecondInterval!12}
                    \else
                        \def\intervalfill{none}
                    \fi
                    \path[paper/second interval, fill=\intervalfill] (-58:1.29)
                        arc[start angle=-58, end angle=-122, radius=1.29]
                        -- (-122:0.84)
                        arc[start angle=-122, end angle=-58, radius=0.84] -- cycle;
                \fi
                \ifnum\stage=0
                    \path[paper/second interval] (v8) circle[radius=0.14];
                \else
                    \path[paper/second interval, paper/past interval] (v8) circle[radius=0.14];
                \fi
            \fi

            \draw[paper/edge] (0,0) circle[radius=1.10];
            \foreach \u/\v in {3/7,3/14,6/11,10/14,12/16} {
                \draw[paper/link] (v\u) -- (v\v);
            }
            \foreach \purchase/\u/\v in {1/5/7,2/7/11,3/11/13,4/8/10,5/10/12} {
                \ifnum\progress<\purchase\relax
                    \draw[paper/link] (v\u) to[bend right=18] (v\v);
                \else
                    \draw[paper/bought link] (v\u) to[bend right=18] (v\v);
                \fi
            }

            \foreach \i in {1,...,16} {
                \node[paper/vertex] at (v\i) {};
            }
            \node[paper/label] at (0,1.38) {$r$};
            \ifnum\row=0
                \foreach \i in {5,13} {
                    \node[paper/terminal] at (v\i) {};
                }
                \ifnum\stage=0
                    \node[paper/label] at (-8:1.40) {$s$};
                    \node[paper/label] at (-172:1.38) {$t$};
                \fi
            \else
                \foreach \i in {8,12} {
                    \node[paper/terminal] at (v\i) {};
                }
                \node[paper/label] at (-67.5:1.68) {$s'$};
                \node[paper/label] at (-157.5:1.60) {$t'$};
            \fi
        \end{scope}
    }

    \foreach \shift/\height in {3/0,7/0,11/0,3/-3.8,7/-3.8} {
        \draw[paper/transition] ({\shift+1.90},\height) -- ({\shift+2.43},\height);
    }

    \node[paper/label, anchor=east, align=center] at (0.35,0) {request\\$\{s,t\}$};
    \node[paper/label, anchor=east, align=center] at (0.35,-3.8) {request\\$\{s',t'\}$};

    \draw[paper/link] (0.0,-6.50) -- (0.6,-6.50);
    \node[paper/annotation, anchor=west] at (0.72,-6.50) {available link};
    \draw[paper/bought link] (3.5,-6.50) -- (4.1,-6.50);
    \node[paper/annotation, anchor=west] at (4.22,-6.50) {bought link};
    \path[paper/interval] (6.6,-6.61) rectangle (7.1,-6.39);
    \node[paper/annotation, anchor=west] at (7.25,-6.50) {first request};
    \path[paper/second interval] (9.7,-6.61) rectangle (10.2,-6.39);
    \node[paper/annotation, anchor=west] at (10.35,-6.50) {second request};
\end{tikzpicture}
    \caption{Two successive requests handled by \Cref{alg:weighted}. Shading marks the exposed interval that is covered next. The exposed intervals of both requests, outlined in orange and pink, form a laminar family.}
    \label{fig:wcycap-request}
\end{figure}

\Cref{lem:interval-covering} suggests the following approach to WCycAP, in which all intervals are subsets of $V\setminus\{r\}$ in the linear order $v_2\prec\cdots\prec v_n$.
For a request $\{s,t\}$, as long as some interval $I \subseteq V\setminus\{r\}$ with $|\{s,t\} \cap I| = 1$ is not covered by the current solution, choose an inclusion-wise minimal such interval, which we call \emph{exposed}, and cover it.
\Cref{fig:wcycap-request} illustrates this process for two successive requests.
The exposed intervals form a \emph{laminar} family, that is, every two of them are disjoint or nested, even across different requests (\Cref{lem:exposed-laminar}).
We exploit this structure to split the exposed intervals into chains whose intervals contain a common vertex, where a family of intervals is a \emph{chain} if $I\subseteq J$ or $J\subseteq I$ for every two of its intervals $I$ and~$J$.
Each of these chains is served by an online algorithm for the following problem.

\paragraph{Online rooted chain covering.}
In online \emph{rooted chain covering}, we are given upfront a set of $n$ vertices $V = \{v_1,\ldots,v_n\}$ with a linear order $\prec$, a set $L$ of links on $V$ with nonnegative costs, and a vertex $p \in V$, which we call the \emph{root}.
The algorithm maintains a current solution $F \subseteq L$ of bought links, which initially consists of the zero-cost links, and requests arrive one by one and must be served immediately.
A request is an interval $I \subsetneq V$ with $p \in I$, and serving it means adding links to $F$ so that $F$ covers $I$.
The algorithm knows upfront that the requested intervals form a chain and that $L$ covers each of them, but they may arrive in any order.
The objective is to minimize the total cost of bought links.

\paragraph{The algorithm.}
Since the exposed intervals are revealed only online, we cannot decompose their laminar family into chains upfront.
Instead, \Cref{alg:weighted} assigns every exposed interval to a vertex of a binary search tree that is fixed upfront.
Let $T$ be a balanced binary search tree on $V\setminus\{r\}$ with respect to $\prec$, which we construct by recursively choosing medians.
Since $T$ has $n-1$ vertices, every root-to-leaf path in $T$ has at most $\lceil\log_2 n\rceil$ vertices.
For an interval $I\subseteq V\setminus\{r\}$, let $p(I)$ denote the vertex of $I$ of minimum depth in $T$.
This vertex is unique, since the lowest common ancestor of two distinct vertices of $I$ of equal depth has smaller depth and lies between them with respect to $\prec$, and thus in $I$.
We call $p(I)$ the \emph{pivot} of $I$.
Moreover, $p(I)$ is an ancestor in~$T$ of every other vertex $x\in I$.
Indeed, the lowest common ancestor of $p(I)$ and~$x$ lies between them with respect to~$\prec$, and thus in~$I$, and its depth is at most that of~$p(I)$, so it equals~$p(I)$ by the uniqueness of the pivot.

For every vertex $p \in V\setminus\{r\}$, \Cref{alg:weighted} maintains an instance $\cC_p$ of online rooted chain covering with vertices $V$, order $\prec$, links $L$, and root $p$, which we call a \emph{subinstance}.
Every subinstance runs its own copy of a fixed online algorithm for rooted chain covering, and all subinstances persist throughout the entire request sequence.
When a request $\{s,t\}$ arrives, \Cref{alg:weighted} repeatedly chooses an exposed interval $I$, sends it to the subinstance $\cC_{p(I)}$ of its pivot, and adds the links bought there to its solution, until $s$ and $t$ are $3$-edge-connected.
By \Cref{lem:interval-covering}, an exposed interval exists as long as the request is not served.
\Cref{fig:pivot-chains} shows how the exposed intervals are distributed among the subinstances.

\begin{figure}[tbp]
    \centering
    \begin{tikzpicture}[paper/diagram, x=0.7cm, y=0.85cm]
    \foreach \i in {1,...,16} {
        \coordinate (v\i) at (\i,0);
    }

    \foreach \i/\d in {9/1,5/2,13/2,3/3,7/3,11/3,15/3,2/4,4/4,6/4,8/4,10/4,12/4,14/4,16/4} {
        \coordinate (t\i) at (\i,{0.5+0.36*(4-\d)});
        \draw[paper/boundary guide, draw=black!18] (\i,0.08) -- (t\i);
    }
    \foreach \a/\b in {9/5,9/13,5/3,5/7,13/11,13/15,3/2,3/4,7/6,7/8,11/10,11/12,15/14,15/16} {
        \draw[paper/edge, draw=black!45] (t\a) -- (t\b);
    }
    \foreach \i in {2,...,16} {
        \node[paper/vertex, fill=white, line width=0.6pt, minimum size=4pt] at (t\i) {};
    }
    \node[paper/label, anchor=south] at ([yshift=4pt]t9) {$T$};

    \draw[paper/edge] (v1) -- (v16);
    \foreach \i in {1,...,16} {
        \node[paper/vertex] at (v\i) {};
    }
    \node[paper/label, anchor=base, inner sep=0pt] at (1,-0.42) {$r$};

    \foreach \i/\name/\colour in {9/p_2/black!70,5/p_1/col2,13/p_3/col3,15/p_4/col5} {
        \node[paper/vertex, draw=\colour, fill=\colour, minimum size=5.5pt] at (t\i) {};
        \node[paper/vertex, draw=\colour, fill=\colour, minimum size=5pt] at (v\i) {};
        \node[paper/label, text=\colour, anchor=base, inner sep=0pt] at (\i,-0.42) {$\name$};
        \draw[paper/pivot guide, draw=\colour] (\i,-0.55) -- (\i,-1.80);
        \node[paper/label, text=\colour, anchor=north] at (\i,-1.86) {$\cC_{\name}$};
    }

    \foreach \left/\right/\height/\colour in {
        8/10/-0.78/black!70, 7/11/-1.15/black!70, 3/14/-1.52/black!70,
        4/6/-0.78/col2, 3/6/-1.15/col2,
        12/13/-0.78/col3, 12/14/-1.15/col3,
        15/16/-0.78/col5%
    } {
        \draw[paper/interval bar, draw=\colour]
            ({\left-0.28},\height) -- ({\right+0.28},\height)
            ({\left-0.28},{\height-0.07}) -- ({\left-0.28},{\height+0.07})
            ({\right+0.28},{\height-0.07}) -- ({\right+0.28},{\height+0.07});
    }
\end{tikzpicture}
    \caption{The balanced binary search tree $T$ on $V\setminus\{r\}$ and a laminar family of exposed intervals.
    Every interval has the color of its pivot (its vertex of minimum depth in~$T$) and is sent to the subinstance of this pivot.
    The intervals with a common pivot form a chain.}
    \label{fig:pivot-chains}
\end{figure}

\begingroup
\LinesNotNumbered
\DontPrintSemicolon
\begin{algorithm}[t]
    \caption{Algorithm for online WCycAP}\label{alg:weighted}
    \SetKwBlock{OnRequest}{Upon arrival of request $\{s,t\}$}{}
    Build a balanced binary search tree $T$ on $V\setminus\{r\}$ with respect to $\prec$.\;
    Initialize $F=\{\ell \in L \mid c_\ell = 0\}$.\;
    \ForEach{$p\in V\setminus\{r\}$}{
        Start an instance $\cC_p$ of online rooted chain covering with vertices $V$, order $\prec$, links $L$, and root $p$.\;
        Initialize a copy of the algorithm for rooted chain covering on $\cC_p$ that maintains its own solution $F_p$.\;
    }
    \OnRequest{
        \While{$s$ and $t$ are not $3$-edge-connected in $C+F$}{
            Choose an $\subseteq$-wise minimal interval $I\subseteq V\setminus\{r\}$ s.t.\ $|\{s,t\}\cap I|=1$ and $\delta_F(I)=\emptyset$.\;
            Send $I$ to $\cC_{p(I)}$ and let $P$ be the set of links newly bought in $\cC_{p(I)}$ to serve $I$.\;
            Update $F\gets F\cup P$.\;
        }
    }
\end{algorithm}
\endgroup

\paragraph{Correctness.}
The proof of the following lemma adapts the uncrossing argument for minimal violated sets in network design~\cite{WilliamsonGMV95,GabowGW98} to exposed intervals of different requests.

\begin{lemma}\label{lem:exposed-laminar}
    The intervals that \Cref{alg:weighted} exposes form a laminar family.
\end{lemma}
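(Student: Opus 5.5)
Suppose two exposed intervals $I$ and $J$ cross, meaning they intersect and neither contains the other. The goal is a contradiction with the minimality in whichever exposure happened later. Say $I$ was exposed no later than $J$; if they came from the same step of the loop they coincide. Let $F_I$ and $F_J$ be the solutions at the moments $I$ and $J$ were chosen. Since $F$ only grows, $F_I\subseteq F_J$. Because $J$ is uncovered by $F_J$, it is also uncovered by $F_I$. Let $\{s,t\}$ be the request for which $I$ was exposed and $\{s',t'\}$ the request for which $J$ was exposed. Since $I$ and $J$ cross, the sets $I\cap J$, $I\setminus J$ and $J\setminus I$ are nonempty intervals, all contained in $V\setminus\{r\}$.

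The first step is the standard uncrossing fact about links. Let $A=I\cap J$, $B=I\setminus J$ and $B'=J\setminus I$. Any link covering $A$ or $B$ also covers $I$ or $J$: an endpoint in $A$ and one outside $A$ either leaves $I$ or lies in $B\cup B'$, and similarly for $B$. Since $F_I$ covers neither $I$ nor $J$, it covers none of $A$, $B$, $B'$. The same reasoning with $F_J$, which also covers neither $I$ nor $J$, shows that $A$, $B$ and $B'$ are all uncovered by $F_J$ as well. The explicit check: a link with one endpoint in $A$ and the other outside $A$ has its second endpoint in $B$ (then it covers $J$), in $B'$ (then it covers $I$), or outside $I\cup J$ (then it covers both). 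A link with one endpoint in $B$ and the other outside $B$ has its second endpoint in $A$ (then it covers $J$) or outside $I$ (then it covers $I$).

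The second step uses the separation conditions. $I$ contains exactly one of $s,t$; say $s\in I$. If $s\in A$, then $A\subsetneq I$ is uncovered by $F_I$ and separates $s,t$, because $t\notin I$. If instead $s\in B$, the same holds for $B\subsetneq I$. Either way this contradicts the minimality of $I$ at its exposure time, since $F$ was then $F_I$. Note this step does not even need $J$ to be uncovered in $F_I$ for the separation itself, but it does need $A$ and $B$ to be uncovered by $F_I$. That is exactly where the previous step's conclusion is used: $J$ was uncovered already at time $I$, because coverage is monotone.

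So the only thing left to verify is the order argument and the degenerate case. The main subtlety I expect is making sure the cover claims are stated for the earlier solution $F_I$, which relies on $F_I\subseteq F_J$ and $\delta_{F_J}(J)=\emptyset$. It also relies on the fact that the algorithm's minimal interval is minimal among intervals in $V\setminus\{r\}$ that are uncovered and separating, and $A$ and $B$ qualify as such intervals. The symmetric argument at time $J$ is not needed.
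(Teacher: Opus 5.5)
Your proof is correct and follows essentially the same route as the paper: you observe that $J$ was already uncovered when $I$ was exposed, uncross to see that $I\cap J$ and $I\setminus J$ are uncovered at that time, and contradict the minimality of $I$ using whichever of these two pieces contains the requested vertex lying in~$I$. The extra checks for $J\setminus I$ and for the solution at the time $J$ is exposed are harmless but not needed.
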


\begin{proof}
    Let $I$ and $J$ be distinct exposed intervals, where $I$ was exposed before~$J$, and let $F$ be the set of bought links just before $I$ is exposed.
    Then $F$ does not cover $I$, and it does not cover $J$ either, since $J$ is exposed later, when the set of bought links is a superset of~$F$.

    For the sake of contradiction, suppose that $I$ and $J$ intersect and neither contains the other. Since both are intervals in the linear order on $V\setminus\{r\}$, the sets $I\cap J$ and $I\setminus J$ partition $I$ into two nonempty proper subintervals that also avoid $r$.
    Both $I \cap J$ and $I \setminus J$ are also uncovered by $F$, since every link that covers one of them covers $I$ or $J$.
    When the algorithm exposed $I$, the interval $I$ separated the two vertices of the current request $\{s,t\}$, say $s \in I$ and $t \notin I$.
    Exactly one of $I \cap J$ and $I \setminus J$ contains $s$ and neither contains $t$, so this interval is a strictly smaller uncovered interval that separates $s$ and $t$, contradicting the inclusion-wise minimality of $I$.
\end{proof}

\begin{lemma}\label{lem:exposed-chain}
    The exposed intervals sent to the subinstance $\cC_p$ form a chain, contain $p$, and are covered by~$L$. Thus, they are a valid input to $\cC_p$.
\end{lemma}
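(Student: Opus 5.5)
The three claims will be checked one at a time; each follows quickly from facts already established.

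\emph{Containment of $p$.} An interval $I$ is sent to $\cC_p$ only if $p=p(I)$. The pivot is by definition a vertex of $I$, so $p\in I$.

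\emph{Chain property.} Take two exposed intervals $I,J$ sent to $\cC_p$. By \Cref{lem:exposed-laminar} they are disjoint or nested. They both contain $p$, so they are not disjoint, hence nested. Every exposed interval is also a proper subset of $V$, since it avoids $r$, as required of a request.

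\emph{Coverage by $L$.} Let $I$ be exposed while serving $\{s,t\}$. Then $I\subseteq V\setminus\{r\}$ and $|\{s,t\}\cap I|=1$. We only consider request sequences for which $L$ is feasible, so $L$ serves $\{s,t\}$. \Cref{lem:interval-covering} then gives $\delta_L(I)\neq\emptyset$.

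\emph{Main subtlety.} The chain property must hold across all requests, not just within one. Fortunately \Cref{lem:exposed-laminar} already covers exposed intervals from different requests, so no extra work is needed. Finally, the subinstance's input is exactly this family of intervals, arriving in some order, and rooted chain covering permits any arrival order. Hence the intervals form a valid input to $\cC_p$.
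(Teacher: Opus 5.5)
Your proposal is correct and follows the paper's proof essentially line for line. It obtains $p\in I$ from the pivot and nestedness from \Cref{lem:exposed-laminar}, since the intervals pairwise intersect. It gets properness from $r\notin I$ and coverage by $L$ from feasibility together with \Cref{lem:interval-covering}.
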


\begin{proof}
    By definition, every exposed interval sent to $\cC_p$ has pivot~$p$ and thus contains~$p$, so these intervals pairwise intersect. Hence, they form a chain by \Cref{lem:exposed-laminar}. Moreover, they avoid $r$ and are thus proper subsets of~$V$.
    Finally, every exposed interval separates the two vertices of its request, and $L$ serves this request, so $L$ covers the interval by \Cref{lem:interval-covering}.
\end{proof}

Finally, we argue that \Cref{alg:weighted} serves every request.
Consider an iteration of the while loop for a request $\{s,t\}$ with exposed interval $I$.
By \Cref{lem:exposed-chain}, $I$ is a valid request for $\cC_{p(I)}$.
Afterwards, the solution of $\cC_{p(I)}$ covers $I$, and all of its links have cost zero or were added to $F$, so $F$ covers $I$.
Since $F$ only grows, every iteration increases the set of covered intervals, so the while loop terminates, and then $s$ and $t$ are $3$-edge-connected.

It remains to choose the rooted chain covering algorithm that the subinstances run and to bound the cost of the resulting solution.

\subsection{A Primal-Dual Algorithm for Rooted Chain Covering}\label{sec:interval-chain}

In this subsection, we consider online rooted chain covering on its own and give a deterministic $O(\log n)$-competitive primal-dual algorithm for it.

Online rooted chain covering generalizes online WPAP.
Indeed, the minimum cuts of the path $v_1 - v_2 - \cdots - v_n$ are the suffixes $\{v_i,\ldots,v_n\}$ with $2\le i\le n$ and their complements.
Hence, by \Cref{lem:cut-covering}, for $p=v_n$, a WPAP request $\{v_a,v_b\}$ with $a<b$ can be simulated by requesting the suffixes for $i=a+1,\ldots,b$ one after another, which preserves feasible solutions and optimal costs.
Therefore, the lower bound of Naor, Umboh, and Williamson~\cite{NaorUW22} for WPAP shows that every deterministic algorithm for online rooted chain covering has competitive ratio $\Omega(\log n)$, so \Cref{thm:interval-chain-augmentation} is tight up to a constant factor.
For general~$p$, the requested intervals, ordered by inclusion, still form a path in which every link covers a subpath, since a link that covers two requested intervals $I\subseteq J$ has one endpoint in~$I$ and the other outside~$J$.
However, the intervals that contain~$p$ need not be nested, so unlike in WPAP, this path is revealed only online.

Throughout this subsection, let $\cI$ denote the set of requested intervals of the instance.
Whenever a requested interval arrives a second time, the current solution already covers it, so we may assume that all requested intervals are distinct.
Since a chain of distinct proper intervals that contain $p$ has at most $n-1$ members, we have $|\cI| <n$.
Recall that every link has cost $0$ or a power of $2$, and that a link of cost $2^j$ belongs to class~$j$.
In particular, all links of a class have the same cost.

Our algorithm is based on the following LP relaxation and its dual.
In the primal, a variable $x_\ell$ for every link $\ell \in L$ indicates whether $\ell$ is bought. 
The primal constraints ensure that every requested interval $I \in \cI$ is covered by at least one bought link.

\vspace{-0.5cm}
\begin{minipage}[t]{0.48\textwidth}
\centering 
\begin{alignat*}{3}
    \text{(LP)} \quad \text{minimize} \quad
    &\sum_{\ell \in L} c_\ell x_\ell \\
    \text{subject to} \quad
    &\sum_{\mathclap{\ell \in \delta_L(I)}} x_\ell \ge 1
        &\quad& \forall I \in \cI \\
        &x_\ell \geq0 && \forall \ell \in L
\end{alignat*}
\end{minipage}\hfill
\hfill
\begin{minipage}[t]{0.48\textwidth}
\centering 
\begin{alignat*}{3}
    \text{(DLP)} \quad \text{maximize} \quad
    &\sum_{I \in \cI} y_I \\
    \text{subject to} \quad
    &\sum_{\mathclap{I: \ell \in \delta_L(I)}} y_I \le c_\ell
        &\quad& \forall \ell \in L \\
        &y_I \geq0 && \forall I \in \cI
\end{alignat*}
\end{minipage}

\bigskip

Our algorithm increases dual variables.
When an uncovered interval $I$ arrives, it raises the dual variable $y_I$ until the dual constraint of some link covering $I$ becomes tight.
We call this link the \emph{trigger} of $I$.
For WPAP, Naor, Umboh, and Williamson~\cite{NaorUW22} buy the trigger after pruning the links relative to the known path such that every path edge is covered by at most two links per class.
Since our path of requested intervals is revealed only online, our algorithm does not prune and does not necessarily buy the trigger.
Instead, it buys at most four \emph{representatives} of the trigger's class (\Cref{lem:nested-covers}), which cover every requested interval that is covered by some link of this class covering~$I$.

\begin{lemma}[Representative Lemma]\label{lem:nested-covers}
    Let $I \subseteq V$ be an interval with $p \in I$ and let $A \subseteq \delta_L(I)$.
    There exists a set $D \subseteq A$ with $|D| \le 4$ such that every interval $J\subseteq V$ with $p\in J$ and with $I\subseteq J$ or $J\subseteq I$ that is covered by $A$ is also covered by~$D$.
\end{lemma}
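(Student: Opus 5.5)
Each link $\ell\in A$ covers $I$, so it has exactly one endpoint $a_\ell\in I$ (the inner endpoint) and one endpoint $b_\ell\notin I$ (the outer endpoint). The plan is to split $A$ according to the side of $p$ on which the outer endpoint lies. Write $I=\{v_\alpha,\dots,v_\beta\}$. Then $b_\ell\prec v_\alpha$ (left links, $A^-$) or $b_\ell\succ v_\beta$ (right links, $A^+$). For each of $A^-$ and $A^+$ we pick two representatives:
\begin{itemize}
\item a link whose inner endpoint is closest to $p$, that is, extremal towards $p$ in $\prec$ among inner endpoints;
\item a link whose outer endpoint is farthest from $p$, that is, the $\prec$-minimal $b_\ell$ in $A^-$ and the $\prec$-maximal $b_\ell$ in $A^+$.
\end{itemize}
This gives $|D|\le 4$. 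Since the inner endpoint of a left link may lie on either side of $p$, ``closest to $p$'' needs care. I plan to resolve this by choosing, within each of $A^-$ and $A^+$, the link whose inner endpoint $a_\ell$ is closest to $p$ in position, with a fixed tie rule. If a case below fails, I will instead take two ``closest'' links per outer side, one per side of $p$ for the inner endpoint, and check that one of them is redundant.

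Next, let $J\ni p$ be an interval nested with $I$ and covered by some $\ell\in A$. Since $p\in I\cap J$, the set $J$ is an interval around $p$.

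\emph{Case $J\supseteq I$.} Then $a_\ell\in I\subseteq J$, so $\ell$ covers $J$ exactly when $b_\ell\notin J$. Say $\ell\in A^-$. Then $b_\ell\prec v_\alpha$ lies outside $J$, left of $J$'s left end. The farthest-left outer endpoint $b_{\ell'}$ of the representative satisfies $b_{\ell'}\preceq b_\ell$, so $b_{\ell'}\notin J$. Its inner endpoint lies in $I\subseteq J$, so $\ell'$ covers $J$. The case $\ell\in A^+$ is symmetric.

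\emph{Case $J\subseteq I$.} Then $b_\ell\notin I\supseteq J$, so $\ell$ covers $J$ exactly when $a_\ell\in J$. Every link of $A$ has its outer endpoint outside $J$, so any link of $A$ whose inner endpoint lies in $J$ covers $J$. It therefore suffices that some representative has its inner endpoint in $J$ whenever some link of $A$ does. $J$ is an interval containing $p$, so it contains every vertex between $p$ and $a_\ell$.

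This is the main obstacle: a representative chosen as ``closest to $p$'' might lie on the opposite side of $p$ from $a_\ell$, at equal or smaller distance, yet outside $J$ because $J$ is asymmetric around $p$. The fix I would commit to is to choose the inner-closest representatives by the side of $p$ on which the inner endpoint lies, not by the outer side. Let $u^-$ be the link of $A$ whose inner endpoint is the $\prec$-maximal one among those $\preceq p$, and $u^+$ the link whose inner endpoint is the $\prec$-minimal one among those $\succeq p$. If $a_\ell\preceq p$, then $a_{u^-}$ lies between $a_\ell$ and $p$ and hence in $J$; the case $a_\ell\succeq p$ is symmetric.

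The outer side only matters in the case $J\supseteq I$. There the two outer-farthest links are chosen per outer side, which gives four links in total, $\{u^-,u^+,w^-,w^+\}$. So the final $D$ consists of the two inner-closest links (one per side of $p$) and the two outer-farthest links (one per side of $I$), and the two cases above verify the claim.
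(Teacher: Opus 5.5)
Your final construction (for each side of $p$, the link whose inner endpoint is closest to $p$, and for each side of $I$, the link whose outer endpoint is farthest from $p$) is exactly the paper's choice of $d_1,d_2,d_3,d_4$, and your two cases $J\subseteq I$ and $I\subseteq J$ match the paper's argument, so the proof is correct. You were right to drop your first plan of picking the inner-closest links per outer side: it fails when $J\subseteq I$ is asymmetric around $p$, for example when all links have their outer endpoint left of $I$ and the inner endpoint closest to $p$ lies just right of $p$ while $J$ extends only to the left.
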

\begin{figure}[tbp]
    \centering
    \begin{tikzpicture}[paper/diagram, y=0.85cm]
    \foreach \panel/\shift in {0/0,1/8} {
        \begin{scope}[xshift=\shift cm]
            \foreach \i in {1,...,21} {
                \coordinate (v\i) at ({0.32*(\i-1)},0);
            }
            \ifnum\panel=0
                \node[paper/label] at (3.2,-1.70) {(a) $J\subseteq I$};
                \def\jleft{9}\def\jright{13}
                \def\ilevel{-1.12}\def\jlevel{-0.72}
            \else
                \node[paper/label] at (3.2,-1.70) {(b) $I\subseteq J$};
                \def\jleft{4}\def\jright{18}
                \def\ilevel{-0.72}\def\jlevel{-1.12}
            \fi

            \draw[paper/edge] (v1) -- (v21);
            \foreach \u/\v/\height in {3/8/0.65,5/9/0.45,13/17/0.45,14/19/0.65} {
                \draw[paper/link] (v\u)
                    .. controls +(0,\height) and +(0,\height) .. (v\v);
            }
            \foreach \u/\v/\name/\case/\height in {
                4/10/d_1/0/1.20,12/18/d_2/0/1.20,
                1/7/d_3/1/1.95,15/21/d_4/1/1.95%
            } {
                \ifnum\panel=\case\relax
                    \def\linkcolour{paperBought}
                    \def\linkwidth{1.2pt}
                \else
                    \def\linkcolour{paperBought!40}
                    \def\linkwidth{0.7pt}
                \fi
                \draw[paper/bought link, draw=\linkcolour, line width=\linkwidth] (v\u)
                    .. controls +(0,\height) and +(0,\height) ..
                    node[paper/annotation, text=\linkcolour, midway, above=3pt] {$\name$} (v\v);
            }
            \foreach \i in {1,...,21} {
                \node[paper/vertex] at (v\i) {};
            }
            \node[paper/terminal] at (v11) {};
            \node[paper/label, anchor=base, inner sep=0pt] at (3.2,-0.42) {$p$};
            \draw[paper/pivot guide, draw=black!35] (3.2,-0.55) -- (3.2,-1.12);
            \ifnum\panel=0
                \foreach \i in {10,12} {
                    \node[paper/vertex, draw=paperBought, fill=paperBought, minimum size=5pt] at (v\i) {};
                }
            \else
                \foreach \i in {1,21} {
                    \node[paper/vertex, draw=paperBought, fill=paperBought, minimum size=5pt] at (v\i) {};
                }
            \fi

            \foreach \left/\right/\height/\colour/\name in {
                6/16/\ilevel/paperNeutral/I,\jleft/\jright/\jlevel/col2/J%
            } {
                \draw[paper/interval bar, draw=\colour]
                    ({0.32*(\left-1)-0.12},\height) -- ({0.32*(\right-1)+0.12},\height);
                \foreach \endpoint in {{0.32*(\left-1)-0.12},{0.32*(\right-1)+0.12}} {
                    \draw[paper/boundary guide] ({\endpoint},-0.08) -- ({\endpoint},\height);
                    \draw[paper/interval bar, draw=\colour]
                        ({\endpoint},{\height-0.07}) -- ({\endpoint},{\height+0.07});
                }
                \node[paper/label, text=\colour, anchor=west]
                    at ({0.32*(\right-1)+0.24},\height) {$\name$};
            }
        \end{scope}
    }
\end{tikzpicture}
    \caption{The representatives of \Cref{lem:nested-covers} for an interval $I$ with $p\in I$ and a set $A\subseteq\delta_L(I)$.
    (a) For $J\subseteq I$, the links $d_1$ and $d_2$, whose endpoints in $I$ are closest to $p$, cover $J$ whenever $A$ does.
    (b) For $I\subseteq J$, the links $d_3$ and $d_4$, whose endpoints outside $I$ are farthest from $p$, cover $J$ whenever $A$ does.}
    \label{fig:nested-covers}
\end{figure}
\begin{proof}
    If $A=\emptyset$, then $D=\emptyset$ satisfies the statement, so assume that $A \neq \emptyset$.
    We define the following at most four links.
    \begin{itemize}
        \item 
        Let $D_1$ be the set of all links $uv \in A$ whose endpoint $v \in I$ satisfies $v \preceq p$.
        If $D_1 \neq \emptyset$, let $d_1 \in D_1$ be a link whose endpoint in $I$ is closest to $p$ among all links in $D_1$.
        \item 
        Let $D_2$ be the set of all links $uv \in A$ whose endpoint $v \in I$ satisfies $p \preceq v$.
        If $D_2 \neq \emptyset$, let $d_2 \in D_2$ be a link whose endpoint in $I$ is closest to $p$ among all links in $D_2$.
        \item 
        Let $D_3$ be the set of all links $uv \in A$ whose endpoint $u \notin I$ satisfies $u \preceq p$.
        If $D_3 \neq \emptyset$, let $d_3 \in D_3$ be a link whose endpoint in $V \setminus I$ is farthest from $p$ among all links in $D_3$.
        \item 
        Let $D_4$ be the set of all links $uv \in A$ whose endpoint $u \notin I$ satisfies $p \preceq u$.
        If $D_4 \neq \emptyset$, let $d_4 \in D_4$ be a link whose endpoint in $V \setminus I$ is farthest from $p$ among all links in $D_4$.
    \end{itemize}   
    Since $A\neq\emptyset$, at least one of $d_1$ and $d_2$ exists, and at least one of $d_3$ and $d_4$ exists.
    Let $D$ be the set of those links among $d_1,d_2,d_3,d_4$ that exist. Thus, $1\le|D|\le4$.
    \Cref{fig:nested-covers} illustrates these representatives and the two cases below.

    We now show that $D$ satisfies the stated property.
    Let $J$ be an interval with $p \in J$ such that $I \subseteq J$ or $J \subseteq I$.
    Let $\ell \in A$ be a link that covers $J$. 
    Since $\ell \in A \subseteq \delta_L(I)$, it also covers $I$. Let $\ell=uv$ with $v \in I$ and $u \notin I$.
    We distinguish between two cases.
    
    \textbf{Case $J \subseteq I$.}
    Since $u \notin I \supseteq J$ and $\ell$ covers $J$, we have $v \in J$.
    By symmetry, we may assume that $v \preceq p$. Then $\ell \in D_1$, so $d_1$ exists, say $d_1 = a_1b_1$ with $a_1 \in I$ and $b_1 \notin I$.
    By the choice of $d_1$, we have $v \preceq a_1 \preceq p$.
    Since $J$ is an interval that contains $v$ and $p$, it also contains $a_1$. Moreover, $b_1 \notin I \supseteq J$. Hence, $d_1$ covers $J$.

    \textbf{Case $I \subseteq J$.}
    Then $v \in I \subseteq J$, and since $\ell$ covers $J$, we have $u \notin J$. 
    By symmetry, we may assume that $u \preceq p$. Then $\ell \in D_3$, so $d_3$ exists, say $d_3 = a_3b_3$ with $a_3 \in I$ and $b_3 \notin I$.
    By the choice of $d_3$, we have $b_3 \preceq u \preceq p$.
    Since $J$ is an interval that contains $p$ but not $u$, it does not contain $b_3$.
    As $a_3 \in I \subseteq J$, the link $d_3$ covers $J$.
\end{proof}

\begingroup
\LinesNotNumbered
\DontPrintSemicolon
\begin{algorithm}[t]
    \caption{Primal-dual algorithm for online rooted chain covering}\label{alg:interval-chain}
    \SetKwBlock{OnRequest}{Upon arrival of interval $I$}{}
    Initialize $F=\{\ell \in L \mid c_\ell = 0\}$ and $y_J=0$ for every interval $J\subsetneq V$ with $p\in J$.\;
    \OnRequest{
        \If{$I$ is not covered by $F$}{
            Continuously raise $y_I$, maintaining the dual constraints $\sum_{J : \ell \in \delta_L(J)} y_J \le c_\ell$ for all $\ell \in L$,
            until a constraint for a link in $\delta_L(I)$ is tight.\;
            Choose any such link $\ell$ as the trigger of $I$, and let $j$ be its class, that is, $c_\ell=2^j$.\;
            Apply \Cref{lem:nested-covers} to $A_j(I)=\{\ell'\in\delta_L(I)\mid c_{\ell'} = 2^j\}$ to obtain $D\subseteq A_j(I)$.\;
            Buy all links in $D$, that is, $F\gets F\cup D$.\;
        }
    }
\end{algorithm}
\endgroup

\Cref{alg:interval-chain} maintains a dual solution $y$ with $y_I \ge 0$ for all requests $I$ and a set $F$ of bought links.
When an uncovered interval~$I$ arrives, we have $\delta_L(I)\neq\emptyset$, since $L$ covers~$I$, so the dual constraint of some link in $\delta_L(I)$ eventually becomes tight. It may already be tight when $y_I=0$, in which case we choose this link without increasing $y_I$.
Since $F$ contains all zero-cost links and does not cover $I$, every link in $\delta_L(I)$ has positive cost, so the trigger belongs to a class.
The trigger $\ell$ belongs to $A_j(I)$, so $A_j(I)$ covers $I$, and \Cref{lem:nested-covers} with $J=I$ shows that $D$ also covers $I$.
Hence, \Cref{alg:interval-chain} serves every request.

\paragraph{Competitive analysis.}
The following lemma captures the effect of buying representatives and bounds the number of triggers of each class that cover a fixed requested interval.

\begin{lemma}\label{lem:few-triggers-per-class}
    For every class~$j$, every link of class~$j$ covers at most one request whose trigger belongs to class~$j$, and for every requested interval $K \in \cI$, at most two triggers of class~$j$ cover~$K$.
\end{lemma}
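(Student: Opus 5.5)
The first claim comes from the Representative Lemma together with the chain structure. The second claim then follows from the first by a counting argument over the inclusion order.

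\textbf{First claim.} Suppose a link $\ell'$ of class~$j$ covers two requests $I$ and $I'$ whose triggers both lie in class~$j$, where $I$ arrives first. When $I$ is processed, it is uncovered. Since $\ell'$ covers $I$ and has cost $2^j$, we have $\ell'\in A_j(I)$. The algorithm then buys the set $D\subseteq A_j(I)$ from \Cref{lem:nested-covers}.

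The later request $I'$ contains $p$, and since requests form a chain, $I\subseteq I'$ or $I'\subseteq I$. It is covered by $\ell'\in A_j(I)$, hence by $A_j(I)$. By \Cref{lem:nested-covers}, $D$ covers $I'$. Thus $I'$ is already covered by $F$ when it arrives, so it receives no trigger. This is a contradiction.

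The same argument gives something slightly stronger: after a request with a class-$j$ trigger is processed, every later request covered by a class-$j$ link that also covers that earlier request is covered on arrival. I will only need the stated form.

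\textbf{Second claim.} Fix $K\in\cI$ and let $\ell_1,\dots,\ell_m$ be the distinct class-$j$ triggers covering $K$. For each $i$, let $I_i$ be a request whose trigger is $\ell_i$. By the first claim, $\ell_i$ covers no request with a class-$j$ trigger other than $I_i$ itself. The requests $I_i$ are pairwise distinct: if $I_i=I_{i'}$, then $\ell_{i'}$, a class-$j$ link, would cover the class-$j$-triggered request $I_i$, and a request has only one trigger.

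The requests form a chain, so order them by inclusion. Each $I_i$ and $K$ are nested. Consider the requests $I_i$ on the side $I_i\subsetneq K$ and suppose there are two, $I_a\subsetneq I_b\subsetneq K$. The link $\ell_a$ covers both $I_a$ and $K$. A link covering nested intervals $I_a\subseteq K$ has one endpoint in $I_a$ and the other outside $K$, so it covers every interval between them. In particular $\ell_a$ covers $I_b$, which contradicts the first claim, since $I_b$ has a class-$j$ trigger and $I_b\neq I_a$.

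The symmetric argument applies to requests on the side $K\subseteq I_i$: for $K\subseteq I_b\subsetneq I_a$, the link $\ell_a$ covers $K$ and $I_a$, hence covers $I_b$. Here $K=I_i$ itself is included in this side. Hence at most one $I_i$ lies on each side, so $m\le 2$.

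The main thing to check is the betweenness fact (a link covering $I\subseteq J$ covers every interval $I\subseteq H\subseteq J$) and the edge case $K=I_i$. Both are immediate from the endpoint description already used in the paper.
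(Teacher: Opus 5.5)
Your proposal is correct and follows the paper's proof essentially step for step: the first part applies the Representative Lemma to $A_j(I)$ for the earlier request, and the second part uses the fact that a link covering nested intervals $I\subseteq J$ also covers every interval between them. Your ``at most one such request on each side of $K$'' formulation is the paper's pigeonhole over three triggers stated directly, and you place the case $K=I_i$ on the side $K\subseteq I_i$, where it is handled correctly.
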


\begin{proof}
    For the first part, suppose that a link $\ell$ of class $j$ covers two requests $I \neq I'$ whose triggers belong to class~$j$, where $I$ arrives before $I'$. Then $\ell \in A_j(I)$, so $A_j(I)$ covers $I'$. Since $I \subseteq I'$ or $I' \subseteq I$, and $p \in I'$, \Cref{lem:nested-covers} guarantees that the set $D$ bought for request $I$ also covers $I'$. Hence, $I'$ is already covered when it arrives and has no trigger, a contradiction.

    For the second part, suppose for the sake of contradiction that three distinct triggers of class $j$ cover $K$.
    Their requests are distinct, as each request has at most one trigger.
    Since the requested intervals form a chain, each of these three requests is contained in $K$ or contains $K$.
    Hence, two of them are both contained in $K$ or both contain $K$, say intervals $I, I'$ such that $I \subseteq I' \subseteq K$ or $K \subseteq I' \subseteq I$.
    The trigger of $I$ covers $I$ and $K$, so it has one endpoint in the smaller of these two intervals and the other outside the larger one.
    Thus, in both cases, it also covers $I'$, which lies between them, contradicting the first part.
\end{proof}

\begin{theorem}\label{thm:interval-chain-augmentation}
    \Cref{alg:interval-chain} is a deterministic algorithm for online rooted chain covering whose competitive ratio on instances with $n$ vertices is at most $8 \lceil \log_2 (2n) \rceil + 10$.
\end{theorem}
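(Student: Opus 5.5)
The plan is a primal-dual charging argument. I bound the algorithm's cost by four times the total cost of triggers, charge the trigger costs to the dual solution, and compare the dual to an arbitrary feasible solution~$F^*$ by weak duality. Links of cost zero contribute nothing, so only the representatives bought upon triggered requests matter. Each triggered request $I$ buys a set $D\subseteq A_j(I)$ with $|D|\le 4$. All links in $D$ have the trigger's cost $2^j$, so the algorithm's total cost is at most $4\sum_{I}c_{\ell_I}$, where the sum ranges over triggered requests and $\ell_I$ is the trigger of~$I$.

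First I check the dual side. The algorithm only raises $y_I$ while all dual constraints hold, so $y$ stays feasible for (DLP). Since $F^*$ is feasible, its indicator vector is feasible for (LP), and weak duality gives $\sum_{J\in\cI}y_J\le c(F^*)$. A trigger is tight at the moment it is chosen, and dual variables never decrease. So at the end $c_{\ell_I}=2^j\le\sum_{J:\,\ell_I\in\delta_L(J)}y_J\le c(F^*)$; the inequality covers later increases of other duals involving~$\ell_I$. In particular, no trigger class exceeds $\log_2 c(F^*)$. By the first part of \Cref{lem:few-triggers-per-class}, a link of class~$j$ is the trigger of at most one request, so the triggers are pairwise distinct links.

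Next I split the triggers into cheap and expensive ones at the threshold $c(F^*)/(2n)$. There are fewer than $n$ requests since $|\cI|<n$, so the cheap triggers, those with $2^j\le c(F^*)/(2n)$, have total cost at most $n\cdot c(F^*)/(2n)=c(F^*)/2$. After the factor~$4$ they contribute at most $2c(F^*)$.

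The expensive triggers are the main step. Their classes satisfy $c(F^*)/(2n)<2^j\le c(F^*)$, and there are at most $\lceil\log_2(2n)\rceil+1$ such integers~$j$. For these I use tightness to write
\[
\sum_{I\text{ expensive}}c_{\ell_I}\le\sum_{I\text{ expensive}}\ \sum_{J:\,\ell_I\in\delta_L(J)}y_J=\sum_{J\in\cI}y_J\cdot\bigl|\{I\text{ expensive}:\ell_I\in\delta_L(J)\}\bigr|.
\]
By the second part of \Cref{lem:few-triggers-per-class}, each $J$ is covered by at most two triggers per class. Hence the multiplicity is at most $2(\lceil\log_2(2n)\rceil+1)$, and this sum is at most $2(\lceil\log_2(2n)\rceil+1)\,c(F^*)$. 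Multiplying by~$4$ and adding the cheap part gives a total of at most $(8\lceil\log_2(2n)\rceil+8+2)\,c(F^*)$, which is the claimed ratio.

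The delicate point is the bounded number of relevant classes, which relies on the upper bound $c_{\ell_I}\le c(F^*)$ from the dual. Tightness must be used with the final dual values, which is valid because duals only grow and constraints stay satisfied. Determinism is immediate from the construction.
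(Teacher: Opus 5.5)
Your proposal is correct and follows the paper's proof essentially step by step: the factor~$4$ from the representatives, tightness of triggers under the final dual, a cheap/expensive split, and two triggers per class per interval via \Cref{lem:few-triggers-per-class}, arriving at the same constant $8\lceil\log_2(2n)\rceil+10$. The only difference is cosmetic: you place the threshold at $c(F^*)/(2n)$ instead of $Y/(2n)$ for the final dual objective $Y$. This works equally well because every trigger cost is at most $Y\le c(F^*)$.
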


\begin{proof}
    We charge the cost of the algorithm to the triggers and the cost of the triggers to the final dual objective.
    Triggers that are much cheaper than this objective cost little in total, and the others belong to $O(\log n)$ classes, so \Cref{lem:few-triggers-per-class} bounds their total cost by $O(\log n)$ times this objective.

    Let $\cT \subseteq L$ denote the set of all triggers chosen throughout the instance.
    Since every trigger covers its request, \Cref{lem:few-triggers-per-class} shows that every trigger $\ell \in \cT$ is the trigger of exactly one request.
    For this request, the algorithm buys at most four links of the same class as $\ell$ by \Cref{lem:nested-covers}, at a total cost of at most $4 c_\ell$.
    Thus, it suffices to bound $\sum_{\ell\in\cT} c_\ell$.

    Fix the final dual solution $y$ and let $Y = \sum_{I \in \cI} y_I$ be its objective value.
    The dual constraint of every trigger is tight when the trigger is chosen, and it remains tight until the end of the instance, since the dual variables never decrease and \Cref{alg:interval-chain} maintains all dual constraints.
    Hence, every trigger $\ell \in \cT$ satisfies
    \[
        c_\ell = \sum_{I : \ell \in \delta_{L}(I)} y_I \le Y \ .
    \]

    Let $\cT_{\ge}$ denote the set of triggers $\ell \in \cT$ with $c_\ell \ge Y/(2n)$.
    Since there are fewer than $n$ requests and each of them has at most one trigger, we have $|\cT| \le n$, so the remaining triggers have total cost $\sum_{\ell \in \cT \setminus \cT_{\ge}} c_\ell \le n\cdot Y/(2n) = Y/2$.
    Since the costs of triggers are powers of~$2$ and at most~$Y$, each trigger in $\cT_{\ge}$ belongs to one of the at most $\lceil \log_2 (2n) \rceil + 1$ classes whose cost lies between $Y/(2n)$ and~$Y$.
    By \Cref{lem:few-triggers-per-class}, for each requested interval $I \in \cI$, there are at most $2(\lceil \log_2 (2n) \rceil + 1)$ triggers in $\cT_\ge$ that cover $I$, that is, $|\delta_{\cT_\ge}(I)| \le 2(\lceil \log_2 (2n) \rceil + 1)$.
    Thus, summing the tight dual constraints of the links in $\cT_{\ge}$ gives
    \[
    \sum_{\ell \in \cT_{\ge}} c_\ell = \sum_{I\in\cI}  y_I |\delta_{\cT_\ge}(I)|
    \le 2(\lceil \log_2 (2n) \rceil + 1) \cdot Y  \ .
    \]
    Therefore, the total cost of \Cref{alg:interval-chain} satisfies
    \[
    c(F) \le 4 \biggl(\frac Y2 + \sum_{\ell \in \cT_{\ge}} c_\ell \biggr) \le 4 \biggl(\frac Y2 + 2(\lceil \log_2 (2n) \rceil + 1) \cdot Y \biggr) = (8\lceil \log_2 (2n) \rceil + 10) \cdot Y \ .
    \]
    Let $F^*$ be an optimal solution to the instance.
    Since $y$ is feasible for (DLP), weak duality gives $Y \le c(F^*)$.
    Hence, $c(F) \le (8\lceil \log_2 (2n) \rceil + 10)\cdot c(F^*)$.
\end{proof}

\subsection{Combining the Chains}\label{sec:log-squared}

We now return to WCycAP and bound how often a link of an optimal solution is charged across the chains. The two counting lemmas in this subsection depend only on the exposed intervals and their pivots, so they hold for any rooted chain covering algorithm used by \Cref{alg:weighted}.

Fix an instance and an optimal solution $F^*$ for it.
For $p \in V \setminus \{r\}$, we also use $\cC_p$ to denote the set of all intervals that \Cref{alg:weighted} sends to the subinstance $\cC_p$, and we let $F_p$ denote the final solution of this subinstance.
Let $F_p^*$ be the set of links of $F^*$ that cover some interval of~$\cC_p$.

Every interval of $\cC_p$ separates the two vertices of the request for which it was exposed. By \Cref{lem:interval-covering}, it is therefore covered by some link of $F^*$, which belongs to $F^*_p$.
Hence, $F^*_p$ is a feasible solution to $\cC_p$.
The next lemma bounds the number of these sets that contain a given link.

\begin{lemma}\label{lem:active-rooted-count}
    Every link $\ell\in F^*$ belongs to $F^*_p$ for at most $2\lceil\log_2 n\rceil$ vertices $p\in V\setminus\{r\}$.
\end{lemma}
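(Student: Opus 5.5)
Fix a link $\ell=uv\in F^*$ and suppose $\ell\in F^*_p$ for some $p\in V\setminus\{r\}$. By definition, $\ell$ covers some interval $I\in\cC_p$, and $p(I)=p$. I will show that $p$ is an ancestor in~$T$ (possibly equal) of an endpoint of~$\ell$ that lies in $V\setminus\{r\}$. Once this holds, the count is immediate. Every root-to-leaf path in $T$ has at most $\lceil\log_2 n\rceil$ vertices, so each endpoint has at most $\lceil\log_2 n\rceil$ ancestors (including itself). Hence the number of admissible $p$ is at most $2\lceil\log_2 n\rceil$.

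For the key step, since $\ell$ covers $I$, exactly one endpoint, say $v$, lies in $I$. Because $I\subseteq V\setminus\{r\}$, we have $v\ne r$, so $v$ is a vertex of~$T$. We showed earlier that the pivot $p(I)$ is an ancestor in $T$ of every other vertex of $I$. It is also trivially an "ancestor" of itself. So $p=p(I)$ is an ancestor-or-self of $v$. The endpoint $u$ is outside $I$ and might equal $r$; in that case only one endpoint contributes ancestors, which only helps the bound.

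The only subtlety is being careful that "ancestor" includes the vertex itself and that ancestors of a vertex lie on its root path, which has at most $\lceil\log_2 n\rceil$ vertices by the balance of $T$ established when $T$ was constructed. So the set of valid $p$ is contained in the union of the root-paths of $u$ and $v$ (those in $V\setminus\{r\}$), giving the bound. I expect no real obstacle: the argument is a direct consequence of the pivot property proved just before \Cref{alg:weighted}.
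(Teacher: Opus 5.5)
Your proposal is correct and follows essentially the same argument as the paper. In both, the covered interval $I\in\cC_p$ contains exactly one endpoint $x\neq r$, the pivot property makes $p$ equal to $x$ or an ancestor of $x$, and the bound follows from counting the at most $\lceil\log_2 n\rceil$ ancestors (counting the vertex itself) of each of the two endpoints.
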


\begin{proof}
Let $\ell=uv\in F^*$, and let $p\in V\setminus\{r\}$ with $\ell\in F^*_p$.
Then $\ell$ covers some interval $I \in \cC_p$, so $I$ contains exactly one endpoint $x\in\{u,v\}$ of~$\ell$, and $x\neq r$, as $I\subseteq V\setminus\{r\}$.
Since $p=p(I)$ is an ancestor in~$T$ of every other vertex of~$I$, the vertex $p$ is $x$ or an ancestor of~$x$ in~$T$.
Every vertex of $V\setminus\{r\}$ has at most $\lceil \log_2 n \rceil$ ancestors in~$T$, counting itself, and $x$ is $u$ or~$v$, so there are at most $2\lceil \log_2 n \rceil$ vertices $p$ with $\ell \in F^*_p$.
\end{proof}

\paragraph{An $O(\log^2n)$ guarantee.}
If every subinstance runs \Cref{alg:interval-chain}, then \Cref{alg:weighted} is an $O(\log^2 n)$-competitive deterministic algorithm for online WCycAP. Indeed, let $\alpha\in O(\log n)$ be the competitive ratio from \Cref{thm:interval-chain-augmentation}. Since each $F_p^*$ is feasible for $\cC_p$, every positive-cost link of~$F$ is bought in a subinstance, and \Cref{lem:active-rooted-count} bounds the multiplicity of each optimal link, we have
\begin{equation*}
c(F)\le\sum_{p\in V\setminus\{r\}}c(F_p)\le\alpha\sum_{p\in V\setminus\{r\}}c(F_p^*)\le 2\lceil\log_2 n\rceil\cdot\alpha\cdot c(F^*) = O(\log^2 n) \cdot c(F^*) \ .
\end{equation*}
This recovers the bound of Garg and Subramanian~\cite{GargS26}. To improve it, we distinguish the chains on which a link covers the largest requested interval from those on which it does not.

\paragraph{The key to $O(\log n)$.}
For $p \in V \setminus \{r\}$, let $M_p$ denote the largest interval in $\cC_p$ if $\cC_p$ is nonempty, and set $M_p=\emptyset$ otherwise.
Every link of $F^*$ that covers $M_p$ belongs to~$F^*_p$.
Hence, $F_p^*$ is the disjoint union of the set $R^*_p = \delta_{F^*}(M_p)$ of links that cover $M_p$ and the set $S^*_p = F^*_p \setminus R^*_p$ of remaining links.
A link of $S^*_p$ covers an interval of $\cC_p$, which is contained in $M_p$, but it does not cover $M_p$. Hence, both of its endpoints lie in $M_p$.
This partition follows the approach of Naor, Umboh, and Williamson~\cite{NaorUW22} for online WTAP, which decomposes the tree into paths such that every link plays the role of a link of $S^*_p$ on at most one path and of $R^*_p$ on $O(\log n)$ paths.
Since the laminar family of exposed intervals is not known upfront, the pivots take the place of this decomposition in our setting.

\begin{figure}[tbp]
    \centering
    \begin{tikzpicture}[paper/diagram, x=0.42cm, y=0.85cm]
    \foreach \i in {1,...,20} {
        \coordinate (v\i) at (\i,0);
    }

    \fill[col6!7] (7.6,-0.12) rectangle (12.4,-2.22);
    \draw[paper/boundary guide, draw=col6!45] (7.6,-0.08) -- (7.6,-2.22);
    \draw[paper/boundary guide, draw=col6!45] (12.4,-0.08) -- (12.4,-2.22);

    \draw[paper/edge] (v1) -- (v20);
    \draw[paper/edge, dotted] (v1) -- ++(-1.2,0) (v20) -- ++(1.2,0);

    \draw[paper/internal optimal link] (v9)
        .. controls +(0,0.75) and +(0,0.75) .. (v11);
    \node[paper/annotation, text=col4] at (10,0.78) {$\ell_1$};
    \draw[paper/rooted optimal link] (v12)
        .. controls +(0,0.85) and +(0,0.85) .. (v14);
    \draw[paper/internal optimal link, paper/secondary optimal link] (v12)
        .. controls +(0,1.05) and +(0,1.05) .. (v14);
    \node[paper/annotation] at (13,1.03) {$\ell_2$};
    \draw[paper/rooted optimal link] (v1)
        .. controls +(0,1.00) and +(0,1.00) .. (v8);
    \draw[paper/rooted optimal link, paper/secondary optimal link] (v1)
        .. controls +(0,1.20) and +(0,1.20) .. (v8);
    \draw[paper/rooted optimal link, paper/tertiary optimal link] (v1)
        .. controls +(0,1.40) and +(0,1.40) .. (v8);
    \node[paper/annotation] at (4.5,1.28) {$\ell_3$};

    \foreach \i in {1,...,20} {
        \node[paper/vertex] at (v\i) {};
    }

    \foreach \i/\name/\colour/\bottom in {10/p/col6/-1.15, 15/{p'}/col3/-1.50, 5/{p''}/col2/-1.85} {
        \node[paper/vertex, draw=\colour, fill=\colour, minimum size=5pt] at (v\i) {};
        \node[paper/label, text=\colour, anchor=base, inner sep=0pt] at (\i,-0.42) {$\name$};
        \draw[paper/pivot guide, draw=\colour] (\i,-0.55) -- (\i,\bottom);
        \node[paper/label, text=\colour, anchor=north] at (\i,-2.32) {$\cC_{\name}$};
    }

    \foreach \left/\right/\height/\colour/\width in {
        10/11/-0.80/col6/1.1pt, 8/12/-1.15/col6/1.7pt,
        14/16/-0.80/col3/1.1pt, 7/17/-1.50/col3/1.7pt,
        3/5/-0.80/col2/1.1pt, 2/19/-1.85/col2/1.7pt%
    } {
        \draw[paper/interval bar, draw=\colour, line width=\width]
            ({\left-0.32},\height) -- ({\right+0.32},\height)
            ({\left-0.32},{\height-0.07}) -- ({\left-0.32},{\height+0.07})
            ({\right+0.32},{\height-0.07}) -- ({\right+0.32},{\height+0.07});
    }
    \node[paper/label, text=col6, anchor=west] at (12.75,-1.15) {$M_p$};
    \node[paper/label, text=col3, anchor=west] at (17.75,-1.50) {$M_{p'}$};
    \node[paper/label, text=col2, anchor=west] at (19.75,-1.85) {$M_{p''}$};

    \foreach \y/\style/\name in {0.75/{}/p, 0.20/{paper/secondary optimal link}/{p'}} {
        \draw[paper/internal optimal link, \style] (23.0,\y) -- (24.2,\y);
        \node[paper/annotation, anchor=west] at (24.44,\y) {$S^*_{\name}$};
    }
    \foreach \y/\style/\name in {0.75/{}/p, 0.20/{paper/secondary optimal link}/{p'}, -0.35/{paper/tertiary optimal link}/{p''}} {
        \draw[paper/rooted optimal link, \style] (26.6,\y) -- (27.8,\y);
        \node[paper/annotation, anchor=west] at (28.04,\y) {$R^*_{\name}$};
    }
\end{tikzpicture}
    \caption{Three subinstances with nested largest intervals $M_p\subsetneq M_{p'}\subsetneq M_{p''}$ and three links of $F^*$.
    Each link $\ell$ is drawn once for every subinstance $\cC_x$ with $\ell\in F^*_x$, in black if it belongs to $R^*_x$ and in pink if it belongs to $S^*_x$.
    By \Cref{lem:inner-disjoint}, every link has at most one pink arc.
    }
    \label{fig:nested-subinstances}
\end{figure}

While a link $\ell \in F^*$ may belong to $R^*_p$ for up to $2\lceil\log_2 n\rceil$ vertices $p$ by \Cref{lem:active-rooted-count}, the next lemma shows that $\ell$ belongs to $S^*_p$ for at most one vertex $p$.
\Cref{fig:nested-subinstances} illustrates both kinds of links.
In this figure, every interval of $\cC_{p'}$ and $\cC_{p''}$ is disjoint from the shaded interval $M_p$ or contains it, so $\ell_1$, which has both endpoints in $M_p$, belongs to $F^*_x$ only for $x=p$.

\begin{lemma}\label{lem:inner-disjoint}
    The sets $(S^*_p)_{p \in V \setminus \{r\}}$ are pairwise disjoint.
\end{lemma}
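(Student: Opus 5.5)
We argue by contradiction: suppose a link $\ell=uv\in F^*$ lies in both $S^*_p$ and $S^*_q$ for distinct $p,q\in V\setminus\{r\}$. By definition of $S^*$, both endpoints $u,v$ lie in $M_p$ and in $M_q$. Moreover, $\ell$ covers some interval $I\in\cC_p$ and some interval $J\in\cC_q$, and it covers neither $M_p$ nor $M_q$. We will use the laminarity of all exposed intervals (\Cref{lem:exposed-laminar}) together with the pivot structure to derive a contradiction.

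First, $M_p\cap M_q\ni u$, so by laminarity $M_p$ and $M_q$ are nested. Say $M_p\subseteq M_q$; they are distinct because their pivots differ (a given interval has a unique pivot). Now consider $J\in\cC_q$. It contains $q$ and is laminar with $M_p$. Since $\ell$ covers $J$, the interval $J$ contains exactly one of $u,v$, and both lie in $M_p$. Hence $J\cap M_p\neq\emptyset$, so $J$ and $M_p$ are nested. If $M_p\subseteq J$, then $J$ would contain both $u$ and $v$, a contradiction. Therefore $J\subsetneq M_p$.

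So $J$ is contained in $M_p$ and has pivot $q$. This puts $q\in M_p$, and since $p=p(M_p)$ is an ancestor in~$T$ of every vertex of $M_p$, the vertex $q$ is a proper descendant of $p$. Symmetrically, $M_q\supseteq M_p\ni p$, so $p\in M_q$ and $p$ is a descendant of $q=p(M_q)$. This contradicts $p\neq q$. The case $M_q\subseteq M_p$ is symmetric, so the contradiction is already complete.

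The only delicate point is the nesting step: the endpoints of $\ell$ must be placed correctly so that the containment $M_p\subseteq J$ can be excluded. This works because both endpoints lie in $M_p$ while $J$ contains exactly one of them. I also need laminarity between intervals from different chains, but \Cref{lem:exposed-laminar} applies to all exposed intervals, so that is available.
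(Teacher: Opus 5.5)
Your proof is correct and essentially the paper's argument. It uses the same ingredients: laminarity of $M_p$ and $M_q$, laminarity of $J\in\cC_q$ with the smaller interval $M_p$, the fact that $\ell$ has both endpoints in $M_p$ but exactly one in $J$, and the pivot structure of $T$. The only difference is the order: the paper first shows $q\notin M_p$ by pivot uniqueness and then deduces $M_p\subseteq J$, whereas you first exclude $M_p\subseteq J$ via the endpoints, obtain $q\in J\subseteq M_p$, and then derive the pivot contradiction, which is just the contrapositive arrangement.
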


\begin{proof}
For the sake of contradiction, suppose that there exists a link $\ell \in S^*_p \cap S^*_{p'}$ with $p \neq p'$.
Then both endpoints of $\ell$ lie in $M_p \cap M_{p'}$, and $M_p \neq M_{p'}$, as otherwise $p = p(M_p) = p(M_{p'}) = p'$.
Since $M_p$ and $M_{p'}$ are distinct exposed intervals that intersect, \Cref{lem:exposed-laminar} implies that they are nested. By symmetry, we may assume that $M_p \subsetneq M_{p'}$.
Moreover, $p' \notin M_p$, since otherwise $p'$ would have minimum depth among the vertices of $M_{p'}\supseteq M_p$, and thus $p' = p(M_{p}) = p$.

Since $\ell \in S_{p'}^* \subseteq F_{p'}^*$, there exists an interval $J \in \cC_{p'}$ that is covered by $\ell$.
Both endpoints of $\ell$ lie in $M_p$, and so does its endpoint in $J$, so $M_p$ and $J$ intersect.
Moreover, $p' \in J$ and $p' \notin M_p$, so $J\not\subseteq M_p$. Hence, $M_p\subseteq J$ by \Cref{lem:exposed-laminar}.
But then both endpoints of $\ell$ lie in $J$, which contradicts that $\ell$ covers $J$.
\end{proof}

Consequently, a per-chain bound $c(F_p)\le O(c(R_p^*)+\log n\cdot c(S_p^*))$ would give an $O(\log n)$-competitive algorithm. Under this bound, each optimal link is charged a constant factor on at most $2\lceil\log_2 n\rceil$ chains and a logarithmic factor on at most one. The next subsection obtains this bound without knowing the largest intervals upfront.

\subsection{A Black-Box Conversion for Rooted Chain Covering}\label{sec:refined}

We now consider online rooted chain covering on its own, with root~$p\in V$, and let $F^*$ be any feasible solution, which need not be optimal.
Let $M_p$ be the largest requested interval, let $R^*=\delta_{F^*}(M_p)$ be the set of links of $F^*$ that cover it, and let $S^*=F^*\setminus R^*$.
A link $uv\in R^*$ with $u\in M_p$ covers a requested interval $J$ if and only if $u\in J$.
Since the requested intervals form a chain, a single link $\ell^* \in R^*$ covers every requested interval that $R^*$ covers, namely a link that covers the smallest such interval, and $S^*$ covers all other requested intervals.
The cost of $\ell^*$ is at most $c(R^*)$, but an online algorithm does not know this link upfront.

Naor, Umboh, and Williamson~\cite{NaorUW22} obtain a refined guarantee for rooted paths by treating links incident to the known root separately. Here, membership in~$R^*$ depends on the largest requested interval~$M_p$, which is known only at the end. We instead run a given algorithm in phases with geometrically growing budgets and buy representatives at each restart. Once a sufficiently large budget is exceeded, these representatives cover every later request that~$R^*$ covers, leaving only requests covered by~$S^*$ for the final phase. This gives the following conversion.

\begin{samepage}
\begin{theorem}\label{thm:refined-guarantee}
    Let $\cA$ be a deterministic $\rho(n)$-competitive algorithm for online rooted chain covering with $n$ vertices.
    Then there is a deterministic algorithm for online rooted chain covering that, for every instance with root~$p$, outputs a solution
    $F$ such that every feasible solution $F^*$ for the same instance satisfies
    $c(F) \le 48  (c(R^*) + \rho(n) \cdot c(S^*))$,
    where
    $M_p$ is the largest requested interval,
    $R^* = \delta_{F^*}(M_p)$, and $S^* = F^* \setminus R^*$. For an empty request sequence, set $R^*=\emptyset$ and $S^*=F^*$.
\end{theorem}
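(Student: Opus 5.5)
We run $\cA$ in phases with geometrically growing budgets, exactly as in the technical overview. We maintain a budget $B$, initially $0$, and a fresh copy of $\cA$ per phase. Zero-cost links are bought upfront.

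Upon a request $I$ that is already covered by $F$, we do nothing. Otherwise, we pass $I$ to the current copy of $\cA$. If the total cost bought by this copy, including the links it would buy for $I$, stays at most $B$, we buy those links. Otherwise the phase ends with $I$, and we do the following:
\begin{itemize}
\item Buy a cheapest link $\ell_I\in\delta_L(I)$.
\item Set the new budget $B'=\max\{2B,c_{\ell_I}\}$.
\item Let $M$ be the largest request of the ended phase, including $I$. Apply \Cref{lem:nested-covers} to $M$ with $A=\{\ell\in\delta_L(M)\mid c_\ell\le B'\}$, and buy the resulting at most four representatives.
\item Start a fresh copy of $\cA$ with budget $B'$.
\end{itemize}
The algorithm is feasible since every uncovered request is covered either by the copy or by $\ell_I$. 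A subtlety is that the copy only sees the uncovered requests. These still form a chain containing $p$ and are covered by $L$, so they are a valid input for $\cA$.

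\textbf{Cost per phase.} A phase with budget $B$ that ends with budget $B'$ costs at most $B+c_{\ell_I}+4B'\le 6B'$. The final phase costs at most its budget. Since budgets at least double, the total cost is $O(B_{\mathrm{final}})$: summing $6B'$ over a geometric sequence gives at most $12B_{\mathrm{final}}$, plus $B_{\mathrm{final}}$ for the last phase. The main task is therefore to show $B_{\mathrm{final}}\le 4(c(R^*)+\rho(n)c(S^*))$, or a similar bound, which yields the constant $48$.

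\textbf{Bounding the final budget.} Let $\Phi=c(R^*)+\rho(n)c(S^*)$, and consider a phase whose budget $B$ satisfies $B\ge\Phi$.

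First suppose no request passed to its copy is covered by $\ell^*$, the single link of $R^*$ covering all requests that $R^*$ covers, as described before the theorem. Then $S^*$ is feasible for the copy's input. Hence $\cA$ spends at most $\rho(n)c(S^*)\le B$, and the phase cannot end.

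So if this phase ends, some request $J$ in it is covered by $\ell^*$. Then $\ell^*\in\delta_L(M_p)$ covers $J\subseteq M$, and $M\subseteq M_p$. Since a link covering two nested intervals covers every interval between them, $\ell^*$ covers $M$. Moreover $c_{\ell^*}\le c(R^*)\le B\le B'$, so $\ell^*\in A$. By \Cref{lem:nested-covers}, the representatives cover every later request covered by $\ell^*$, since all requests contain $p$ and are nested with $M$. From then on, every uncovered request is covered by $S^*$, so the next phase, with budget $B'\ge\Phi$, never ends.

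Consequently, at most one phase ends with budget at least $\Phi$. The last budget below $\Phi$ is followed by a budget of at most $\max\{2\Phi,c_{\ell_I}\}$. Here $c_{\ell_I}\le\Phi$ holds because the request $I$ is covered by $F^*$, and every link of $F^*$ costs at most $\Phi$ since $\rho\ge1$. After at most one more doubling, $B_{\mathrm{final}}\le 4\Phi$.

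\textbf{Main obstacle.} The delicate step is the phase-ending argument. One must check that the copy's input within a single phase is covered by $S^*$ alone whenever no request of that phase is covered by $\ell^*$. One must also check that $\ell^*$ covers the largest request $M$ of the ended phase, and that it is eligible for $A$ at the new budget. Finally, the empty-sequence and budget-$0$ edge cases need care: a first phase with budget $0$ ends as soon as $\cA$ buys any positive-cost link.
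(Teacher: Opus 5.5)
Your proposal is correct and follows the paper's proof essentially step for step: the same phased algorithm with budgets $\max\{2B,c_{\ell_I}\}$ and representatives for $A_{\le B'}(M)$, and the same argument. A phase with budget at least $c(R^*)+\rho(n)\,c(S^*)$ can end only through a request covered by $\ell^*$, after which the representatives make the next phase final, so $B_{\mathrm{final}}\le 4(c(R^*)+\rho(n)\,c(S^*))$.

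The only slip is in the constant. Your per-phase bound $6B'$ plus a separate $B_{\mathrm{final}}$ for the last phase gives $13B_{\mathrm{final}}$, that is, $52$ rather than $48$. To reach $48$, either use $B\le B'/2$, so that each ended phase costs at most $5.5B'$, or charge each phase $\sigma_i$ with $i\ge2$ its restart purchases plus its own spending on $\cA$, as the paper does. That charge is at most $6B_{\sigma_i}$ per phase and sums to $12B_{\mathrm{final}}$.
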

\end{samepage}

The algorithm does not need to know $\rho(n)$, $F^*$, or~$M_p$.
The proof does not use that link costs are powers of~$2$, so the theorem holds for arbitrary nonnegative costs if $\cA$ is $\rho(n)$-competitive for such costs.
Moreover, the resulting algorithm runs in polynomial time if $\cA$ does, since it starts at most one new copy of~$\cA$ per request and its remaining steps only scan sets of links and requests.

Let $\cA$ be a deterministic $\rho(n)$-competitive algorithm for online rooted chain covering with $n$ vertices, for example \Cref{alg:interval-chain}, and let $\alpha:=\rho(n) \ge 1$.
\Cref{alg:refined-interval-chain} gives the details of our algorithm, where $\cost(\cA)$ denotes the total cost of the links bought by the current copy of~$\cA$.
We call the sequence $\sigma$ of requests sent to one copy $\cA_\sigma$ of $\cA$ a \emph{phase}, including the request that causes its restart, if any, and we call the value $B_\sigma$ of the variable $B$ during this phase its \emph{budget}.
During a phase, the algorithm runs a copy of $\cA$ on the requests that are not yet covered and buys the links that $\cA$ buys, as long as their total cost stays within the budget~$B$ of the phase.
When the links that $\cA$ buys for a request $I$ would raise $\cost(\cA)$ above~$B$, the algorithm discards them, buys a cheapest link $\ell_I$ covering~$I$ instead, which exists because $L$ covers~$I$, and restarts~$\cA$ with a new budget~$B'$.
Every feasible solution pays at least $c_{\ell_I}$ to cover~$I$, so we may set $B'$ to at least $c_{\ell_I}$, which lets the new budget pay for~$\ell_I$, and we also double~$B$ to make the budgets grow geometrically.
At the restart, the algorithm also buys representatives (\Cref{lem:nested-covers}) for the largest interval~$M$ of the phase among the links that cover~$M$ and cost at most~$B'$.
These representatives do not count toward the budget of the new phase.

\suppressfloats[t]
\begingroup
\LinesNotNumbered
\DontPrintSemicolon
\begin{algorithm}[t]
    \caption{Refined algorithm for online rooted chain covering}\label{alg:refined-interval-chain}
    \SetKwBlock{OnRequest}{Upon arrival of interval $I$}{}
    Initialize $F=\{\ell \in L \mid c_\ell = 0\}$, $B=0$, and a fresh copy of $\cA$ with $\cost(\cA)=0$.\;
    \OnRequest{
        \If{$I$ is not covered by $F$}{
            Send $I$ to $\cA$, which buys new links $P$ and serves $I$ in its own copy of the instance.\;
            Update $\cost(\cA)\gets \cost(\cA)+c(P)$.\;
            \eIf{$\cost(\cA)\le B$}{
                Buy the links in $P$, that is, $F\gets F\cup P$, to serve $I$. \;
            }{
                Let $M$ be the largest interval sent to $\cA$ since the last restart, including $I$.\;
                Buy a cheapest link $\ell_I\in\argmin_{\ell\in\delta_L(I)}c_\ell$, that is, $F\gets F\cup\{\ell_I\}$. \;
                Set $B':=\max\{2B,c_{\ell_I}\}$. \;
                Apply \Cref{lem:nested-covers} to $A_{\le B'}(M):=\{\ell\in\delta_L(M)\mid c_\ell\le B'\}$ to obtain $D\subseteq A_{\le B'}(M)$.\;
                Buy all links in $D$, that is, $F\gets F\cup D$.\;
                Start a fresh copy of $\cA$, set $\cost(\cA)\gets0$, and update $B\gets B'$.\;
            }
        }
    }
\end{algorithm}
\endgroup

\Cref{alg:refined-interval-chain} serves every request.
If $\cost(\cA)\le B$, then $F$ contains every link of the current copy of~$\cA$, so $F$ covers~$I$, and otherwise $\ell_I$ covers~$I$.

\paragraph{Competitive analysis.}
A phase ends with a restart when $\cost(\cA)>B$.
Every phase except the last one ends with a restart, and we count the last copy of~$\cA$ as the final phase even if it receives no requests.
For a phase~$\sigma$, we let $\cost(\cA_\sigma)$ denote the total cost of the links that $\cA_\sigma$ buys during~$\sigma$, including the links discarded at its restart.

\begin{lemma}\label{lem:final-phase}
    If a phase $\sigma$ with budget $B_\sigma \ge c(R^*) + \alpha \cdot c(S^*)$ ends with a restart, the next phase is the final phase.
\end{lemma}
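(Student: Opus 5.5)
The plan is to show two things. First, the phase $\sigma$ can only end because some request of $\sigma$ is covered by $\ell^*$. Second, the representatives bought at that restart then cover every later request that $R^*$ covers. The next phase therefore only ever sends $\cA$ requests covered by $S^*$, and its cost stays within its budget.

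\textbf{Step 1: a request of $\sigma$ is covered by $R^*$.} Suppose not. Then every request sent to $\cA_\sigma$, including the one causing the restart, is covered by $S^*$. These requests form a subchain of the original chain, and $L$ covers them, so they form a valid instance of rooted chain covering on which $S^*$ is feasible. Since $\cA$ is $\alpha$-competitive, $\cost(\cA_\sigma)\le \alpha\, c(S^*)\le B_\sigma$. This contradicts the restart.

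Recall that a single link $\ell^*\in R^*$ covers every requested interval covered by $R^*$. Hence some request $I'\in\sigma$ is covered by $\ell^*$. Every link in $R^*$ covers $M_p$, so it has one endpoint $u\in M_p$, and it covers a requested $J$ exactly when $u\in J$. Let $M$ be the largest interval of $\sigma$. Then $I'\subseteq M\subseteq M_p$, so $u\in I'\subseteq M$, and therefore $\ell^*$ covers $M$.

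\textbf{Step 2: the representatives take over the role of $\ell^*$.} We have $c_{\ell^*}\le c(R^*)\le B_\sigma\le B'$, so $\ell^*\in A_{\le B'}(M)$. Every later request $J$ contains $p$ and is nested with $M$, since the requests form a chain. By \Cref{lem:nested-covers}, if $\ell^*$ covers $J$, then so does the set $D$ bought at this restart. Consequently, every request that arrives later and is not yet covered by $F$ is not covered by $\ell^*$, hence not by $R^*$, and so it is covered by $S^*$.

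\textbf{Step 3: the next phase never restarts.} Let $\sigma'$ be the next phase. Its budget is $B_{\sigma'}=B'\ge 2B_\sigma\ge B_\sigma\ge \alpha\,c(S^*)$. All requests sent to $\cA_{\sigma'}$ are covered by $S^*$ by Step 2, so at every moment $\cost(\cA_{\sigma'})\le \alpha\,c(S^*)\le B_{\sigma'}$. This uses that the cost of an online algorithm on a prefix of a request sequence is at most $\alpha$ times the cost of any solution feasible for that prefix, and $S^*$ is such a solution. Thus $\sigma'$ never triggers a restart and is the final phase.

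\textbf{Expected obstacle.} The delicate point is applying competitiveness to a prefix and to a sub-instance: the chain of requests actually sent to $\cA$ (only the uncovered ones), measured against $S^*$ restricted to those requests. This needs that $\cA$'s guarantee holds after every request, i.e.\ for every request sequence, including truncated ones. That holds because the definition quantifies over all instances, and a deterministic algorithm behaves identically on a prefix. A second point to check is the degenerate case $B_\sigma=0$. There $c(R^*)=c(S^*)=0$, so $\ell^*$ has cost zero and was bought upfront; Step 1 then already yields a contradiction, so no restart happens at all.
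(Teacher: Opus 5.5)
Your proof is correct and follows the paper's argument essentially step for step. You use the $\alpha$-competitiveness of $\cA$ against $S^*$ to force a request of $\sigma$ that $\ell^*$ covers, deduce $\ell^*\in A_{\le B'}(M)$, and apply the Representative Lemma so that only $S^*$-covered requests reach the next copy of $\cA$, whose cost then stays within its budget; your separate treatment of $B_\sigma=0$ is harmless but unnecessary, since the general argument already covers it.
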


\begin{proof}
    Since $F^*$ is a feasible solution, it covers $M_p$, and thus $R^* \neq \emptyset$.
    Let $\ell^* =uv \in R^*$, with $u \in M_p$, be a link covering the smallest requested interval covered by $R^*$.
    As observed above, $\ell^*$ covers every requested interval that $R^*$ covers, and $S^*$ covers all other requested intervals.

    Now consider phase $\sigma$ and its copy $\cA_\sigma$ of~$\cA$.
    If no requested interval in $\sigma$ is covered by $\ell^*$, then $S^*$ covers every request in $\sigma$, that is, $S^*$ is a feasible solution to $\sigma$.
    Since $\cA$ is deterministic, its guarantee applies to $\sigma$, although $\sigma$ depends on earlier decisions of our algorithm.
    Hence,
    \[
    \cost(\cA_\sigma) \le \alpha \cdot c(S^*) \le c(R^*) + \alpha \cdot c(S^*) \le B_\sigma \ ,
    \]
    where the final inequality uses the assumption of the lemma.
    This contradicts the assumption that $\sigma$ ends with a restart, which requires $\cost(\cA_\sigma) > B_\sigma$.
    Thus, there exists an interval $J \in \sigma$ that is covered by $\ell^*$.

    Let $M_\sigma$ denote the largest interval of $\sigma$, which is the interval $M$ chosen at the restart, and let $\sigma'$ be the phase after~$\sigma$, which exists since $\sigma$ ends with a restart.
    Since $J$ is covered by $\ell^*$ and $J \subseteq M_\sigma \subseteq M_p$, we have $u \in J \subseteq M_\sigma$ and $v \notin M_p \supseteq M_\sigma$.
    Hence, $\ell^*$ also covers $M_\sigma$.
    Moreover, as $\ell^* \in R^*$, we have
    \[
    c_{\ell^*} \le c(R^*) \le c(R^*) + \alpha \cdot c(S^*) \le B_\sigma \le B_{\sigma'} \ ,
    \]
    where the last inequality holds because the budget $B_{\sigma'}$ of~$\sigma'$ is the value of~$B'$ computed at the end of~$\sigma$, which is at least $2B_\sigma$.
    Thus, $\ell^* \in A_{\le B_{\sigma'}}(M_\sigma)$.
    Let $D$ be the set of representatives obtained from $A_{\le B_{\sigma'}}(M_\sigma)$ when phase $\sigma$ ends.
    \Cref{fig:final-phase} illustrates how these representatives ensure that no interval that $R^*$ covers reaches the next copy of~$\cA$.

    Every requested interval $K$ that arrives after phase~$\sigma$ contains~$p$, and $K \subseteq M_\sigma$ or $M_\sigma \subseteq K$.
    Hence, if $\ell^*$ covers~$K$, then so does $D$ by \Cref{lem:nested-covers}, since $\ell^* \in A_{\le B_{\sigma'}}(M_\sigma)$.
    As $D\subseteq F$ from the restart on, no interval sent to $\cA_{\sigma'}$ is covered by~$\ell^*$, and hence none is covered by~$R^*$.
    Since $F^*$ is feasible, $S^*$ covers every interval sent to $\cA_{\sigma'}$, so
    \[
    \cost(\cA_{\sigma'}) \le \alpha \cdot c(S^*) \le c(R^*) + \alpha \cdot c(S^*) \le B_\sigma \le B_{\sigma'} \ .
    \]
    Hence, the restart condition never holds in~$\sigma'$, so $\sigma'$ is the final phase.
\end{proof}

\begin{figure}[tbp]
    \centering
    \begin{tikzpicture}[paper/diagram, y=0.85cm]
    \foreach \panel/\shift in {0/0,1/8} {
        \begin{scope}[xshift=\shift cm]
            \foreach \i in {1,...,25} {
                \coordinate (v\i) at ({0.25*(\i-1)},0);
            }
            \ifnum\panel=0
                \node[paper/label] at (3,-3.20) {(a) Coverage by $F^*$};
                \def\comparisonopacity{1}
            \else
                \node[paper/label] at (3,-3.20) {(b) Coverage after buying $D$};
                \def\comparisonopacity{0.4}
            \fi

            \draw[paper/edge] (v1) -- (v25);
            \draw[paper/internal optimal link, draw opacity=\comparisonopacity] (v9)
                .. controls +(0,0.42) and +(0,0.42) ..
                node[paper/annotation, text=col4, text opacity=\comparisonopacity,
                    midway, above=3pt] {$S^*$} (v12);
            \draw[paper/rooted optimal link,
                line width=0.7pt, draw opacity=\comparisonopacity] (v16)
                .. controls +(0,0.99) and +(0,0.99) .. (v22);
            \draw[paper/rooted optimal link, draw opacity=\comparisonopacity] (v13)
                .. controls +(0,1.41) and +(0,1.41) ..
                node[paper/annotation, text opacity=\comparisonopacity,
                    midway, above=3pt] {$\ell^*\in R^*$} (v24);
            \ifnum\panel=1
                \foreach \u/\v/\height/\name in {12/20/0.74/d_2,15/25/0.44/d_4} {
                    \draw[paper/bought link] (v\u)
                        .. controls +(0,\height) and +(0,\height) ..
                        node[paper/annotation, text=paperBought, fill=white, midway, above=3pt] {$\name$} (v\v);
                }
            \fi
            \foreach \i in {1,...,25} {
                \node[paper/vertex] at (v\i) {};
            }
            \node[paper/terminal] at (v9) {};
            \foreach \i/\name in {9/p,13/u,24/v} {
                \node[paper/label, anchor=base, inner sep=0pt]
                    at ({0.25*(\i-1)},-0.42) {$\name$};
            }
            \draw[paper/pivot guide, draw=black!35] (2,-0.55) -- (2,-2.60);
            \foreach \left/\right/\depth in {6/18/-2.10,4/21/-2.60} {
                \foreach \endpoint in {{0.25*(\left-1)-0.10},{0.25*(\right-1)+0.10}} {
                    \draw[paper/boundary guide] ({\endpoint},-0.08) -- ({\endpoint},\depth);
                }
            }

            \foreach \left/\right/\height/\rooted in {
                9/9/-0.78/0,8/10/-1.03/0,7/11/-1.28/0,
                7/14/-1.60/1,6/16/-1.85/1,6/18/-2.10/1,
                5/20/-2.35/1,4/21/-2.60/1%
            } {
                \ifnum\rooted=1
                    \def\intervalcolour{col2}
                \else
                    \def\intervalcolour{paperNeutral}
                \fi
                \def\barwidth{1.1pt}
                \ifnum\right=18 \def\barwidth{1.7pt}\fi
                \ifnum\right=21 \def\barwidth{1.7pt}\fi
                \draw[paper/interval bar, draw=\intervalcolour, line width=\barwidth]
                    ({0.25*(\left-1)-0.10},\height) -- ({0.25*(\right-1)+0.10},\height)
                    ({0.25*(\left-1)-0.10},{\height-0.07}) -- ({0.25*(\left-1)-0.10},{\height+0.07})
                    ({0.25*(\right-1)+0.10},{\height-0.07}) -- ({0.25*(\right-1)+0.10},{\height+0.07});
            }
            \node[paper/label, text=col2, anchor=west] at (4.49,-2.10) {$M_\sigma$};
            \node[paper/label, text=col2, anchor=west] at (5.24,-2.60) {$M_p$};

            \draw[paper/edge, draw=black!35]
                (2.73,-0.70) -- (2.83,-0.70) -- (2.83,-1.36) -- (2.73,-1.36);
            \draw[paper/edge, draw=col2]
                (6.04,-1.52) -- (6.14,-1.52) -- (6.14,-2.68) -- (6.04,-2.68);
            \ifnum\panel=0
                \node[paper/annotation, text=col4, anchor=west, align=left, fill=white] at (2.97,-1.03)
                    {covered\\by $S^*$};
                \node[paper/annotation, anchor=west, align=left] at (6.28,-2.10)
                    {covered\\by $R^*$};
            \else
                \node[paper/annotation, anchor=west, align=left, fill=white] at (2.97,-1.03)
                    {may be sent\\to $\cA_{\sigma'}$};
                \node[paper/annotation, text=paperBought, anchor=west, align=left] at (6.28,-2.10)
                    {covered\\by $D$};
            \fi
        \end{scope}
    }
    \draw[paper/transition] (6.75,0.25) -- (7.40,0.25);
\end{tikzpicture}
    \caption{A restart after phase $\sigma$ with $B_\sigma \ge c(R^*) + \alpha \cdot c(S^*)$ (\Cref{lem:final-phase}), with intervals ordered by containment.
    (a) The black links form $R^*$ and the pink link forms $S^*$.
    The link $\ell^*$ covers every request that $R^*$ covers (orange), and $S^*$ covers the others (gray).
    (b) The blue representatives $D=\{d_2,d_4\}$ for $M_\sigma$, chosen among the links that cover $M_\sigma$ and cost at most the budget $B_{\sigma'}$ of the next phase $\sigma'$, cover all orange requests, so only gray requests can reach $\cA_{\sigma'}$.}
    \label{fig:final-phase}
\end{figure}

\begin{proof}[Proof of \Cref{thm:refined-guarantee}]
    If no restart occurs, the algorithm incurs no cost, and the claim is immediate.
    Let $B_{\mathrm{fin}}$ be the budget of the final phase.
    If every phase $\sigma$ satisfies $B_\sigma \le c(R^*) + \alpha \cdot c(S^*)$, then $B_{\mathrm{fin}} \le c(R^*) + \alpha \cdot c(S^*)$.

    Otherwise, let $\sigma$ be the first phase with $B_\sigma > c(R^*) + \alpha \cdot c(S^*)$.
    Since the first phase has budget~$0$, the phase $\sigma$ is not the first one.
    Let $\tau$ be the phase before~$\sigma$, so that $B_{\tau} \le c(R^*) + \alpha \cdot c(S^*)$ by the choice of $\sigma$, and let $I'$ denote the final request of~$\tau$, which causes the restart.
    Since $F^*$ is a feasible solution, we have $c_{\ell_{I'}} = \min_{\ell \in \delta_L(I')} c_\ell \le c(F^*) \le c(R^*) + \alpha \cdot c(S^*)$.
    Thus,
    \[
        B_\sigma = \max\{2B_{\tau}, c_{\ell_{I'}}\} \le 2(c(R^*) + \alpha \cdot c(S^*)) \ .
    \]
    By \Cref{lem:final-phase}, there is at most one phase after phase $\sigma$.
    If $\sigma$ is the final phase, then ${B_{\mathrm{fin}} = B_\sigma} \le 2(c(R^*) + \alpha \cdot c(S^*))$.
    Otherwise, let $I$ be the final requested interval of phase $\sigma$. As above, we have $c_{\ell_I} \le c(F^*) \le c(R^*) + \alpha \cdot c(S^*)$, and thus
    \[
        B_{\mathrm{fin}} = \max\{2 B_{\sigma}, c_{\ell_I}\} \le 4(c(R^*) + \alpha \cdot c(S^*)) \ .
    \]
    In summary, the final phase has budget $B_{\mathrm{fin}} \le 4(c(R^*) + \alpha \cdot c(S^*))$ in any case.

    Let $\sigma_1,\ldots,\sigma_h$ be the phases in order of their occurrence.
    The first phase has budget~$0$ and thus incurs no cost before its restart.
    Every later phase has positive budget, since a request $I$ that causes a restart is not covered by the zero-cost links in $F$, and hence $c_{\ell_I}>0$.
    For $i\ge2$, the phase $\sigma_i$ starts with the restart at the end of $\sigma_{i-1}$, where we buy a cheapest covering link and the set $D$ of representatives, which contains at most four links by \Cref{lem:nested-covers}.
    These purchases cost at most $B_{\sigma_i} + 4B_{\sigma_i} = 5B_{\sigma_i}$.
    During $\sigma_i$, the links bought from $\cA$ cost at most $B_{\sigma_i}$ in total.
    Hence, we charge a cost of at most $6B_{\sigma_i}$ to $\sigma_i$, and this accounts for all costs, since a new phase starts after every restart.
    Since $B_{\sigma_{i+1}} \ge 2 B_{\sigma_{i}}$ for all $i \in \{2,\ldots,h-1\}$ and $B_{\sigma_h}=B_{\mathrm{fin}}$, we have
    \[
        c(F) \le 6 \sum_{i=2}^h B_{\sigma_i} \le 6 \cdot 2 \cdot B_{\mathrm{fin}} \le 48 \cdot (c(R^*) + \alpha \cdot c(S^*)) \ . \qedhere
    \]
\end{proof}

\subsection{The Weighted Competitive Ratio}\label{sec:log}

In every subinstance of \Cref{alg:weighted}, we run the conversion of \Cref{alg:refined-interval-chain} with \Cref{alg:interval-chain} as its subroutine~$\cA$. We combine its guarantee with the two counting lemmas from \Cref{sec:log-squared}.

\begin{proof}[Proof of \Cref{thm:main-weighted}]
Fix an instance and an optimal solution~$F^*$, and use the notation of \Cref{sec:log-squared}. By \Cref{thm:interval-chain-augmentation}, \Cref{alg:interval-chain} is $\alpha$-competitive with $\alpha\in O(\log n)$.
For every $p \in V \setminus \{r\}$, the largest interval of $\cC_p$ is $M_p$, so the sets $R^*_p$ and $S^*_p$ are exactly the sets $R^*$ and $S^*$ of \Cref{thm:refined-guarantee} for the feasible solution $F^*_p$ of $\cC_p$.
Therefore, \Cref{thm:refined-guarantee} implies that $c(F_p) \le 48(c(R_p^*) + \alpha \cdot c(S_p^*))$.

By \Cref{lem:active-rooted-count}, every link $\ell \in F^*$ belongs to $R^*_p\subseteq F^*_p$ for at most $2 \lceil \log_2 n \rceil$ vertices $p \in V \setminus \{r\}$.
Thus, $\sum_{p \in V \setminus \{r\}} c(R_p^*) \le 2 \lceil \log_2 n \rceil \cdot c(F^*)$.
Since the sets $S^*_p\subseteq F^*$ are pairwise disjoint by \Cref{lem:inner-disjoint}, we have $\sum_{p \in V \setminus \{r\}} c(S_p^*) \le c(F^*)$.

Altogether, we have
\begin{align*}
c(F)\le\sum_{p\in V\setminus\{r\}}c(F_p)
\le 48\left(\sum_{p\in V\setminus\{r\}}c(R_p^*)+\alpha\sum_{p\in V\setminus\{r\}}c(S_p^*)\right) &\le 48\bigl(2\lceil\log_2 n\rceil+\alpha\bigr)c(F^*) \\*
&=O(\log n)\cdot c(F^*) \ .
\end{align*}
Rounding the link costs incurs only another factor of~$2$, so the same asymptotic bound holds for arbitrary nonnegative costs.\footnote{Explicitly, \Cref{thm:interval-chain-augmentation} gives $\alpha=8\lceil\log_2 n\rceil+18$, and hence a ratio of $480\lceil\log_2 n\rceil+864$ for rounded costs and $960\lceil\log_2 n\rceil+1728$ for arbitrary costs.}
Finally, \Cref{alg:weighted} runs in polynomial time.
Every iteration of its while loop covers the interval that it exposes, so no interval is exposed twice.
Hence, the exposed intervals form a laminar family of distinct nonempty subsets of the $(n-1)$-element set $V\setminus\{r\}$ by \Cref{lem:exposed-laminar}, and there are at most $2(n-1)-1=2n-3$ iterations in total.
Every iteration takes polynomial time, since an exposed interval and its pivot can be found by enumerating all intervals, \Cref{alg:interval-chain} raises $y_I$ by the smallest slack of a dual constraint of a link in $\delta_L(I)$, and the remaining steps of \Cref{alg:refined-interval-chain}, including the computation of the representatives of \Cref{lem:nested-covers}, only scan sets of links.
\end{proof}

Applying \Cref{prop:connectivity-to-cycle} gives a deterministic polynomial-time $O(\log n)$-competitive algorithm for online WCAP, proving \Cref{thm:main-weighted-connectivity}.

\section{Online Cycle Augmentation}\label{sec:unweighted}

In this section, we prove \Cref{thm:main-unweighted,thm:main-unweighted-lower-bound}.
Garg and Subramanian~\cite{GargS26} reduce online CycAP to online Steiner forest in a graph whose Steiner nodes are the links, adjacent when they cross, following offline reductions of CAP to Steiner tree~\cite{BasavarajuFGMRS14,ByrkaGJ23}, and obtain an $O(\log n)$-competitive algorithm by using the online Steiner forest algorithm of Berman and Coulston~\cite{BermanC97}.
We use the same crossing structure, the overlap graph, but replace the Steiner forest algorithm.

\subsection{The Overlap Graph}\label{sec:overlap}

Fix an instance of online CycAP with cycle $C=(V,E)$ on $n$ vertices and link set~$L$.
All links have unit cost, so the cost of a set of links is its cardinality.
We assume that $C = v_1 - v_2 - \cdots - v_n - v_1$ and take indices modulo~$n$.
Two distinct links $\ell,\ell' \in L$ \emph{cross} if they share at least one endpoint or if they have four distinct endpoints that alternate along~$C$.

\begin{definition}[Overlap Graph]
	The \emph{overlap graph} of a CycAP instance with cycle $C$ and links $L$ is the graph $X$ on~$L$ in which two distinct links are adjacent if and only if they cross.
\end{definition}

\begin{figure}[tbp]
    \centering
    \begin{tikzpicture}[
    paper/diagram, scale=0.6,
    paper/link vertex/.style={paper/vertex, minimum size=6pt},
    paper/link name/.style={paper/edge label, text=black!55},
    paper/bought name/.style={paper/edge label, text=paperBought}
]
    \def\cycleradius{2.6}
    \def\cycleangles{90,50,22,-6,-34,-62,-90,-130,-155,180,150,120}
    \def\linklabels{%
        \node[paper/link name] at (1.625,1.340) {$\ell_5$};
        \node[paper/link name] at (-1.525,-1.615) {$\ell_6$};
        \node[paper/link name] at (-1.810,0.785) {$\ell_7$};
        \node[paper/link name] at (0.455,-2.035) {$\ell_8$};
        \node[paper/link name] at (-2.020,0.110) {$\ell_9$};
        \node[paper/bought name] at (1.475,0.500) {$\ell_1$};
        \node[paper/bought name] at (1.760,0.035) {$\ell_2$};
        \node[paper/bought name] at (1.130,-1.345) {$\ell_3$};
    }
    \begin{scope}[xshift=0cm]
        \foreach \a [count=\i] in \cycleangles {
            \coordinate (v\i) at (\a:\cycleradius);
        }
        \draw[paper/edge] (0,0) circle[radius=\cycleradius];
        \node[paper/label, text=black!65] at (0,0.25) {$C$};

        \foreach \u/\v in {2/4,8/10,10/12,5/7,9/11} {
            \draw[paper/link] (v\u) -- (v\v);
        }
        \foreach \u/\v in {1/4,3/6,5/8,7/10} {
            \draw[paper/bought link] (v\u) -- (v\v);
        }
        \linklabels
        \node[paper/bought name] at (-0.820,-1.330) {$\ell_4$};

        \foreach \i in {1,...,12} {
            \node[paper/vertex] at (v\i) {};
        }
        \foreach \i in {1,10} {
            \node[paper/terminal] at (v\i) {};
        }
        \node[paper/label] at (90:\cycleradius+0.36) {$s$};
        \node[paper/label] at (180:\cycleradius+0.50) {$t$};
        \node[paper/label] at (0,-\cycleradius-0.50) {(a) Cycle and links};
    \end{scope}

    \begin{scope}[xshift=6.2cm]
        \coordinate (l1) at (-2.50,0.30);
        \coordinate (l2) at (-0.80,-0.50);
        \coordinate (l3) at (0.90,0.30);
        \coordinate (l4) at (2.60,-0.50);
        \coordinate (l5) at (-2.50,-1.30);
        \coordinate (l6) at (2.60,1.60);
        \coordinate (l7) at (4.70,0.40);
        \coordinate (l8) at (0.90,-1.60);
        \coordinate (l9) at (3.05,0.30);

        \foreach \u/\v in {
            l5/l1,l5/l2,l6/l3,l6/l4,l7/l4,l6/l7,l8/l2,l8/l3,l8/l4,l9/l6,l9/l7,l9/l4%
        } {
            \draw[paper/edge, draw=black!40] (\u) -- (\v);
        }
        \draw[paper/bought link] (l1) -- (l2) -- (l3) -- (l4);

        \foreach \i in {1,...,4} {
            \node[paper/link vertex, draw=paperBought, fill=paperBought] at (l\i) {};
        }
        \foreach \i in {1,3} {
            \node[paper/label, text=paperBought, above=5pt] at (l\i) {$\ell_{\i}$};
        }
        \foreach \i in {2,4} {
            \node[paper/label, text=paperBought, below=5pt] at (l\i) {$\ell_{\i}$};
        }
        \foreach \i in {5,...,9} {
            \node[paper/link vertex, draw=black!55, fill=black!55] at (l\i) {};
        }
        \node[paper/label, text=black!55, below=5pt] at (l5) {$\ell_5$};
        \node[paper/label, text=black!55, above=5pt] at (l6) {$\ell_6$};
        \node[paper/label, text=black!55, right=5pt] at (l7) {$\ell_7$};
        \node[paper/label, text=black!55, below=5pt] at (l8) {$\ell_8$};
        \node[paper/label, text=black!55, above right=3pt] at (l9) {$\ell_9$};
        \node[paper/label] at (1.10,-\cycleradius-0.50) {(b) Overlap graph $X$};
    \end{scope}

    \begin{scope}[xshift=15.6cm]
        \foreach \a [count=\i] in \cycleangles {
            \coordinate (v\i) at (\a:\cycleradius);
        }
        \foreach \from/\to in {50/50,-90/-90,-155/-240} {
            \path[paper/interval] (\from+8:\cycleradius+0.24)
                arc[start angle=\from+8, end angle=\to-8, radius=\cycleradius+0.24]
                -- (\to-8:\cycleradius-0.24)
                arc[start angle=\to-8, end angle=\from+8, radius=\cycleradius-0.24] -- cycle;
        }
        \draw[paper/edge] (0,0) circle[radius=\cycleradius];

        \foreach \u/\v in {2/4,8/10,10/12,5/7,9/11,7/10} {
            \draw[paper/link] (v\u) -- (v\v);
        }
        \foreach \u/\v in {1/4,3/6,5/8} {
            \draw[paper/bought link] (v\u) -- (v\v);
        }
        \linklabels
        \node[paper/link name] at (-0.820,-1.330) {$\ell_4$};

        \foreach \i in {1,...,12} {
            \node[paper/vertex] at (v\i) {};
        }
        \foreach \i in {3,4,5,6,8} {
            \node[paper/vertex, draw=paperBought, fill=paperBought, minimum size=4.5pt] at (v\i) {};
        }
        \foreach \i in {1,10} {
            \node[paper/terminal] at (v\i) {};
        }
        \node[paper/label] at (90:\cycleradius+0.36) {$s$};
        \node[paper/label] at (180:\cycleradius+0.50) {$t$};
        \node[paper/label] at (-130:\cycleradius+0.36) {$x$};
        \node[paper/label, text=paperInterval] at (128:\cycleradius+0.55) {$J$};
        \node[paper/label] at (0,-\cycleradius-0.50) {(c) Gaps of $A=\{\ell_1,\ell_2,\ell_3\}$};
    \end{scope}
\end{tikzpicture}
    \caption{Links on a cycle, their overlap graph, and gaps.
    (a) The set $F$ of blue links serves $\{s,t\}$.
    (b) The graph $X[F]$ has a path between links at $s$ and~$t$ (\Cref{lem:overlap-characterization}).
    (c) The gaps of $A=\{\ell_1,\ell_2,\ell_3\}$ (shaded), where $x$ and~$s$ are adjacent to the gap~$J$.
    The links $\ell_4$ and~$\ell_6$ cover~$J$, whereas $\ell_7$ and~$\ell_9$ lie in~$J$.}
    \label{fig:overlap-anchors}
\end{figure}

The overlap graph is also used for offline cycle augmentation by Traub and Zenklusen, who call it the link intersection graph~\cite{TraubZ23}.
For a set of links $F \subseteq L$, let $X[F]$ denote the subgraph of $X$ induced by~$F$, and let $V(F)\subseteq V$ denote the set of endpoints of the links in~$F$.
We call the link sets of the connected components of $X[F]$ the \emph{components} of~$F$.
Since all links incident to a common vertex pairwise cross, every cycle vertex $w\in V(F)$ is incident only to links of a single component of~$F$, which we call the \emph{component of $F$ at~$w$}.

\begin{lemma}[{\cite[Proposition~C.1]{GargS26}}]\label{lem:overlap-characterization}
	A set of links $F\subseteq L$ serves a request $\{s,t\}$ if and only if $X[F]$ contains a path (possibly of length zero) from a link incident to $s$ to a link incident to~$t$.
\end{lemma}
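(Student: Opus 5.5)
Both directions go through \Cref{lem:cut-covering}: $F$ serves $\{s,t\}$ if and only if $F$ covers every circular interval $S$ (a nonempty proper subset of consecutive vertices) with $|S\cap\{s,t\}|=1$. We may assume $s$ and $t$ are each incident to a link of $F$. Otherwise the singleton $\{s\}$ (or $\{t\}$) is an uncovered separating interval, so $F$ does not serve the request, and no path of the required form exists either.

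\textbf{Sufficiency.} Suppose $X[F]$ has a path $\ell_0,\ell_1,\dots,\ell_m$ with $s\in\ell_0$ and $t\in\ell_m$. Let $S$ be a circular interval containing $s$ but not $t$. Assume for contradiction that no link of $F$ covers $S$. Then every $\ell_i$ has both endpoints in $S$ or both in $V\setminus S$. Now $\ell_0$ has an endpoint in $S$, so it lies inside $S$, and $\ell_m$ lies outside $S$. Hence some consecutive pair $\ell_i,\ell_{i+1}$ has $\ell_i$ inside $S$ and $\ell_{i+1}$ outside $S$. Such links cannot share an endpoint. They also cannot have alternating endpoints, because the endpoints of $\ell_i$ lie in the arc $S$ and those of $\ell_{i+1}$ lie in the complementary arc, so they appear as two consecutive blocks along $C$. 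So $\ell_i$ and $\ell_{i+1}$ do not cross, a contradiction.

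\textbf{Necessity.} Suppose no component of $F$ contains both a link at $s$ and a link at $t$. Let $K$ be the component of $F$ at $s$; then $t\notin V(K)$. I would find a circular interval $S\ni s$ with $t\notin S$ that no link of $F$ covers. I will take $S$ to be the arc of $V$ strictly between the two endpoints of $V(K)$ nearest to $t$ on either side of $t$. Precisely, let $J$ be the maximal circular interval containing $t$ that avoids $V(K)$ (a gap of $K$), and set $S:=V\setminus J$. Then $s\in S$ and $t\notin S$.

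It remains to show no link of $F$ covers $S$, equivalently $J$. Links of $K$ have both endpoints in $V(K)\subseteq S$. For a link $f\in F\setminus K$, suppose $f$ has exactly one endpoint in $J$. This is the main obstacle, the geometric fact (later \Cref{lem:gap}) that such an $f$ must cross a link of $K$.

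I would argue this as follows. Since $J$ is maximal, the endpoint of $f$ outside $J$ lies in $V\setminus J$, whose two extreme vertices $a,b$ (the neighbours of $J$) belong to $V(K)$. If that endpoint equals $a$ or $b$, then $f$ shares a vertex with a link of $K$ and crosses it. Otherwise it lies strictly inside the arc from $a$ to $b$ avoiding $J$. Since $X[K]$ is connected, take a path in $X[K]$ from a link at $a$ to a link at $b$. The union of the chords of these links then forms a connected curve in the disk from $a$ to $b$, because consecutive crossing links intersect as chords. The chord $f$ joins a point of the open arc through $J$ to a point of the opposite open arc between $a$ and $b$, so it separates $a$ from $b$ in the disk. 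Hence $f$ must intersect that curve, meaning it shares an endpoint with or properly crosses some link of $K$.

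I would phrase this combinatorially without topology: walk along the path of links from $a$ to $b$ and track which side of $f$ each endpoint lies on. The first link having endpoints on both sides of $f$ (or an endpoint of $f$) crosses $f$. Such a link exists because consecutive crossing links cannot jump across $f$ without one of them straddling it. In every case $f$ crosses $K$, so $f\in K$, contradicting $f\notin K$. Thus $S$ is an uncovered separating interval, and $F$ does not serve $\{s,t\}$.
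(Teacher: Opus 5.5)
Your proposal is correct and follows essentially the same route as the paper. Sufficiency uses the same consecutive-pair argument via \Cref{lem:cut-covering}, and necessity shows that the gap $J$ of the component $K$ at $s$ containing $t$ is uncovered, relying on the gap fact (\Cref{lem:gap}); your path-from-$a$-to-$b$ argument is only a rephrasing of the paper's proof of \Cref{lem:gap}, which instead notes that a link crossing no link of $K$ forces all of $K$ onto one side of it, so both its endpoints lie in the same gap.
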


In other words, $F$ serves $\{s,t\}$ if and only if $s,t\in V(F)$ and the components of $F$ at $s$ and at $t$ coincide, as \Cref{fig:overlap-anchors}(a) and~(b) illustrate.
For completeness, we prove \Cref{lem:overlap-characterization} in \Cref{app:overlap}.

For a nonempty set $A\subseteq L$, we call the vertex sets of the connected components of $C-V(A)$ the \emph{gaps} of~$A$, as shown in \Cref{fig:overlap-anchors}(c).
Every gap $J$ of $A$ is a circular interval $\{v_i,v_{i+1},\ldots,v_j\}$ of $C$, and the two vertices $v_{i-1}$ and $v_{j+1}$ outside~$J$ are \emph{adjacent} to~$J$.
They belong to $V(A)$, as $J$ is the vertex set of a connected component of~$C-V(A)$, and they are distinct, since otherwise $J$ would contain all vertices but one, whereas $|V(A)|\ge2$ as $A$ is nonempty.
If $X[A]$ is connected, $s\in V(A)$, and $t$ lies in a gap~$J$ of~$A$, then $J$ is a circular interval that separates $s$ and~$t$ and that no link of~$A$ covers, so every set of links that serves $\{s,t\}$ contains a link that covers~$J$ by \Cref{lem:cut-covering}.
The next lemma shows that every such link crosses a link of~$A$, such as $\ell_4$ or~$\ell_6$ in \Cref{fig:overlap-anchors}(c).

\begin{lemma}[Gap Lemma]\label{lem:gap}
    Let $A\subseteq L$ be a nonempty set of links such that $X[A]$ is connected.
    Then every link that covers a gap of $A$ crosses a link of~$A$.
\end{lemma}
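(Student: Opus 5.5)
Let $f=xy$ be a link that covers a gap $J$ of $A$, with $x\in J$ and $y\notin J$. If $y\in V(A)$, then $f$ shares the endpoint $y$ with some link of $A$, so $f$ crosses it and we are done. So assume $y\notin V(A)$. Then $y$ lies in some other gap $J'\neq J$ of $A$. Let $a$ and $b$ be the two distinct vertices adjacent to $J$, both in $V(A)$.

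The plan is to use the connectivity of $X[A]$ through a topological separation argument. Removing $x$ and $y$ from $C$ leaves two open arcs $P$ and $Q$ of the cycle. Since $J$ is a circular interval containing $x$ but not $y$, the vertices $a$ and $b$ lie on different arcs: walking from $x$ in either direction along $C$, we leave $J$ at $a$ or at $b$ before reaching $y\notin J$. So say $a\in P$ and $b\in Q$. Every vertex of $V(A)$ lies in $P\cup Q$, because $x,y\notin V(A)$.

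Now suppose, for contradiction, that $f$ crosses no link of $A$. Then no link of $A$ has endpoint $x$ or $y$, and no link of $A$ has one endpoint in $P$ and the other in $Q$, since such a link would alternate with $x,y$ along $C$ and thus cross $f$. Hence every link of $A$ has both endpoints in $P$ or both endpoints in $Q$. This splits $A$ into $A_P$ and $A_Q$, and both are nonempty: $a\in V(A)\cap P$ forces $A_P\neq\emptyset$, and likewise $b$ forces $A_Q\neq\emptyset$.

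It remains to show that no link of $A_P$ crosses a link of $A_Q$, which contradicts the connectivity of $X[A]$. Such links share no endpoint, since $P\cap Q=\emptyset$. Their four endpoints cannot alternate along $C$, since two endpoints lie on the arc $P$ and two on the disjoint arc $Q$, so the cyclic order is $P,P,Q,Q$.

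The main care needed is the claim that $a$ and $b$ are separated by $\{x,y\}$. This follows because $J\cup\{a,b\}$ is a circular interval whose interior is exactly $J$: the two paths from $x$ to $y$ along $C$ must exit $J$ through $a$ and through $b$ respectively.
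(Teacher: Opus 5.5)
Your proof is correct and follows essentially the same approach as the paper: remove the endpoints of the link, split $A$ into the links on each of the two arcs, and observe that links on different arcs cannot cross. The only difference is the direction of the final contradiction. You use the vertices $a$ and $b$ adjacent to $J$ to show that both sides are nonempty, which contradicts connectivity. The paper instead uses connectivity to put all of $A$ on one side and then concludes that $u$ and $v$ lie in the same gap.
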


\begin{proof}
    Let $uv\in L$ be a link that covers a gap $J$ of~$A$, where $u\in J$ and $v\notin J$, and suppose for a contradiction that $uv$ crosses no link of~$A$.
    As $u\in J$, we have $u\notin V(A)$, so $uv\notin A$, and as $uv$ crosses no link of~$A$, it shares no endpoint with a link of~$A$, so $v\notin V(A)$.
    Removing $u$ and $v$ splits $C$ into two paths $Q_1$ and $Q_2$, one of which may be empty.
    Every link of $A$ has both endpoints on $Q_1$ or both endpoints on $Q_2$, as it does not cross~$uv$.
    Moreover, a link with both endpoints on $Q_1$ and a link with both endpoints on $Q_2$ do not cross, since their endpoints are distinct and do not alternate along~$C$.
    As $X[A]$ is connected, we may thus assume that all links of $A$ have both endpoints on~$Q_1$.
    Then the $u$--$v$ path in~$C$ through~$Q_2$ has no vertex in~$V(A)$, so $u$ and $v$ lie in the same gap of~$A$, namely in~$J$, which contradicts $v\notin J$.
\end{proof}

\subsection{The Algorithm}\label{sec:unweighted-algorithm}

Our algorithm maintains the set $F$ of bought links, which is initially empty.
Upon a request $\{s,t\}$, it first considers $w=s$ and then $w=t$.
If $w\notin V(F)$, it \emph{activates}~$w$: it buys a link incident to~$w$ that crosses a link of~$F$ if such a link exists, and an arbitrary link incident to~$w$ otherwise, which then forms a new component of~$F$.
A link incident to~$w$ exists by \Cref{lem:overlap-characterization}, since $L$ is feasible.

Afterwards, as long as the components of $F$ at $s$ and at $t$ differ, the algorithm grows the component $K$ of $F$ at~$s$.
If possible, it merges $K$ with another component of $F$ by buying at most two links.\footnote{The algorithm remains correct if merges may consist of up to $h\ge2$ links, but our analysis then only gives $(h+2)$-competitiveness.}
Otherwise, it expands $K$ toward~$t$.
It walks along the cycle from $t$ in both directions until it first meets a vertex incident to a link of~$N[K]$, where $N[A]$ denotes the closed neighborhood of a set~$A$ of links in~$X$, that is, the set of links that belong to~$A$ or cross a link of~$A$.
At each of the two vertices where it stops, it buys a link of $N[K]\setminus K$ incident to this vertex, unless a link of~$K$ is already incident to it.

Formally, merges and expansions are defined as follows.

\begin{definition}[Merge and Expansion]\label{def:merge-expansion}
    Let $F$ be the current solution, let $\{s,t\}$ be the current request with $s,t\in V(F)$, and let $K$ be the component of $F$ at~$s$.
    Suppose that $t\notin V(K)$.
    \begin{itemize}
        \item A \emph{merge} for $K$ is a set $P\subseteq L\setminus F$ with $|P|\le2$ such that $X[K\cup P\cup D]$ is connected for some component $D\neq K$ of~$F$.
        \item Suppose that no merge for $K$ exists.
        Let $x$ and $y$ be the two vertices adjacent to the gap $J$ of $N[K]$ with~$t\in J$.
        For each $z\in\{x,y\}$, let $\ell_z\in N[K]$ be a link incident to~$z$, chosen from~$K$ if possible.
        Then $Y:=\{\ell_x,\ell_y\}\setminus K$ is the \emph{expansion} of $K$ toward~$t$.
    \end{itemize}
\end{definition}

The expansion is well defined.
Indeed, since $K$ is a component of~$F$, every bought link that crosses a link of~$K$ belongs to~$K$, that is, $F\cap N[K]=K$.
Suppose for a contradiction that a link $\ell\in N[K]$ is incident to~$t$.
Then $\ell\notin K$, as $t\notin V(K)$, so $\ell\notin F$, and $\ell$ crosses a link of~$K$.
As $\ell$ also shares the endpoint~$t$ with a link of the component $D\neq K$ of $F$ at~$t$, the set $\{\ell\}$ is a merge for~$K$, contradicting the assumption that no merge for $K$ exists.
Hence, $t$ lies in a gap $J$ of~$N[K]$, and the two vertices $x$ and~$y$ adjacent to~$J$ belong to $V(N[K])$, so the links $\ell_x$ and~$\ell_y$ exist.
Since $F\cap N[K]=K$, the expansion~$Y$ consists of the links among $\ell_x$ and~$\ell_y$ that are not bought yet, and every link of~$Y$ crosses a link of~$K$, as $Y\subseteq N[K]\setminus K$.
\Cref{fig:expansion} illustrates an expansion, and \Cref{alg:unweighted} summarizes the algorithm.

\begin{figure}[tbp]
    \centering
    \begin{tikzpicture}[
    paper/diagram, scale=0.75,
    paper/feasible link/.style={paper/internal optimal link, line width=1.1pt}
]
    \foreach \stage/\shift in {0/0,1/6.6,2/13.2} {
        \begin{scope}[xshift=\shift cm]
            \foreach \i in {0,...,17} {
                \coordinate (v\i) at ({90-20*\i}:2.0);
            }
            \ifnum\stage<2
                \path[paper/interval] (102:2.24)
                    arc[start angle=102, end angle=178, radius=2.24]
                    -- (178:1.76)
                    arc[start angle=178, end angle=102, radius=1.76] -- cycle;
            \fi
            \draw[paper/edge] (0,0) circle[radius=2.0];

            \ifnum\stage=0
                \draw[paper/link] (v0) --
                    node[paper/edge label, text=black!55, pos=0.42, below left=0.5pt] {$\ell_y$} (v3);
                \draw[paper/link] (v8) --
                    node[paper/edge label, text=black!55, pos=0.5, above right=2pt] {$\ell_x$} (v13);
            \else
                \draw[paper/bought link] (v0) --
                    node[paper/edge label, text=paperBought, pos=0.42, below left=0.5pt] {$\ell_y$} (v3);
                \draw[paper/bought link] (v8) --
                    node[paper/edge label, text=paperBought, pos=0.5, above right=2pt] {$\ell_x$} (v13);
            \fi
            \ifnum\stage<2
                \draw[paper/feasible link] (v11) --
                    node[paper/edge label, text=col4, pos=0.17, right=3.5pt] {$f$} (v14);
                \draw[paper/feasible link] (v14) -- (v16);
            \else
                \draw[paper/bought link] (v11) --
                    node[paper/edge label, text=paperBought, pos=0.17, right=3.5pt] {$f$} (v14);
                \draw[paper/bought link] (v14) -- (v16);
            \fi

            \draw[paper/bought link] (v1) -- (v7);
            \draw[paper/bought link] (v4) -- (v10);
            \draw[paper/bought link] (v15) -- (v17);
            \ifnum\stage<2
                \node[paper/label, text=paperBought] at (0.58,-0.18) {$K$};
                \node[paper/label, text=paperBought] at (128:1.50) {$D$};
            \fi

            \foreach \i in {0,...,17} {
                \node[paper/vertex] at (v\i) {};
            }
            \foreach \i in {1,15} {
                \node[paper/terminal] at (v\i) {};
            }
            \node[paper/label] at (70:2.36) {$s$};
            \node[paper/label] at (150:2.50) {$t$};
            \ifnum\stage<2
                \node[paper/label] at (190:2.36) {$x$};
                \node[paper/label] at (90:2.36) {$y$};
                \node[paper/label, text=paperInterval] at (122:2.55) {$J$};
            \fi
        \end{scope}
    }
    \draw[paper/transition] (2.75,0) -- (3.85,0);
    \draw[paper/transition] (9.35,0) -- (10.45,0);
    \node[paper/label] at (0,-2.50) {(a) Before the expansion};
    \node[paper/label] at (6.6,-2.50) {(b) After the expansion};
    \node[paper/label] at (13.2,-2.50) {(c) After the merge};
\end{tikzpicture}
    \caption{An expansion of $K$ toward $t$ and the subsequent merge (\Cref{lem:expansion}).
    Bought links are blue, dashed links cross links of $K$, pink links belong to a feasible solution, and other links are omitted.
    (a) The gap $J$ of $N[K]$ that contains $t$.
    (b) The link $f$ covers $J$ and crosses a bought link only after the expansion.
    (c) Buying the two pink links of~(b), which form a merge, connects $s$ and~$t$.}
    \label{fig:expansion}
\end{figure}

\begingroup
\LinesNotNumbered
\DontPrintSemicolon
\begin{algorithm}[t]
\caption{Algorithm for online CycAP}\label{alg:unweighted}
    \SetKwBlock{OnRequest}{Upon arrival of request $\{s,t\}$}{}
    Initialize $F=\emptyset$.\;
    \OnRequest{
        \For{$w=s,t$}{
            \If{$w\notin V(F)$}{
                Buy a link incident to $w$, preferring a link that crosses a link of $F$.\;
            }
        }
        \While{the components of $F$ at $s$ and $t$ differ}{
            Let $K$ be the component of $F$ at $s$.\;
            \eIf{there is a merge $P$ for $K$}{
                Buy $P$.\;
            }{
                Buy the expansion of $K$ toward $t$.\;
            }
        }
    }
\end{algorithm}
\endgroup

To pay for an expansion~$Y$, we use a link of an optimal solution in $N[F\cup Y]\setminus N[F]$, that is, a link that belongs to or crosses a bought link after the expansion but not before.
Since $N[F]$ only grows, every optimal link is used at most once in this way.
The next lemma shows that such a link exists.

\begin{lemma}[Expansion Lemma]\label{lem:expansion}
    In the setting of \Cref{def:merge-expansion}, suppose that no merge for $K$ exists, and let $Y$ be the expansion of $K$ toward~$t$.
    Then every set $F^*\subseteq L$ that serves $\{s,t\}$ contains a link of $N[F\cup Y]\setminus N[F]$.
\end{lemma}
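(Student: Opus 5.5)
Let $A:=K\cup Y$ and let $J'$ be the gap of $A$ that contains~$t$. I will show that $F^*$ contains a link $f$ covering $J'$ and that $f\in N[F\cup Y]\setminus N[F]$.

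First, $X[A]$ is connected, since $X[K]$ is connected and every link of $Y$ crosses a link of~$K$. Also $s\in V(K)\subseteq V(A)$. Next, $t\notin V(A)$: we have $t\notin V(K)$ by assumption, and the endpoints of $\ell_x,\ell_y$ are vertices of $V(N[K])$, which contains no vertex of the gap $J$ of $N[K]$. So $t$ lies in a gap $J'$ of $A$. This gap is a circular interval separating $s$ and $t$, and no link of $A$ covers it. By \Cref{lem:cut-covering}, $F^*$ contains a link $f$ covering~$J'$, and by \Cref{lem:gap}, $f$ crosses a link of $A$. In particular $f\in N[F\cup Y]$.

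The next step is to show that $f$ crosses no link of $K$. I claim $J'\subseteq J$. The vertices $x,y$ adjacent to $J$ lie in $V(A)$, because $\ell_x,\ell_y\in A$ (either chosen from $K$ or in $Y$). Hence $t$'s gap in $A$ cannot extend past $x$ or $y$, giving $J'\subseteq J$. Let $u$ be the endpoint of $f$ in $J'\subseteq J$. Since $J$ contains no endpoint of a link of $N[K]$, and $f$ has the endpoint $u\in J$, we get $f\notin N[K]$. So $f$ does not cross $K$, and $f$ must cross a link $\ell\in Y$.

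It remains to show $f\notin N[F]$; this is the main step. Suppose first that $f\in F$. Since $f\notin N[K]$ and $F\cap N[K]=K$, $f$ lies in a component $D\neq K$. Then $P=\{\ell\}$ is a merge for $K$: $\ell$ crosses $K$ and $f\in D$, so $X[K\cup\{\ell\}\cup D]$ is connected. Note $\ell\notin F$ because $Y\cap F=\emptyset$. This contradicts the assumption that no merge exists. Suppose next that $f\notin F$ but $f$ crosses a link of some component $D$ of $F$. Here $D\neq K$ because $f\notin N[K]$. Then $P=\{\ell,f\}\subseteq L\setminus F$ satisfies $|P|\le 2$ and connects $K$ to $\ell$, $\ell$ to $f$, and $f$ to $D$, so it is a merge, again a contradiction. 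Hence $f\notin N[F]$.

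The main obstacle is the claim $J'\subseteq J$, which is what forces $f$ to avoid $N[K]$. It relies on $x,y\in V(A)$, which in turn depends on the definition of $\ell_z$ being incident to $z$. I would check this carefully, including the degenerate case $\ell_x=\ell_y$.
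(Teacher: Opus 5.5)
Your proof is correct and follows the paper's argument: you apply the Gap Lemma to $K\cup Y$, use that $J$ contains no endpoint of a link of $N[K]$ to force $f$ to cross a link $\ell\in Y$, and rule out $f\in N[F]$ via the merges $\{\ell\}$ and $\{\ell,f\}$. The only difference is cosmetic: you work with the gap $J'$ of $K\cup Y$ containing $t$ and show $J'\subseteq J$, whereas the paper observes directly that $J$ is itself a gap of $K\cup Y$ (indeed $J'=J$, since $V(K\cup Y)\subseteq V(N[K])$ gives the reverse inclusion), and the case $\ell_x=\ell_y$ needs no special care.
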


\begin{proof}
    Let $J$, $x$, $y$, $\ell_x$, and $\ell_y$ be as in \Cref{def:merge-expansion}, and let $F^*$ serve $\{s,t\}$.
    The gap $J$ of~$N[K]$ is a circular interval, that is, a minimum cut of~$C$, and it separates $s$ and~$t$, as $t\in J$ and $s\in V(K)\subseteq V(N[K])$.
    Hence, some link $f\in F^*$ covers~$J$ by \Cref{lem:cut-covering}.
    We show that $f\in N[F\cup Y]\setminus N[F]$.

    The graph $X[K\cup Y]$ is connected, since $X[K]$ is connected and every link of~$Y$ crosses a link of~$K$.
    Moreover, $J$ is also a gap of~$K\cup Y$, since no link of $K\cup Y\subseteq N[K]$ has an endpoint in~$J$, whereas the links $\ell_x,\ell_y\in K\cup Y$ are incident to the vertices $x$ and~$y$ adjacent to~$J$.
    Hence, \Cref{lem:gap} with $A=K\cup Y$ shows that $f$ crosses a link of $K\cup Y$.
    As no link of $N[K]$ has an endpoint in~$J$, we have $f\notin N[K]$, so $f$ crosses a link $\ell\in Y$, and thus $f\in N[F\cup Y]$.

    Suppose for a contradiction that $f\in N[F]$, and recall that $\ell\notin F$ crosses a link of~$K$.
    We distinguish two cases, which \Cref{fig:merge-cases} illustrates.

    \textbf{Case $f\in F$.}
    Let $D$ be the component of $F$ that contains~$f$.
    Then $D\neq K$, since $f\notin N[K]$, and as $\ell$ crosses a link of $K$ and the link $f$ of~$D$, the graph $X[K\cup\{\ell\}\cup D]$ is connected.
    Hence, $\{\ell\}$ is a merge for~$K$.

    \textbf{Case $f\notin F$.}
    Since $f\in N[F]$, the link $f$ crosses a link of~$F$, and we let $D$ be the component of $F$ that contains such a link.
    Then $D\neq K$, since $f\notin N[K]$, and as $\ell$ crosses a link of $K$ and the link~$f$, which crosses a link of~$D$, the graph $X[K\cup\{\ell,f\}\cup D]$ is connected.
    Since neither $\ell$ nor $f$ belongs to~$F$, the set $\{\ell,f\}$ is a merge for~$K$.
    This is the only place where we use that merges may consist of two links.

    In both cases, a merge for $K$ exists, which contradicts our assumption.
\end{proof}

\begin{figure}[tbp]
    \centering
    \begin{tikzpicture}[
    paper/diagram, scale=0.75,
    paper/feasible link/.style={paper/internal optimal link, line width=1.1pt}
]
    \foreach \stage/\shift in {0/0,1/6.6} {
        \begin{scope}[xshift=\shift cm]
            \foreach \i in {0,...,17} {
                \coordinate (v\i) at ({90-20*\i}:2.0);
            }
            \path[paper/interval] (102:2.24)
                arc[start angle=102, end angle=178, radius=2.24]
                -- (178:1.76)
                arc[start angle=178, end angle=102, radius=1.76] -- cycle;
            \draw[paper/edge] (0,0) circle[radius=2.0];

            \draw[paper/link] (v0) -- (v3);
            \draw[paper/link] (v8) --
                node[paper/edge label, text=black!55, pos=0.5, above right=2pt] {$\ell$} (v13);
            \ifnum\stage=0
                \draw[paper/bought link] (v11) --
                    node[paper/edge label, text=paperBought, pos=0.55, right=1.5pt] {$f$} (v16);
            \else
                \draw[paper/feasible link] (v11) --
                    node[paper/edge label, text=col4, pos=0.55, right=1.5pt] {$f$} (v16);
            \fi
            \draw[paper/bought link] (v1) -- (v7);
            \draw[paper/bought link] (v4) -- (v10);
            \draw[paper/bought link] (v15) -- (v17);
            \node[paper/label, text=paperBought] at (0.58,-0.18) {$K$};
            \node[paper/label, text=paperBought] at (128:1.50) {$D$};

            \foreach \i in {0,...,17} {
                \node[paper/vertex] at (v\i) {};
            }
            \foreach \i in {1,15} {
                \node[paper/terminal] at (v\i) {};
            }
            \node[paper/label] at (70:2.36) {$s$};
            \node[paper/label] at (150:2.50) {$t$};
            \node[paper/label, text=paperInterval] at (122:2.55) {$J$};
        \end{scope}
    }
    \node[paper/label] at (0,-2.50) {(a) Case $f\in F$};
    \node[paper/label] at (6.6,-2.50) {(b) Case $f\notin F$};
\end{tikzpicture}
    \caption{The two cases in the proof of \Cref{lem:expansion} if $f\in N[F]$, drawn as in \Cref{fig:expansion}.
    (a) If $f\in F$, then $\{\ell\}$ is a merge.
    (b) Otherwise, $f$ crosses a link of~$D$, and $\{\ell,f\}$ is a merge.}
    \label{fig:merge-cases}
\end{figure}

\paragraph{Correctness.}
After the activations for a request $\{s,t\}$, we have $s,t\in V(F)$, and as long as the components of $F$ at $s$ and at $t$ differ, we have $t\notin V(K)$, so \Cref{def:merge-expansion} applies.
Every merge is nonempty, since $X[K\cup D]$ is disconnected for distinct components $K$ and $D$ of~$F$, and every expansion $Y$ is nonempty, since \Cref{lem:expansion} applied to $F^*=L$ yields a link of $N[F\cup Y]\setminus N[F]$.
Hence, every iteration of the while loop buys a new link, so the loop ends after at most $|L|$ iterations in total.
Afterwards, the components of $F$ at $s$ and at $t$ coincide, so $F$ serves $\{s,t\}$ by \Cref{lem:overlap-characterization}, and as $F$ only grows, it also serves all previous requests.
Finally, the algorithm runs in polynomial time, since the overlap graph, the components of~$F$, the $O(|L|^2)$ candidates for a merge, and the gaps of $N[K]$ can be computed in polynomial time.

\paragraph{Competitive analysis.}
We now show that \Cref{alg:unweighted} is $4$-competitive, which proves \Cref{thm:main-unweighted}.

\begin{proof}[Proof of \Cref{thm:main-unweighted}]
    As argued above, \Cref{alg:unweighted} is a deterministic polynomial-time algorithm for online CycAP.
    Fix a request sequence and an optimal solution~$F^*$.
    We show that, at the end, the set $F$ of bought links satisfies $|F|\le4|F^*|$.
 
    Let $\kappa(F)$ denote the number of components of~$F$.
    Let $m_{\mathrm{act}}$ be the number of activations, and let $m_{\mathrm{new}}$ be the number of activations that increase~$\kappa(F)$.
    Moreover, let $m_{\mathrm{merge}}$ and $m_{\mathrm{exp}}$ be the numbers of merges and expansions bought by the algorithm, respectively.
    Since every activation buys one link and every merge and every expansion buys at most two links, we have
    \[
        |F|\le m_{\mathrm{act}}+2m_{\mathrm{merge}}+2m_{\mathrm{exp}} \ .
    \]

    Every requested vertex is activated at most once, since it belongs to $V(F)$ afterwards, and it belongs to $V(F^*)$ by \Cref{lem:overlap-characterization}.
    Hence, $m_{\mathrm{act}}\le|V(F^*)|\le2|F^*|$.

    Next, we show that $m_{\mathrm{merge}}\le m_{\mathrm{new}}$.
    An activation increases $\kappa(F)$ by one if the bought link crosses no link of~$F$, and it does not increase $\kappa(F)$ otherwise.
    Every link of an expansion crosses a link of~$K$, so expansions do not increase $\kappa(F)$ either.
    Every merge decreases $\kappa(F)$ by at least one, since $K$, $P$, and $D$ then lie in a common component.
    Since $\kappa(\emptyset)=0$, we obtain $0\le\kappa(F)\le m_{\mathrm{new}}-m_{\mathrm{merge}}$ at the end, so $m_{\mathrm{merge}}\le m_{\mathrm{new}}$.

    Finally, we show that $m_{\mathrm{new}}+m_{\mathrm{exp}}\le|F^*|$.
    As $F$ only grows, so does $F^*\cap N[F]$, which has at most $|F^*|$ links.
    Hence, it suffices to show that every expansion and every activation that increases~$\kappa(F)$ adds a link to~$F^*\cap N[F]$.

    \begin{itemize}
	\item
    \textbf{Case of an expansion.}
    Let $Y$ be an expansion of the component $K$ of $F$ at~$s$ toward~$t$ for a request $\{s,t\}$.
    Then no merge for $K$ exists, and $F^*$ serves $\{s,t\}$, so $F^*$ contains a link of $N[F\cup Y]\setminus N[F]$ by \Cref{lem:expansion}.
    Buying $Y$ thus adds this link to $F^*\cap N[F]$.
    \item
    \textbf{Case of an activation that increases~$\kappa(F)$.}
    Let $w$ be the activated vertex, and let $f\in F^*$ be a link incident to~$w$, which exists since $w$ is a requested vertex.
    Before the activation, $f\notin F$, since $w\notin V(F)$.
    Moreover, the bought link crosses no link of~$F$, as the activation increases~$\kappa(F)$.
    Since the algorithm prefers links that cross a link of~$F$, no link incident to $w$ crosses a link of~$F$, and in particular, $f$ does not.
    Hence, $f\notin N[F]$ before the activation, but $f\in N[F]$ afterwards, since $f$ is the bought link or shares the endpoint $w$ with it.
    \end{itemize}

    Altogether, we obtain
    \[
        |F|\le m_{\mathrm{act}}+2m_{\mathrm{merge}}+2m_{\mathrm{exp}}
        \le 2|F^*|+2m_{\mathrm{new}}+2m_{\mathrm{exp}}
        \le 2|F^*|+2|F^*|
        = 4|F^*| \ .
    \]
    This shows that \Cref{alg:unweighted} is $4$-competitive and completes the proof of the theorem.
\end{proof}

\subsection{The Lower Bound}\label{sec:unweighted-lower-bound}

We now show that the competitive ratio~$4$ is best possible for deterministic algorithms, which proves \Cref{thm:main-unweighted-lower-bound}.

The links of our instance form bundles that are arranged in a cycle.
The adversary forces the algorithm to buy a link in every bundle on one half of this cycle and three links in almost every bundle on the other half, whereas a single link in every bundle of this half suffices.

\paragraph{The instance.}
Let $m\ge2$ be a power of~$2$.
The cycle $C$ consists of $4m$ groups of consecutive vertices, which appear along $C$ in the order $U_0,W_0,U_1,W_1,\ldots,U_{2m-1},W_{2m-1}$, where $|U_i|=2$ and $|W_i|=3$ for every~$i$.
We take indices modulo~$2m$.
For every~$i$, the \emph{bundle} $L_i$ consists of the six links between $U_i$ and~$W_{i+1}$, which jump over the groups $W_i$ and~$U_{i+1}$.
The set of available links $L$ is the union of all bundles.
Thus, $C$ has $10m$ vertices, $L$ has $12m$ links, and every vertex is incident only to links of a single bundle.
\Cref{fig:bundle-instance} illustrates the instance.

\begin{figure}[tbp]
    \centering
    \begin{tikzpicture}[
    paper/diagram, scale=0.75,
    paper/small vertex/.style={paper/vertex, minimum size=2.8pt},
    paper/arc/.style={line width=0.5pt, out=90, in=90, looseness=1.15},
    paper/main bundle/.style={paper/arc, draw=col5},
    paper/side bundle/.style={paper/arc, draw=col3},
    paper/band/.style={line width=0.5pt, fill opacity=0.7}
]
    \begin{scope}[yshift=-0.55cm]
        \foreach \name/\x in {a1/0, a2/0.32, b1/1.07, b2/1.39, b3/1.71, c1/2.46, c2/2.78,
            d1/3.53, d2/3.85, d3/4.17, e1/4.92, e2/5.24, f1/5.99, f2/6.31, f3/6.63,
            g1/7.38, g2/7.70, h1/8.45, h2/8.77, h3/9.09} {
            \coordinate (\name) at (\x,0);
        }
        \draw[paper/edge, densely dotted] (-0.7,0) -- (a1);
        \draw[paper/edge] (a1) -- (h3);
        \draw[paper/edge, densely dotted] (h3) -- (9.79,0);
        \foreach \p in {a1,a2} { \foreach \q in {d1,d2,d3} {
            \draw[paper/side bundle] (\p) to (\q); } }
        \foreach \p in {e1,e2} { \foreach \q in {h1,h2,h3} {
            \draw[paper/side bundle] (\p) to (\q); } }
        \foreach \p in {c1,c2} { \foreach \q in {f1,f2,f3} {
            \draw[paper/main bundle] (\p) to (\q); } }
        \foreach \name in {a1,a2,b1,b2,b3,c1,c2,d1,d2,d3,e1,e2,f1,f2,f3,g1,g2,h1,h2,h3} {
            \node[paper/small vertex] at (\name) {};
        }
        \foreach \lab/\x in {U_{i-1}/0.16, W_{i-1}/1.39, U_i/2.62, W_i/3.85,
            U_{i+1}/5.08, W_{i+1}/6.31, U_{i+2}/7.54, W_{i+2}/8.77} {
            \node[paper/annotation, anchor=north] at (\x,-0.16) {$\lab$};
        }
        \node[paper/label, text=col3] at (2.08,1.72) {$L_{i-1}$};
        \node[paper/label, text=col5] at (4.55,1.72) {$L_i$};
        \node[paper/label, text=col3] at (7.0,1.72) {$L_{i+1}$};
        \node[paper/label] at (4.545,-1.95) {(a) Three consecutive bundles};
    \end{scope}

    \begin{scope}[xshift=13.1cm]
        \draw[paper/edge] (0,0) circle[radius=2.0];
        \foreach \i in {0,...,7} {
            \pgfmathsetmacro{\tu}{123.75-45*\i}
            \pgfmathsetmacro{\tw}{123.75-45*\i-67.5}
            \ifodd\i \def\bandcolour{col5}\else \def\bandcolour{col3}\fi
            \path[paper/band, draw=\bandcolour, fill=\bandcolour!30] ({\tu+3.5}:2.0) -- ({\tu-3.5}:2.0) -- ({\tw+6}:2.0)
                -- ({\tw-6}:2.0) -- cycle;
        }
        \foreach \i in {0,...,7} {
            \pgfmathsetmacro{\tu}{123.75-45*\i}
            \pgfmathsetmacro{\tw}{123.75-45*\i-22.5}
            \foreach \d in {3.5,-3.5} { \node[paper/small vertex] at ({\tu+\d}:2.0) {}; }
            \foreach \d in {6,0,-6} { \node[paper/small vertex] at ({\tw+\d}:2.0) {}; }
            \ifodd\i \def\bandcolour{col5}\else \def\bandcolour{col3}\fi
            \node[paper/annotation, text=\bandcolour] at ({90-45*\i}:1.18) {$\i$};
        }
        \node[paper/terminal, minimum size=4.8pt] at (127.25:2.0) {};
        \node[paper/terminal, minimum size=4.8pt] at (-52.75:2.0) {};
        \node[paper/label] at (127.25:2.33) {$s$};
        \node[paper/label] at (-52.75:2.33) {$t$};
        \node[paper/label] at (0,-2.5) {(b) All bundles for $m=4$};
    \end{scope}
\end{tikzpicture}
    \caption{The instance of \Cref{thm:main-unweighted-lower-bound}.
    (a) Three consecutive bundles, drawn with the cycle cut open.
    (b) The bundles for $m=4$, each drawn as a band and labeled with its index, and the vertices $s$ and $t$ of the first request.
    }
    \label{fig:bundle-instance}
\end{figure}

Removing the endpoints of a link $\ell\in L_i$ splits $C$ into two paths, one containing $W_i\cup U_{i+1}$ and the other containing all vertices outside $U_i\cup W_i\cup U_{i+1}\cup W_{i+1}$, including $U_{i-1}$ and~$W_{i+2}$, as $m\ge2$.
Every link of $L_{i-1}$ joins $U_{i-1}$ with~$W_i$, and every link of $L_{i+1}$ joins $U_{i+1}$ with~$W_{i+2}$, so these links have an endpoint on each of the two paths and cross~$\ell$.
In contrast, the links of all other bundles have both endpoints on the second path, so they do not cross~$\ell$.
Hence, two links of distinct bundles cross if and only if their bundles are consecutive in the \emph{bundle cycle} $L_0 - L_1 - \cdots - L_{2m-1} - L_0$.
For distinct bundles $L_a$ and~$L_b$, the two paths between them in the bundle cycle are $L_a - L_{a+1} - \cdots - L_b$ and $L_b - L_{b+1} - \cdots - L_a$.
The following lemma translates \Cref{lem:overlap-characterization} to the bundle cycle.

\begin{lemma}[Bundle Path Lemma]\label{lem:bundle-paths}
    Let $x$ and $y$ be vertices that are incident to links of distinct bundles $L_a$ and~$L_b$, respectively.
    Then a set of links $F\subseteq L$ serves $\{x,y\}$ if and only if $F$ contains a link incident to~$x$, a link incident to~$y$, and a link of every bundle on one of the two paths between $L_a$ and $L_b$ in the bundle cycle.
\end{lemma}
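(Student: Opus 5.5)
By \Cref{lem:overlap-characterization}, $F$ serves $\{x,y\}$ if and only if $X[F]$ contains a path from a link incident to~$x$ to a link incident to~$y$. The plan is to translate paths in $X[F]$ into walks in the bundle cycle, using the crossing structure established above.

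Two facts do all the work. First, two links cross if and only if they lie in the same bundle or in consecutive bundles. Links within one bundle all join $U_i$ with $W_{i+1}$, so their endpoints never alternate, and two of them cross only if they share an endpoint. This distinction does not matter below, because same-bundle steps never move us along the bundle cycle. Second, every vertex is incident only to links of a single bundle. Hence every link incident to~$x$ lies in~$L_a$, and every link incident to~$y$ lies in~$L_b$.

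For the \textbf{forward direction}, suppose $F$ serves $\{x,y\}$ and take a path $\ell_0,\ell_1,\ldots,\ell_k$ in $X[F]$ with $\ell_0$ incident to~$x$ and $\ell_k$ incident to~$y$. Then $\ell_0\in L_a$ and $\ell_k\in L_b$. Consecutive links of the path lie in equal or consecutive bundles, so the sequence of their bundle indices is a walk in the bundle cycle from $L_a$ to~$L_b$. In a cycle, a walk between two distinct vertices visits every vertex of one of the two paths between them. Otherwise it would avoid some vertex on each path, and removing those two vertices separates $L_a$ from~$L_b$ in the bundle cycle. So $F$ contains a link of every bundle on that path, as well as $\ell_0$ and $\ell_k$.

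For the \textbf{backward direction}, let $f_x\in F$ be incident to~$x$ and $f_y\in F$ be incident to~$y$, and say the path is $L_a=L_{i_0}-L_{i_1}-\cdots-L_{i_k}=L_b$ with chosen links $g_j\in F\cap L_{i_j}$. Consecutive $g_j$ lie in consecutive bundles and therefore cross, so $g_0,\ldots,g_k$ is a path in~$X[F]$. It remains to attach the endpoints.
\begin{itemize}
\item Since $L_a\ne L_b$, we have $k\ge1$, and $f_x$ crosses $g_1\in L_{i_1}$ because $L_{i_1}$ is consecutive to~$L_a$.
\item Symmetrically, $f_y$ crosses $g_{k-1}$.
\end{itemize}
Therefore $f_x,g_1,\ldots,g_{k-1},f_y$ is a walk in~$X[F]$. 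When $k=1$ the middle part is empty, but then $f_x\in L_a$ and $f_y\in L_b$ lie in consecutive bundles and cross directly. Either way, \Cref{lem:overlap-characterization} shows that $F$ serves $\{x,y\}$.

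The main obstacle is that links inside a single bundle need not cross, so $g_0$ itself may not be adjacent to~$f_x$. The proof sidesteps this by linking $f_x$ to the next bundle's link $g_1$ instead, which is where $a\ne b$ is used. The walk-covers-a-path step in the forward direction is routine.
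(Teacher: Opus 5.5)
Your proof is correct and follows the paper's argument. In the backward direction you chain one link per bundle along the path, using the links at $x$ and~$y$ in place of $g_0$ and~$g_k$; in the forward direction you observe that the bundles met by a path in $X[F]$ give a walk in the bundle cycle from $L_a$ to~$L_b$, which must cover one of the two paths.

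One aside is false, though you never use it. Disjoint links of the same bundle can alternate: if $u_1,u_2\in U_i$ and $w_1,w_2\in W_{i+1}$ appear in this order along~$C$, then $u_1w_1$ and $u_2w_2$ cross. So same-bundle links can cross without sharing an endpoint. Your argument only needs two facts, and both hold: crossing links lie in equal or consecutive bundles, and links of consecutive bundles cross.
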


\begin{proof}
    Suppose first that $F$ contains such links.
    We choose one such link in every bundle on the path, taking the link incident to~$x$ in~$L_a$ and the link incident to~$y$ in~$L_b$.
    Since links of consecutive bundles cross, the chosen links form a path in $X[F]$ between these two links, so $F$ serves $\{x,y\}$ by \Cref{lem:overlap-characterization}.

    Conversely, suppose that $F$ serves $\{x,y\}$.
    By \Cref{lem:overlap-characterization}, $X[F]$ contains a path from a link incident to~$x$, which belongs to~$L_a$, to a link incident to~$y$, which belongs to~$L_b$.
    Any two consecutive links of this path cross, so they belong to the same bundle or to consecutive bundles.
    Hence, the bundles that contain links of this path induce a connected subgraph of the bundle cycle that contains $L_a$ and~$L_b$, and thus one of the two paths between them.
\end{proof}

\paragraph{The adversary's request sequence.}
Fix a deterministic online algorithm, and let $F$ denote the set of links bought by the algorithm so far.
All requests contain a common vertex $s\in U_0$, which is incident only to links of~$L_0$.
We call a bundle \emph{empty} if it contains no link of~$F$.
For $i\neq0$, a \emph{visit} of the bundle $L_i$ consists of two requests.
First, the adversary requests $\{s,u\}$ for a vertex $u\in U_i$, and then it requests $\{s,w\}$ for a vertex $w\in W_{i+1}$.
Each time, it chooses a vertex that is not incident to a link of~$F$ if possible, and an arbitrary one otherwise.
An offline solution can buy the single link $uw\in L_i$, which is incident to both requested vertices.
\Cref{fig:bundle-visit} illustrates a visit.

\begin{figure}[tbp]
    \centering
    \begin{tikzpicture}[
    paper/diagram, scale=0.75,
    paper/small vertex/.style={paper/vertex, minimum size=2.8pt},
    paper/arc/.style={out=90, in=90, looseness=1.15},
    paper/free arc/.style={paper/arc, paper/link, line width=0.5pt},
    paper/bought arc/.style={paper/arc, paper/bought link},
    paper/feasible arc/.style={paper/arc, paper/internal optimal link, line width=1.1pt}
]
    \foreach \stage/\shift in {0/0,1/5.35,2/10.7,3/16.05} {
        \begin{scope}[xshift=\shift cm]
            \foreach \name/\x in {c1/0, c2/0.32, d1/1.07, d2/1.39, d3/1.71, e1/2.46,
                e2/2.78, f1/3.53, f2/3.85, f3/4.17} {
                \coordinate (\name) at (\x,0);
            }
            \draw[paper/edge, densely dotted] (-0.55,0) -- (c1);
            \draw[paper/edge] (c1) -- (f3);
            \draw[paper/edge, densely dotted] (f3) -- (4.72,0);
            \foreach \p in {c1,c2} { \foreach \q in {f1,f2,f3} {
                \draw[paper/free arc] (\p) to (\q); } }
            \ifnum\stage<3
                \draw[paper/bought arc] (c1) to (f2);
            \fi
            \ifnum\stage>0 \ifnum\stage<3
                \draw[paper/bought arc] (c2) to (f1);
            \fi\fi
            \ifnum\stage=2
                \draw[paper/bought arc] (c1) to (f3);
            \fi
            \ifnum\stage=3
                \draw[paper/feasible arc] (c2) to (f3);
            \fi
            \foreach \name in {c1,c2,d1,d2,d3,e1,e2,f1,f2,f3} {
                \node[paper/small vertex] at (\name) {};
            }
            \ifnum\stage=1
                \node[paper/terminal] at (c2) {};
                \node[paper/edge label, anchor=south west] at (0.42,0.07) {$u$};
            \fi
            \ifnum\stage=2
                \node[paper/terminal] at (f3) {};
                \node[paper/edge label, anchor=south west] at (4.27,0.07) {$w$};
            \fi
            \ifnum\stage=3
                \node[paper/terminal] at (c2) {};
                \node[paper/terminal] at (f3) {};
                \node[paper/edge label, anchor=south west] at (0.42,0.07) {$u$};
                \node[paper/edge label, anchor=south west] at (4.27,0.07) {$w$};
            \fi
            \foreach \lab/\x in {U_i/0.16, W_i/1.39, U_{i+1}/2.62, W_{i+1}/3.85} {
                \node[paper/annotation, anchor=north] at (\x,-0.16) {$\lab$};
            }
        \end{scope}
    }
    \node[paper/label] at (2.085,-1.05) {(a) Before the visit};
    \node[paper/label] at (7.435,-1.05) {(b) After $\{s,u\}$};
    \node[paper/label] at (12.785,-1.05) {(c) After $\{s,w\}$};
    \node[paper/label] at (18.135,-1.05) {(d) The link $uw$};
\end{tikzpicture}
    \caption{A visit of a bundle $L_i$ that is not empty before the visit (\Cref{lem:bundle-visit}).
    Bought links are blue, the other links of $L_i$ are gray and dashed, and links of other bundles are omitted.
    (a) One link of $L_i$ is bought.
    (b) No bought link is incident to~$u$, so serving $\{s,u\}$ requires a new link at~$u$.
    (c) No bought link is incident to~$w$, so serving $\{s,w\}$ requires a new link at~$w$, and $L_i$ ends with three bought links.
    (d) The single pink link $uw$ is incident to both $u$ and~$w$.}
    \label{fig:bundle-visit}
\end{figure}

\begin{lemma}[Visit Lemma]\label{lem:bundle-visit}
    After a visit of a bundle~$L_i$, the set $F$ contains at least three links of~$L_i$ if $L_i$ was not empty before the visit, and at least two links of~$L_i$ otherwise.
\end{lemma}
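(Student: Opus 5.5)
The plan is a counting argument at the two vertices of $U_i$ and the three vertices of $W_{i+1}$, using that every vertex is incident only to links of a single bundle and that serving a request forces a bought link at each requested vertex. By \Cref{lem:overlap-characterization}, after the algorithm serves $\{s,u\}$, the set $F$ contains a link incident to~$u$. Since $u\in U_i$ is incident only to links of~$L_i$, this link lies in~$L_i$. Likewise, after serving $\{s,w\}$, some link of~$L_i$ is incident to $w\in W_{i+1}$. Every link of $L_i$ has exactly one endpoint in $U_i$ and one in~$W_{i+1}$. So the argument reduces to tracking how many endpoints in $U_i$ and in $W_{i+1}$ are touched by bought links of~$L_i$.

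\textbf{Case $L_i$ nonempty before the visit.} Let $\ell_0\in F\cap L_i$ be bought before the visit. For the first request, I check whether $U_i$ (two vertices) contains a vertex not incident to a bought link. If $F\cap L_i$ already touches both vertices of~$U_i$, it contains at least two links, since each link has only one endpoint in~$U_i$. Otherwise, the adversary picks $u$ with no bought link. The algorithm must then buy a new link $\ell_u\in L_i$ at~$u$, giving at least two links of~$L_i$ in~$F$, which touch both vertices of~$U_i$.

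For the second request, the same reasoning applies to~$W_{i+1}$. If every vertex of $W_{i+1}$ is incident to a bought link, then at least three links of $L_i$ are in~$F$, since each link has one endpoint in~$W_{i+1}$. Otherwise, the chosen $w$ has no bought link, and a new link at $w$ must be bought. In that case I add this new link to the two links already present.

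The step needing care is the first subcase, where $U_i$ was already fully touched, so there may be only two links so far. These two distinct links touch both vertices of~$U_i$. In the second request, either all three vertices of $W_{i+1}$ are touched, giving at least three links, or a new link is forced, giving at least $2+1=3$. So the count works uniformly: at least two links after the first request, and one more link or full coverage of $W_{i+1}$ after the second.

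\textbf{Case $L_i$ empty before the visit.} The first request forces a link of $L_i$ at~$u$. In the second request, if every vertex of $W_{i+1}$ is touched, there are at least three links. Otherwise, $w$ is untouched and a new link is bought, giving at least two.

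The main (mild) obstacle is organizing the bookkeeping so that ``chosen untouched if possible'' always yields either a forced new link or a lower bound from full coverage. This rests only on each link having exactly one endpoint in each of $U_i$ and~$W_{i+1}$ and on $|W_{i+1}|=3$.
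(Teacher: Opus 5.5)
Your proof is correct and follows essentially the same route as the paper. The paper compresses your case split into one bound applied to each request: either the group $G\in\{U_i,W_{i+1}\}$ containing the requested vertex is already fully touched, which forces at least $|G|$ links of $L_i$, or the adversary picks an untouched vertex and a new link of $L_i$ is forced, so at least $\min\{b+1,|G|\}$ links of $L_i$ are bought afterwards. Citing \Cref{lem:overlap-characterization} instead of \Cref{lem:bundle-paths} to force a bought link at the requested vertex is equally valid.
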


\begin{proof}
    Every link of $L_i$ is incident to exactly one vertex of~$U_i$ and one vertex of~$W_{i+1}$, and all links incident to these vertices belong to~$L_i$.
    Consider a request $\{s,v\}$ of the visit, let $G\in\{U_i,W_{i+1}\}$ be the group that contains~$v$, and let $b$ be the number of links of~$L_i$ in~$F$ before the request.
    If $b<|G|$, then some vertex of~$G$ is not incident to a link of~$F$, so by the choice of~$v$, neither is~$v$.
    To serve $\{s,v\}$, the algorithm thus buys a link incident to~$v$ by \Cref{lem:bundle-paths}, and this link belongs to~$L_i$.
    Hence, $F$ contains at least $\min\{b+1,|G|\}$ links of~$L_i$ after the request.
    For $G=U_i$ with $|U_i|=2$, this gives at least two links after the request $\{s,u\}$ if $L_i$ was not empty before the visit, and at least one otherwise.
    For $G=W_{i+1}$ with $|W_{i+1}|=3$, it then gives at least three and at least two links, respectively, after the request $\{s,w\}$.
\end{proof}

The adversary proceeds in two stages, which \Cref{fig:bundle-run} illustrates for $m=4$.
\begin{enumerate}
    \item The adversary requests $\{s,t\}$ for a vertex $t\in U_m$.
    By \Cref{lem:bundle-paths}, the algorithm buys a link in every bundle on one of the two paths between $L_0$ and~$L_m$ in the bundle cycle. Choose one such path~$P$, let~$Q$ be the other path, and let~$B$ be the set of indices of the inner bundles of~$Q$, so $B=\{1,\ldots,m-1\}$ or $B=\{m+1,\ldots,2m-1\}$.
    \item The adversary visits the bundles $L_i$ with $i\in B$ in $\log_2 m$ rounds, which perform a binary search over~$B$.
    In round $d\in\{1,\ldots,\log_2 m\}$, it visits the bundles $L_i$ with $i\in B$ for which $i$ is an odd multiple of $m/2^d$, in increasing order of~$i$.
    Every index in $B$ is an odd multiple of $m/2^d$ for exactly one~$d$, so every bundle $L_i$ with $i\in B$ is visited exactly once.
    For example, if $m=16$ and $B=\{1,\ldots,15\}$, the four rounds visit $L_8$, then $L_4$ and~$L_{12}$, then $L_2$, $L_6$, $L_{10}$, and~$L_{14}$, and finally all $L_i$ with odd~$i$.
\end{enumerate}
Since serving the first request of the visit of a bundle~$L_i$ requires a bought link in every bundle on one of the two paths between $L_0$ and~$L_i$ (\Cref{lem:bundle-paths}), the binary-search order ensures that almost all visited bundles are not empty before their visit.

\begin{figure}[tbp]
    \centering
    \begin{tikzpicture}[
    paper/diagram, scale=0.75,
    paper/bundle/.style={
        circle, draw=black!45, fill=white, line width=0.6pt,
        inner sep=0pt, minimum size=12.5pt, font=\footnotesize
    },
    paper/nonempty bundle/.style={
        paper/bundle, draw=paperBought, fill=paperBought!14, text=paperBought,
        line width=0.9pt
    },
    paper/solution bundle/.style={
        paper/bundle, draw=col4, fill=col4!14, text=col4, line width=0.9pt
    },
    paper/bundle edge/.style={draw=black!30, line width=0.6pt},
    paper/bought edge/.style={draw=paperBought, line width=1.5pt},
    paper/solution edge/.style={draw=col4, line width=1.5pt},
    paper/visit mark/.style={draw=paperInterval, line width=0.8pt, dash pattern=on 2pt off 1.5pt},
    paper/index/.style={paper/annotation, text=black, font=\scriptsize}
]
    \foreach \stage/\shift/\counts/\visit/\caption in {
        0/0/{1,0,0,0,1,1,1,1}/-1/{(a) After $\{s,t\}$},
        1/4.3/{1,1,2,0,1,1,1,1}/2/{(b) Visit of $L_2$},
        2/8.6/{1,3,2,0,1,1,1,1}/1/{(c) Visit of $L_1$},
        3/12.9/{1,3,2,2,1,1,1,1}/3/{(d) Visit of $L_3$},
        4/17.2/{1,1,1,1,1,0,0,0}/-1/{(e) Solution $F^*$}%
    } {
        \begin{scope}[xshift=\shift cm]
            \foreach \i [count=\k from 0] in \counts {
                \coordinate (B\k) at ({90-45*\k}:1.05);
                \expandafter\xdef\csname cnt\k\endcsname{\i}
            }
            \foreach \i in {0,...,7} {
                \pgfmathtruncatemacro{\j}{mod(\i+1,8)}
                \draw[paper/bundle edge] (B\i) -- (B\j);
                \pgfmathtruncatemacro{\a}{\csname cnt\i\endcsname}
                \pgfmathtruncatemacro{\b}{\csname cnt\j\endcsname}
                \ifnum\a>0 \ifnum\b>0
                    \ifnum\stage=4
                        \draw[paper/solution edge] (B\i) -- (B\j);
                    \else
                        \draw[paper/bought edge] (B\i) -- (B\j);
                    \fi
                \fi\fi
            }
            \foreach \i in {0,...,7} {
                \pgfmathtruncatemacro{\a}{\csname cnt\i\endcsname}
                \ifnum\a=0
                    \node[paper/bundle] at (B\i) {};
                \else
                    \ifnum\stage=4
                        \node[paper/solution bundle] at (B\i) {$\a$};
                    \else
                        \node[paper/nonempty bundle] at (B\i) {$\a$};
                    \fi
                \fi
                \node[paper/index] at ({90-45*\i}:1.7) {$\i$};
            }
            \ifnum\visit>-1
                \draw[paper/visit mark] (B\visit) circle[radius=0.4];
            \fi
            \ifnum\stage=0
                \node[paper/annotation, text=black!55] at (0.36,0) {$B$};
            \fi
            \node[paper/label] at (0,-2.3) {\caption};
        \end{scope}
    }
\end{tikzpicture}
    \caption{A run of the adversary for $m=4$ on the bundle cycle.
    In (a)--(d), numbers count the bought links of each bundle, and the visited bundle is circled.
    (a) The algorithm serves $\{s,t\}$ with links in $L_4,\ldots,L_7$ and~$L_0$, so $B=\{1,2,3\}$.
    (b) Round~1 visits the empty bundle~$L_2$, and the algorithm also buys a link in~$L_1$.
    (c) Round~2 visits $L_1$, which was not empty and ends with three links.
    (d) Round~2 visits the empty bundle~$L_3$, which ends with two links.
    (e) $F^*$ has one link per bundle on the pink path, so $|F^*|=5$, whereas the algorithm buys $12$ links.
    }
    \label{fig:bundle-run}
\end{figure}

\begin{lemma}\label{lem:binary-search}
    At most $\log_2 m$ bundles are empty before their visit.
\end{lemma}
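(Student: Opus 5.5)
Without loss of generality take $B=\{1,\ldots,m-1\}$; the case $B=\{m+1,\ldots,2m-1\}$ is symmetric under reversing the bundle cycle. The idea is to show that in each round $d$, at most one visited bundle can be empty before its visit. Since there are $\log_2 m$ rounds, this gives the bound.

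The key observation is what the first request $\{s,u\}$ of a visit of $L_i$ forces. By \Cref{lem:bundle-paths}, once it is served, $F$ contains a link of every bundle on one of the two bundle-cycle paths between $L_0$ and $L_i$. These paths are $L_0 - L_1 - \cdots - L_i$ and $L_i - L_{i+1} - \cdots - L_{2m-1} - L_0$. Links are never removed, so from then on one of two things holds. Either every bundle $L_j$ with $1\le j\le i$ is nonempty, which we call a \emph{left fill at $i$}, or every bundle $L_j$ with $i\le j\le 2m-1$ is nonempty, which we call a \emph{right fill at $i$}. The second request of the visit only adds links, so this conclusion still holds after the visit.

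Now fix round $d$ and let $i_1<i_2<\cdots<i_k$ be the indices visited in it, all odd multiples of $m/2^d$ in $B$. I claim that at most one of them is empty before its visit. Suppose the visit of $L_{i_a}$ results in a right fill at $i_a$. Then every later index $i_b>i_a$ of this round already lies in a nonempty bundle before its own visit. Suppose instead it results in a left fill at $i_a$. Then every index $i_b<i_a$ in $B$ is nonempty, but those were visited earlier in the round and are irrelevant. So left fills need a different argument, using the earlier rounds.

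The refined claim is this: a bundle $L_{i_a}$ in round $d\ge2$ is empty before its visit only if no earlier visit produced a left fill at an index $\ge i_a$ or a right fill at an index $\le i_a$. Consider two consecutive indices $i_a<i_{a+1}$ of round $d$. Between them lies an index visited in an earlier round, namely the odd multiple of $m/2^{d-1}$ (or a coarser one) that falls in between. If that earlier visit was a left fill, it covers $i_a$. If it was a right fill, it covers $i_{a+1}$. So for each consecutive pair in round $d$, at least one of the two was already nonempty. The two consecutive pairs $(i_{a-1},i_a)$ and $(i_a,i_{a+1})$ have separators $j<i_a<j'$. For $L_{i_a}$ to be empty, the separator $j$ must have been a left fill and $j'$ a right fill. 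Choose the empty indices $i_a<i_c$ of round $d$. Some separator between them, with an earlier-round index between $i_a$ and $i_c$, must be neither a left fill covering $i_a$... this needs care. Cleaner: among the earlier-visited indices, every left fill at $j$ makes all of $1..j$ nonempty, and every right fill at $j'$ makes all of $j'..m-1$ nonempty. So the set of still-empty indices in $B$ is contained in the open interval between the largest left-fill index and the smallest right-fill index. Call this interval $(\lambda,\mu)$, where $\lambda$ is $0$ or the index of an earlier visit, $\mu$ is $m$ or the index of an earlier visit, and all earlier-visited indices are the multiples of $m/2^{d-1}$. The fill bounds from the first request $\{s,t\}$ and from $L_0$, $L_m$ give the endpoints $0$ and $m$.

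Every earlier-visited index is a left or right fill, and $\lambda$ is the largest left fill. So the next multiple of $m/2^{d-1}$ above $\lambda$, if below $m$, is a right fill, and hence $\mu\le\lambda+m/2^{d-1}$. The open interval $(\lambda,\mu)$ therefore contains exactly one odd multiple of $m/2^d$. This shows at most one bundle visited in round $d$ can be empty before its visit. Within the round, further visits only shrink the empty set, which is harmless. For round $d=1$ the only visited index is $m/2$, which is trivially at most one. Summing over the $\log_2 m$ rounds proves the lemma.

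The main obstacle is the bookkeeping in the third paragraph. That means checking that the earlier-round indices in $B$ together with the sentinels $0$ and $m$ are exactly the multiples of $m/2^{d-1}$ in $[0,m]$. For the sentinels, $L_0$ is always nonempty, and $L_m\in P$ is also treated as a right fill. It also means checking the step "the successor of the largest left fill is a right fill."
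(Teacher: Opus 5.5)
Your final argument, the one with $\lambda$ and $\mu$, is correct. It rests on the same key idea as the paper: at most one bundle per round can be empty, because the first request of each earlier-round visit of $L_q$ forces, by \Cref{lem:bundle-paths}, a fill on one side of $q$.

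Where you differ is the bookkeeping. The paper argues pairwise. For any two indices $i<j$ of round $d\ge2$, the index $q=i+m/2^d$ is a multiple of $m/2^{d-1}$ strictly between them, so it lies in $B$ and was visited earlier. One of the two paths between $L_0$ and $L_q$ contains $L_i$ and the other contains $L_j$, so $L_i$ or $L_j$ is nonempty. You instead prove an invariant: at the start of round $d$, every possibly empty index lies in the window $(\lambda,\mu)$, and the maximality of $\lambda$ forces $\mu\le\lambda+m/2^{d-1}$. Your invariant says a little more about the shape of the set of empty bundles, but it needs this extra extremality step. Also, the window meets \emph{at most} one odd multiple of $m/2^d$, not exactly one, since it can be empty (for instance if $\mu\le\lambda$).

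The consecutive-pair argument you abandoned in your fourth paragraph is exactly the paper's proof. It works verbatim for an arbitrary pair $i_a<i_c$ of the same round if you take the separator $i_a+m/2^d$: a left fill there covers $L_{i_a}$, and a right fill covers $L_{i_c}$. So the ``this needs care'' issue disappears. In a write-up, delete the exploratory third and fourth paragraphs and keep just one of the two arguments. Your reduction to $B=\{1,\ldots,m-1\}$ by reflection is fine, since neither argument uses the order of visits within a round.
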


\begin{proof}
    As there are $\log_2 m$ rounds, it suffices to show that every round visits at most one bundle that is empty before its visit.
    The first round visits a single bundle.
    Consider two bundles $L_i$ and $L_j$ with $i<j$ that are visited in the same round $d\ge2$.
    As $i$ and $j$ are odd multiples of $m/2^d$, the index $q=i+m/2^d$ satisfies $i<q<j$, so $q\in B$, and $q$ is a multiple of $m/2^{d-1}$, so $L_q$ is visited in an earlier round.
    After the algorithm serves the first request $\{s,u\}$ of the visit of~$L_q$, where $u\in U_q$, the set $F$ contains a link of every bundle on one of the two paths between $L_0$ and~$L_q$ in the bundle cycle by \Cref{lem:bundle-paths}.
    As $0<i<q<j<2m$, one of these paths contains~$L_i$, and the other one contains~$L_j$.
    Since this visit of~$L_q$ ends before round~$d$ begins and bought links are never removed, $L_i$ or $L_j$ is not empty when round~$d$ begins, and thus not empty before its visit.
\end{proof}

We can finally complete the proof of \Cref{thm:main-unweighted-lower-bound}.

\begin{proof}[Proof of \Cref{thm:main-unweighted-lower-bound}]
    In the first stage, the algorithm buys at least one link in each of the $m+1$ bundles on~$P$.
    By \Cref{lem:bundle-visit,lem:binary-search}, every bundle $L_i$ with $i\in B$, which the adversary visits in the second stage, ends up with at least three bought links, except for at most $\log_2 m$ of them, which end up with at least two bought links.
    As the inner bundles of~$Q$ do not lie on~$P$ and distinct bundles are disjoint, the number of links that the algorithm buys is at least
    \[
        (m+1)+3(m-1)-\log_2 m = 4m-2-\log_2 m \ .
    \]

    Let $F^*$ consist of a link of $L_0$ incident to~$s$, a link of $L_m$ incident to~$t$, and, for every $i\in B$, the link $uw\in L_i$ between the two vertices $u$ and $w$ requested in the visit of~$L_i$.
    Then $F^*$ consists of $m+1$ links, one in every bundle on~$Q$.
    Every request is of the form $\{s,v\}$, where $v$ is incident to a link of $F^*$ in some bundle $L_j\neq L_0$ on~$Q$.
    Thus, $F^*$ contains a link incident to~$s$, a link incident to~$v$, and a link of every bundle on the subpath of $Q$ between $L_0$ and~$L_j$, so $F^*$ serves $\{s,v\}$ by \Cref{lem:bundle-paths}.
    Finally, the ratio
    \[
        \frac{4m-2-\log_2 m}{m+1}=4-\frac{6+\log_2 m}{m+1}
    \]
    tends to~$4$ as $m$ grows, so no deterministic online algorithm is $\rho$-competitive for a constant $\rho<4$.
\end{proof}

\section{Conclusion}

Our results determine the competitive ratio of online WCAP up to a constant factor and show that the best constant competitive ratio of deterministic algorithms for online CycAP is~$4$, even when all requests share a common vertex.
On cycles, unit costs thus admit a constant competitive ratio, whereas costs in~$\{0,1\}$ already force a competitive ratio of~$\Omega(\log n)$.

Our weighted algorithm buys links only to cover a laminar family of minimum cuts, which it constructs online.
Its main ingredient converts every deterministic competitive algorithm for online rooted chain covering into one that loses only a constant factor on the links of an optimal solution that cover the largest requested interval, although this interval is known only at the end.

For online survivable network design, a gap remains between the randomized $\widetilde{O}(r_{\max}\log^3 n)$ bound for general requirements~\cite{GuptaKR12} and the $\Omega(\log n)$ lower bound.
General requirements bring nonuniform demands and minimum cuts that change as links are bought.
It would be interesting to see whether laminar families constructed online and the conversion for rooted chain covering reduce the logarithmic factors or the dependence on~$r_{\max}$.

For online CycAP, the optimal competitive ratio is open for randomized algorithms and in the random-order model.
Our adversary chooses each requested vertex depending on the links bought so far, so our lower bound applies neither to randomized algorithms nor to the random-order model.
More generally, it is open for which graph classes online CAP admits a constant competitive ratio.

\appendix

\section{Proof of the Reduction to Weighted Cycle Augmentation}\label{app:reduction}

\begin{proof}[Proof of \Cref{prop:connectivity-to-cycle}]
    We compose the reduction of Garg and Subramanian~\cite{GargS26} to WCacAP with the unfolding steps of G\'alvez, Grandoni, Jabal Ameli, and Sornat~\cite{GalvezGAS21}, check that each map preserves link costs and, up to the added zero-cost links, the sets of links that serve each request, and run the given algorithm on the resulting cycle instance.

    If $G$ is $(k+1)$-edge-connected, the empty set serves every request.
    Otherwise, the construction in the proof of~\cite[Proposition~2.3]{GargS26}, applied to a cactus representation with $m=O(n)$ edges, maps the instance to an instance of online WCacAP such that link costs are preserved and a set of links serves a request if and only if its image serves the image of the request~\cite[Claim~A.2]{GargS26}.
    A request whose two vertices have the same image is served by the empty set, since no minimum cut of~$G$ separates them, so we discard it.
    Moreover, we delete every link whose image is a loop, since a loop covers no set, and among parallel links, which cover the same sets, we keep a cheapest one.
    This preserves the above equivalence and the cost of an optimal solution, and the cactus instance satisfies our conventions that requests consist of two distinct vertices, that $L$ is feasible, and that no two links are parallel.
    If $m=2$, then all requests ask for the same pair of vertices, and buying a cheapest link joining them upon the first request is optimal.
    Otherwise, the unfolding steps yield a cycle with $m\ge3$ vertices, since they do not change the number of edges, and they create neither loops nor parallel links.

    Consider one unfolding step at a vertex~$x$ that lies on at least two cycles, which turns an old cactus into a new one.
    It joins two of these cycles into one cycle on which $x$ appears as two copies $x_1$ and~$x_2$, attaches all other edges and all links at~$x$ to~$x_1$, and adds a zero-cost link~$x_1x_2$.
    Contracting $x_1$ and~$x_2$ in the new cactus reverses this step.
    Hence, the minimum cuts of the new cactus that do not separate $x_1$ and $x_2$ are exactly the sets obtained from the minimum cuts of the old cactus by replacing $x$ with $x_1$ and $x_2$ whenever $x$ belongs to the cut, and each is covered by the same links as the old cut.
    Every other minimum cut separates $x_1$ and $x_2$ and is covered by the zero-cost link.
    By \Cref{lem:cut-covering}, a set of links serves a request in the old cactus if and only if, together with $x_1x_2$, it serves, in the new cactus, the request obtained by replacing $x$ with~$x_1$.

    Since all these maps are fixed before the requests arrive, we can run a $\rho(m)$-competitive algorithm for online WCycAP on the mapped requests and buy a preimage of the same cost of every link that it buys, where the added links $x_1x_2$ have no preimage.
    By the above equivalences, this serves every request, and optimal solutions of all instances have the same cost.
    As $m\le\lceil an\rceil$ for a suitable constant $a\ge1$ and $\rho$ is nondecreasing, the resulting algorithm is $\rho(\lceil an\rceil)$-competitive.
    All maps can be computed deterministically in polynomial time, since the cactus representation can be constructed deterministically in polynomial time~\cite{Fleischer98} and the remaining maps are explicit~\cite[Theorem~A.1]{GargS26}, \cite[Theorem~4]{GalvezGAS21}.
    Hence, the resulting algorithm is deterministic if the given one is, and it runs in polynomial time if the given one does.
\end{proof}

\section{Proof of the Overlap Graph Characterization}\label{app:overlap}

We use the notation of \Cref{sec:overlap}.

\begin{proof}[Proof of \Cref{lem:overlap-characterization}]
    Suppose first that $X[F]$ contains such a path.
    Let $S$ be a circular interval that separates $s$ and~$t$, say with $s\in S$ and $t\notin S$, and suppose for a contradiction that $F$ does not cover~$S$.
    Then every link of the path has both endpoints in~$S$ or both endpoints in~$V\setminus S$.
    The first link of the path has both endpoints in~$S$, as it is incident to~$s$, and the last one has both endpoints in~$V\setminus S$, as it is incident to~$t$.
    Hence, the path contains two consecutive links, one with both endpoints in~$S$ and one with both endpoints in~$V\setminus S$.
    These two links cross, whereas their endpoints are distinct and do not alternate along~$C$, since $S$ and $V\setminus S$ are circular intervals, a contradiction.
    Thus, $F$ covers every circular interval that separates $s$ and~$t$, so $F$ serves $\{s,t\}$ by \Cref{lem:cut-covering}.

    Conversely, suppose that $F$ serves $\{s,t\}$.
    Since $F$ covers the circular interval $\{s\}$, some link of $F$ is incident to~$s$, and we let $K$ be the component of $F$ at~$s$.
    Suppose for a contradiction that no link of $K$ is incident to~$t$.
    Then $t$ lies in a gap $J$ of~$K$, and $J$ separates $s$ and~$t$, as $s\in V(K)$.
    No link of $K$ has an endpoint in~$J$, and since $K$ is a component of~$F$, no link of $F\setminus K$ crosses a link of~$K$, so by \Cref{lem:gap}, no link of $F\setminus K$ covers~$J$.
    Hence, $F$ does not cover the circular interval~$J$, which contradicts \Cref{lem:cut-covering}.
    Thus, $K$ contains a link incident to~$s$ and a link incident to~$t$, and since $X[K]$ is connected, $X[F]$ contains a path between them.
\end{proof}

\section*{Independent Work}

During discussions at a workshop~\cite{umboh26}, we learned that another group, comprising Hathcock, Subramanian, Umboh, and Zlatin, was also working on the unweighted cycle augmentation problem. We informed them that we had obtained an \(O(1)\)-competitive algorithm for the cycle augmentation problem. Later, we learned that they had also obtained an \(O(1)\)-competitive algorithm~\cite{umboh26}. We are not aware of their exact results or techniques, and our work was conducted entirely independently of theirs.

\section*{Acknowledgments}

We thank Kevin Schewior for fruitful discussions.

\section*{AI Methodology Statement}

We used Claude Opus 5.5 to improve the writing and the structure of the proofs, to check them, for literature research, and to create all TikZ figures. 

\printbibliography

\end{document}